\numberwithin{equation}{section}
\newtheorem{theorem}{Theorem}[section]
\newtheorem{lemma}[theorem]{Lemma}
\newtheorem{proposition}[theorem]{Proposition}
\newtheorem{corollary}[theorem]{Corollary}
\newtheorem{definition}[theorem]{Definition}
\newtheorem{assumption}[theorem]{Assumption}
\newtheorem{remark}[theorem]{Remark}
\newtheorem{example}[theorem]{Example}
\newcommand{\llangle}{\langle\!\!\!\langle}
\newcommand{\rrangle}{\rangle\!\!\!\rangle}
\newcommand{\cA}{\mathcal{A}}
\newcommand{\cC}{\mathcal{C}}
\newcommand{\cD}{\mathcal{D}}
\newcommand{\cN}{\mathcal{N}}
\newcommand{\cX}{\mathcal{X}}
\newcommand{\R}{\mathbb{R}}
\newcommand{\dd}{\mathrm{d}}
\newcommand{\bL}{\mathbf{L}}
\newcommand{\bE}{\mathbb{E}}
\newcommand{\ba}{\mathbf{a}}
\newcommand{\bb}{\mathbf{b}}
\newcommand{\bc}{\mathbf{c}}
\newcommand{\be}{\mathbf{e}}
\newcommand{\bm}{\mathbf{m}}
\newcommand{\bh}{\mathbf{h}}
\newcommand{\bt}{\mathbf{t}}
\newcommand{\bg}{\mathbf{g}}
\newcommand{\bx}{\mathbf{x}}
\newcommand{\by}{\mathbf{y}}
\newcommand{\bz}{\mathbf{z}}
\newcommand{\bp}{\mathbf{p}}
\newcommand{\bP}{\mathbb{P}}
\newcommand{\bq}{\mathbf{q}}
\newcommand{\br}{\mathbf{r}}
\newcommand{\bv}{\mathbf{v}}
\newcommand{\bQ}{\mathbf{Q}}
\newcommand{\bR}{\mathbf{R}}
\newcommand{\bX}{\mathbf{X}}
\newcommand{\bY}{\mathbf{Y}}
\newcommand{\bZ}{\mathbf{Z}}
\newcommand{\sG}{\mathsf{G}}
\newcommand{\brho}{\boldsymbol{\rho}}
\newcommand{\bnu}{\boldsymbol{\nu}}
\newcommand{\bmu}{\boldsymbol{\mu}}
\newcommand{\bpi}{\boldsymbol{\pi}}
\newcommand{\bxi}{\boldsymbol{\xi}}
\newcommand{\barE}{\bar{\mathbf{e}}}
\DeclareMathOperator{\supp}{supp}
\DeclarePairedDelimiterX{\divg}[2]{(}{)}{%
  #1\;\delimsize\|\;#2%
}
\title{A mathematical study of the excess growth rate}  
\author{Steven Campbell}
\address{Department of Statistics, Columbia University}
\email{sc5314@columbia.edu}
\author{Ting-Kam Leonard Wong}
\address{Department of Statistical Sciences, University of Toronto}
\email{tkl.wong@utoronto.ca}
\keywords{Excess growth rate, axiomatic characterization, relative entropy, Jensen gap, logarithmic divergence, functional equation, large deviation}
\begin{document}

\begin{abstract}
The excess growth rate, defined as the gap in Jensen's inequality for the logarithm, is a fundamental functional in portfolio theory. In this paper, we present a mathematical study motivated by information theory. We begin by establishing its properties and showing that it has rich connections with information theoretic concepts such as the Helmholtz free energy, L. Campbell's measure of average code length and large deviations. Our main results consist of three axiomatic characterization theorems of the excess growth rate, in terms of (i) the relative entropy, (ii) the gap in Jensen's inequality, and (iii) the logarithmic divergence that generalizes the Bregman divergence. Furthermore, we study maximization of the excess growth rate and compare it with the growth optimal portfolio. Our results not only provide theoretical justifications of the significance of the excess growth rate, but also establish new connections between information theory and quantitative finance.
\end{abstract}

\maketitle

\section{Introduction} \label{sec:intro}
The excess growth rate is a fundamental logarithmic functional arising in portfolio theory. In this paper, we undertake a mathematical study of this quantity from the perspective of information theory. Specifically, we:
\begin{itemize}
\item[(i)] demonstrate that the excess growth rate---analogous to the relative entropy---has rich connections with information theory and geometry, probability, and statistical physics;
\item[(ii)] formulate and prove three novel axiomatic characterization theorems of the excess growth rate; and
\item[(iii)] study maximization of the (expected) excess growth rate and compare this with Kelly's growth optimal portfolio.
\end{itemize}

We start with the definition of the excess growth rate. Throughout this paper, we denote the (closed) {\it unit simplex} in $\R^n$, $n \geq 1$, by
\[
\Delta_n :=\left\{\bx = (x_1, \ldots, x_n) \in[0,1]^n:\sum_{i = 1}^n x_i=1\right\}.\footnote{We adopt the convention that $\Delta_1 = \Delta_1^\circ:=\{1\}$ and thus use $(0,1]$ in the definition of $\Delta_n^\circ$.}
\]
Its relative interior is the {\it open simplex}
\[
\Delta_n^\circ := \left\{\bx 
\in \Delta_n : x_i > 0 \text{ for } i \in [n]\right\},
\]
where $[n] := \{1, \ldots, n\}$. The {\it support} of $\bx\in[0,\infty)^n$ is defined by
\begin{equation*} %
\supp(\bx):=\{i\in[n]: x_i>0\}.
\end{equation*}
Define the {\it domain}
\begin{equation} \label{eqn:set.Dn}
\mathcal{D}_n := \{(\bpi,\bR)\in\Delta_n\times [0,\infty)^n: \supp(\bpi)\subset \supp(\bR)\}
\end{equation}
as well as the {\it slice}
\begin{equation} \label{eqn:slices}
\begin{split}
\mathcal{D}_n(\bpi \mid\cdot) &:= \{\bR\in[0,\infty)^n: (\bpi,\bR)\in\mathcal{D}_n\}, \quad \bpi \in \Delta_n.
\end{split}
\end{equation}

\begin{definition} [Excess growth rate] \label{def:egr} { \ }
\begin{enumerate}
\item[(i)] For $n \geq 1$ and $(\bpi, \bR) \in \cD_n$, we define the excess growth rate of $\bR$ weighted by $\bpi$ by
\begin{equation} \label{eqn:EGR.financial}
\Gamma(\bpi, \bR) := \log \left( \sum_{i \in \supp(\bpi)} \pi_i R_i\right) - \sum_{i \in \supp(\bpi)} \pi_i \log R_i.
\end{equation}
\item[(ii)] For $(\bpi, \bR) \in \cD_n$, define $\br := \log \bR := (r_i := \log R_i)_{i \in \supp(\bpi)} \in \R^{\supp(\bpi)}$. The excess growth rate is defined in terms of $\br$ by
\begin{equation} \label{eqn:egr.small.r}
\gamma(\bpi, \br) := \Gamma(\bpi, \bR) = \log \left( \sum_{i \in \supp(\bpi)} \pi_i e^{r_i} \right) - \sum_{i \in \supp(\bpi)} \pi_i r_i.
\end{equation}
\end{enumerate}
\end{definition}

Note that we use the same symbol $\Gamma$ for each of the functions $\Gamma = \Gamma_n : \cD_n \rightarrow \R_+ := [0, \infty)$, $n \geq 1$, and similarly for $\gamma$. We write $\Gamma_n$ and $\gamma_n$ (and similarly for other quantities) when it is helpful to emphasize the dimension.

To motivate Definition \ref{def:egr} financially, consider $n$ assets, such as stocks, whose prices are strictly positive.  For a given holding period like a month, let $\bpi = (\pi_1, \ldots, \pi_n) \in \Delta_n$ be the vector of initial portfolio weights, so that $\pi_i \geq 0$ is the initial proportion of wealth invested in asset $i$. By construction, we have $\sum_{i = 1}^n \pi_i = 1$. Suppose $R_i \in (0, \infty)$ is the {\it gross return} of asset $i$ over the holding period. That is, an investment of one dollar yields $R_i$ dollars at the end of the holding period. Then $r_i := \log R_i$ is its {\it log return}. The gross return of the portfolio is the weighted sum $\sum_{i = 1}^n \pi_iR_i$. By Jensen's inequality, the portfolio's log return $\log \left(\sum_{i = 1}^n \pi_i R_i\right)$ is greater than or equal to $\sum_{i = 1}^n \pi_i \log R_i$, the weighted average log return of the assets. The excess growth rate is defined as the {\it gap} in Jensen's inequality. For technical purposes, we define the excess growth rate on the set $\cD_n$, so $R_i$ is allowed to be $0$ whenever $\pi_i = 0$. The case $n = 1$ is both mathematically and financially trivial ($\Gamma_1 \equiv 0$ since $\bpi$ reduces to a point mass) but is included for completeness. 

It is also useful to think of the excess growth rate as a {\it divergence} between the initial prices $\bX$ and the final prices $\bY$ of the assets regarded as elements of $[0, \infty)^n$. This is analogous to the relative entropy which is a divergence between a pair of probability distributions. 

\begin{definition} [Excess growth rate as a divergence] \label{def:egr.divergence}
For $n \geq 1$ and $\bpi \in \Delta_n$, we define $\Gamma_{\bpi}\divg{\cdot}{\cdot} : \cD(\bpi \mid \cdot) \times \cD(\bpi \mid \cdot) \rightarrow \R_+$ by
\begin{equation} \label{eqn:excess.growth.divergence}
\Gamma_{\bpi}\divg{\bY}{\bX} :=  \log \left( \sum_{i \in \supp(\bpi)} \pi_i \frac{Y_i}{X_i} \right) - \sum_{i \in \supp(\bpi)} \pi_i \log \frac{Y_i}{X_i},
\end{equation}
where $\bX = (X_1, \ldots, X_n)$ and $\bY = (Y_1, \ldots, Y_n)$.
\end{definition}

Clearly, we have $\Gamma_{\bpi}\divg{a\bY}{a \bX} = \Gamma_{\bpi}\divg{\bY}{\bX}$ for $a > 0$. This is a special case of {\it num\'{e}raire invariance} which will be formulated in Proposition \ref{prop:numeraire.invariance} below. It is also clear that generally $\Gamma_{\bpi}\divg{\bY}{\bX} \neq \Gamma_{\bpi}\divg{\bX}{\bY}$. Financially, this means that the excess growth rate is not invariant under time reversal, as expected.

To the best of our knowledge, the concept of excess growth in finance was first introduced in \cite{FS82}. Later, it became an essential concept in {\it stochastic portfolio theory} \cite{F02, FK09}. Independently, the authors of \cite{BF92} introduced the same quantity and called it the {\it diversification return}. Our definition follows that of \cite{PW13}. In Section \ref{sec:egr.properties}, we establish mathematical properties of the excess growth rate that are used in the axiomatic characterizations. Further discussion of financial applications and the related
literature is given in Appendix \ref{sec:egr.applications} which may be skipped without loss of continuity. 

\begin{remark}
The concept of excess growth rate can be extended to a general measure-theoretic framework. Specifically, suppose $\Omega$ is a measurable space. If $\pi$ is a probability measure on $\Omega$  and $R$ is a non-negative random variable on $\Omega$ which is $\pi$-almost surely positive, we may define
\begin{equation} \label{eqn:EGR.general}
\Gamma(\pi, R) := \log \left(\int_{\Omega} R \dd \pi\right) - \int_{\Omega} \log R \dd \pi.
\end{equation}
That is, we replace the weighted sums in \eqref{eqn:EGR.financial} by integrals. For concreteness and the financial applications we have in mind, we focus on the discrete setting in this paper. Nevertheless, we believe many results, including some axiomatic characterizations, can be extended to the context of \eqref{eqn:EGR.general}.
\end{remark}

\subsection{Summary of results and organization}
Our first contribution in this paper is to show that the excess growth rate, similar to the relative entropy, is deeply connected to familiar concepts in information theory, statistical physics and probability. In particular, we show that:\footnote{These connections are not used in the axiomatic characterizations.}
\begin{itemize}
\item the excess growth rate can be interpreted in terms of the {\it Helmholtz free energy}, and has a {\it variational representation} (Section \ref{sec:variational});
\item the difference between L.~Campbell's measure of average code length \cite{campbell1965coding} and Shannon's one can be expressed in terms of the excess growth rate (Section \ref{sec:Campbell}); 
\item the excess growth rate emerges in a large deviation principle of the {\it scaled Dirichlet distribution} (Definition \ref{def:scaled.Dirichlet}), analogous to how the relative entropy features in Sanov's theorem (Section \ref{sec:probabilistic}).
\end{itemize}
Several more connections, including correspondences with the {\it R\'{e}nyi divergence} and {\it cross-entropy}, and the {\it logarithmic divergence} \cite{PW16, PW18, W18} in information geometry \cite{A16}, can also be found in the paper. In fact, the excess growth rate can be expressed directly in terms of the relative entropy using algebraic operations on the simplex in compositional data analysis (Lemma \ref{lem:link.rel.entr}). Nevertheless, our body of results goes well beyond this identity.

Our main contribution, presented in Section \ref{sec:characterization}, is a collection of three {\it axiomatic characterization theorems} that uniquely determine the excess growth rate, possibly up to a multiplicative constant, based on natural invariance and algebraic properties. Axiomatic characterizations of various information-theoretic quantities have been studied by many researchers, beginning with Shannon himself~\cite[Theorem 2]{shannon1948mathematical} (another classic is R\'{e}nyi's paper \cite{R61}). To give a flavor of some of the ideas involved, consider the fundamental {\it additive property} of the Shannon entropy:
\[
H( \bp \otimes \bq) = H(\bp) + H(\bq),
\]
where $\bp \otimes \bq$ denotes the product distribution. This property is closely related to the {\it functional equation} $f(xy) = f(x) + f(y)$, $x, y > 0$, whose general solution (assuming only that $f$ is Lebesgue measurable) is $f(x) = c \log x$, $c \in \R$.\footnote{This functional equation is equivalent to {\it Cauchy's equation} \eqref{eqn:Cauchy} which plays an important role in the proof of our second characterization theorem.} A comprehensive mathematical study of axiomatic characterizations of information measures and related quantities, as well as detailed historical discussions, can be found in Leinster's book \cite{leinster2021entropy} which is primarily motivated by diversity measures in biology. In fact, Leinster's book provided the initial impetus for our work.\footnote{We thank Martin Larsson for bringing this reference to our attention.} 

Our three characterization theorems highlight different aspects of the excess growth rate and further reinforce its importance:%
\begin{itemize}
\item Our first characterization (Theorem \ref{thm:characterization.rel.entr}), proved in Section \ref{sec:relative entropy}, shows that the excess growth rate is completely determined by several natural financial properties, including num\'{e}raire invariance and a {\it chain rule} (Proposition \ref{prop:chain.rule.1}). Our proof is based on a characterization of relative entropy, its relation with the excess growth rate (Lemma \ref{lem:link.rel.entr}), as well as a delicate analysis of boundary values.

\item In Section \ref{sec:Jensen.gap}, we characterize axiomatically the {\it gap} in Jensen's inequality, for a general ``generating function'', and show in this setting that the logarithmic case, which leads to the excess growth rate, is characterized by num\'{e}raire invariance (Theorem \ref{thm:characterization.Jensen}). 
\item In Section \ref{sec:log.divergence}, we exploit the fact that the excess growth rate is a member of the family of {\it logarithmic divergences} introduced by Pal and the second author \cite{PW16}. This is analogous to the fact that the squared Euclidean distance, as well as the relative entropy on the simplex, are Bregman divergences. We show in Theorem \ref{thm:egr.L.divergence} that the excess growth rate is the unique logarithmic divergence which is {\it perturbation invariant}; this is closely related to num\'{e}raire invariance. A by-product is a new characterization of the (negative) {\it cross-entropy} within the family of exponentially concave functions on the open simplex.
\end{itemize}

The significance of the excess growth rate in portfolio theory leads naturally to maximization of this quantity. In Section \ref{sec:optimization} we study two versions of this problem, first in a deterministic setting ($\max_{\bpi \in \Delta_n} \gamma(\bpi,\br)$ where $\br$ is fixed), then in a probabilistic setting where we maximize the expected excess growth rate $\bE [ \gamma(\bpi,\br)] $ assuming $\br$ is a random vector. In the deterministic case, we derive an explicit characterization of the solution and, via a variational representation, link it with the perspective function in convex analysis. In the probabilistic case, we derive a first-order condition for the optimizer and compare this problem with the classical {\it growth optimal portfolio} \cite[Chapter 16]{CT06}. 

As discussed in Section \ref{sec:IT.literature} below, information theory and quantitative finance share deep connections. In this paper, we show that the excess growth rate fosters new synergies between the two fields. Our results suggest many directions for future research, some of which are discussed in Section \ref{sec:conclusion}.

\subsection{Information theory and quantitative finance} \label{sec:IT.literature}
Interactions between information theory and quantitative finance began soon after Shannon's inaugural paper \cite{shannon1948mathematical}. In \cite{kelly1956new}, Kelly showed that in repeated investment or gambling situations, the value of side information can be quantified by mutual information, a fundamental information-theoretic quantity that arises in the definition of channel capacity. Kelly's work (and that of Breiman \cite{breiman1961optimal}, among others) led to the concept of growth optimal portfolio, also called the num\'{e}raire portfolio, which has profound implications in finance \cite{maclean2011kelly}. Intuitively, optimal investment and information theory are fundamentally related because successful investment and efficient data transmission/extraction both hinge on prediction (asset returns or source alphabets). Among the many subsequent works, we highlight \cite{algoet1988asymptotic} which investigates the asymptotic equipartition property in the context of growth optimal investment, and the universal portfolio \cite{CT06} which is the financial analogue of universal coding. In \cite{OJ23}, it was shown that regret guarantees of universal portfolio algorithms imply time-uniform concentration inequalities for bounded random variables. For further details and other classical connections, we refer the reader to Chapters 6 and 16 of \cite{CT06}. Recently, the financial perspective on information theory has been fruitfully extended to optimal hypothesis testing using $e$-values, see \cite{larsson2025numeraire, ramdas2024hypothesis}.

\section{Excess growth rate: properties and interconnections} \label{sec:2}
In Section \ref{sec:egr.properties}, we establish some mathematical properties of the excess growth rate. The underlying financial intuition will be carefully explained but is not required to follow the mathematical development. In Section \ref{sec:relative entropy}, we show that these properties (as well as Lebesgue measurability) uniquely characterize the excess growth rate up to a multiplicative constant. Then, in Sections \ref{sec:variational}--\ref{sec:probabilistic}, we show that the excess growth rate arises naturally not only in finance but also in statistical physics, information theory, and probability theory.

\subsection{Mathematical properties}  \label{sec:egr.properties}
We begin with two properties that are immediate from the definition. Given $\bx \in \R^n$ and a permutation $\sigma$ of $[n]$, we define 
\[
\bx\sigma := (x_{\sigma(1)}, \ldots, x_{\sigma(n)}) \in \R^n.
\]

\begin{proposition}[Permutation invariance] \label{prop:permutation.invariance}
For any $(\bpi, \bR) \in \cD_n$ and permutation $\sigma$ of $[n]$, we have $\Gamma(\bpi\sigma, \bR\sigma) = \Gamma
(\bpi, \bR)$.
\end{proposition}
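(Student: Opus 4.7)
The plan is essentially a one-line reindexing argument; the proposition is advertised by the authors as ``immediate from the definition,'' so I only need to make the bookkeeping explicit. Both terms appearing in \eqref{eqn:EGR.financial} are sums indexed by $\supp(\bpi)$ of expressions that depend on $(\bpi,\bR)$ only through the unordered collection of pairs $\{(\pi_i, R_i)\}_{i \in [n]}$, and such a collection is unaffected by simultaneous relabeling by a permutation $\sigma$.

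First, I would check how the support transforms: since $(\bpi\sigma)_i = \pi_{\sigma(i)}$, we have $i \in \supp(\bpi\sigma)$ if and only if $\sigma(i) \in \supp(\bpi)$, so $\supp(\bpi\sigma) = \sigma^{-1}(\supp(\bpi))$. Consequently $\sigma$ restricts to a bijection between $\supp(\bpi\sigma)$ and $\supp(\bpi)$. Moreover, since $(\bpi,\bR) \in \cD_n$ gives $\supp(\bpi) \subset \supp(\bR)$, the same inclusion holds after permuting both, so $(\bpi\sigma,\bR\sigma) \in \cD_n$ and the right-hand side of the definition makes sense.

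Next, applying the substitution $j = \sigma(i)$ in the two sums that define $\Gamma(\bpi\sigma,\bR\sigma)$,
\[
\sum_{i \in \supp(\bpi\sigma)} (\bpi\sigma)_i (\bR\sigma)_i = \sum_{i \in \sigma^{-1}(\supp(\bpi))} \pi_{\sigma(i)} R_{\sigma(i)} = \sum_{j \in \supp(\bpi)} \pi_j R_j,
\]
and similarly
\[
\sum_{i \in \supp(\bpi\sigma)} (\bpi\sigma)_i \log (\bR\sigma)_i = \sum_{j \in \supp(\bpi)} \pi_j \log R_j.
\]
Substituting these equalities into \eqref{eqn:EGR.financial} yields $\Gamma(\bpi\sigma,\bR\sigma) = \Gamma(\bpi,\bR)$.

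There is no obstacle; the only point worth recording is the transformation of the support, and I would mention that this property is precisely the kind of symmetry expected of any information-theoretic functional and will be used (together with the other basic properties in this subsection) as an axiom in the first characterization theorem.
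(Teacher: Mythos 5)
Your argument is correct and is exactly the reindexing that the paper leaves implicit, since Proposition \ref{prop:permutation.invariance} is stated there without proof as ``immediate from the definition.'' The only substantive point, the identity $\supp(\bpi\sigma)=\sigma^{-1}(\supp(\bpi))$, is handled correctly, so nothing further is needed.
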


\begin{proposition}[Dependence on the support] \label{prop:support}
For $\bpi \in \Delta_n$ and $\bR, \bR' \in \cD_n(\bpi \mid \cdot)$, we have $\Gamma(\bpi, \bR) = \Gamma(\bpi, \bR')$ if $R_i = R_i'$ for $i \in \supp(\bpi)$. In particular, $\Gamma(\bpi, \bR) = 0$ if $\bR$ is constant on $\supp(\bpi)$.
\end{proposition}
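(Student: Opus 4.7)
The proof is essentially immediate from Definition \ref{def:egr}, so my plan is to simply read off both parts of the statement directly from the defining formula \eqref{eqn:EGR.financial}, whose right-hand side only involves the values $R_i$ with $i \in \supp(\bpi)$.

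First I would observe that the two summands on the right-hand side of \eqref{eqn:EGR.financial}, namely $\log \bigl( \sum_{i \in \supp(\bpi)} \pi_i R_i \bigr)$ and $\sum_{i \in \supp(\bpi)} \pi_i \log R_i$, are both sums indexed exclusively by $\supp(\bpi)$. Consequently, if $R_i = R_i'$ for every $i \in \supp(\bpi)$, then each of these expressions takes an identical value when $\bR$ is replaced by $\bR'$, yielding $\Gamma(\bpi, \bR) = \Gamma(\bpi, \bR')$. The values $R_i$ at indices $i \notin \supp(\bpi)$ are irrelevant for the definition; they need only lie in $[0, \infty)$, which is automatic from the assumption $(\bpi, \bR), (\bpi, \bR') \in \cD_n$.

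For the second assertion, suppose $R_i = c$ for some constant $c > 0$ and all $i \in \supp(\bpi)$. Since $\pi_i = 0$ for $i \notin \supp(\bpi)$, we have $\sum_{i \in \supp(\bpi)} \pi_i = \sum_{i=1}^n \pi_i = 1$, whence $\sum_{i \in \supp(\bpi)} \pi_i R_i = c$ and $\sum_{i \in \supp(\bpi)} \pi_i \log R_i = \log c$. Substituting into \eqref{eqn:EGR.financial} gives $\Gamma(\bpi, \bR) = \log c - \log c = 0$. There is no real technical obstacle in this proof; the proposition is a direct consequence of two facts, namely that the defining sums range only over $\supp(\bpi)$ and that $\bpi$ is normalized to total mass one on that support.
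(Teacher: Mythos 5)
Your proof is correct and matches the paper's treatment: the paper states this proposition without proof, regarding it as immediate from Definition \ref{def:egr} for exactly the two reasons you identify (the defining sums range only over $\supp(\bpi)$, and $\bpi$ has total mass one on its support, with $c>0$ guaranteed by the condition $\supp(\bpi)\subset\supp(\bR)$ in the definition of $\cD_n$).
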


Together, Propositions \ref{prop:permutation.invariance} and \ref{prop:support} state that the excess growth rate is invariant under relabeling the assets (and their returns), and depends only on the assets that are held in the portfolio. 

\begin{proposition}[Num\'{e}raire invariance] \label{prop:numeraire.invariance}
For $(\bpi, \bR) \in \cD_n$ and $a > 0$, we have 
$\Gamma(\bpi, a\bR) = \Gamma
(\bpi, \bR)$. Equivalently, $\Gamma_{\bpi}\divg{b\bY}{a \bX} = \Gamma_{\bpi}\divg{\bY}{\bX}$ for any $\bX, \bY \in \cD_n(\bpi \mid \cdot)$ and $a, b > 0$.
\end{proposition}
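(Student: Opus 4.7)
The plan is to verify the invariance by direct substitution into the definition \eqref{eqn:EGR.financial}, exploiting the single algebraic fact that $\sum_{i \in \supp(\bpi)} \pi_i = 1$ (which follows because $\bpi \in \Delta_n$ and $\pi_i = 0$ off of $\supp(\bpi)$). First I would observe that if $(\bpi, \bR) \in \cD_n$ and $a > 0$, then $(\bpi, a\bR) \in \cD_n$ as well, since $\supp(a\bR) = \supp(\bR) \supseteq \supp(\bpi)$, so $\Gamma(\bpi, a\bR)$ is well-defined.

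The computation then has two ingredients. For the first term in \eqref{eqn:EGR.financial}, we factor out $a$:
\[
\log \Bigl( \sum_{i \in \supp(\bpi)} \pi_i (aR_i) \Bigr) = \log a + \log \Bigl( \sum_{i \in \supp(\bpi)} \pi_i R_i \Bigr).
\]
For the second term, we split the logarithm and use $\sum_{i \in \supp(\bpi)} \pi_i = 1$:
\[
\sum_{i \in \supp(\bpi)} \pi_i \log(a R_i) = \log a \cdot \sum_{i \in \supp(\bpi)} \pi_i + \sum_{i \in \supp(\bpi)} \pi_i \log R_i = \log a + \sum_{i \in \supp(\bpi)} \pi_i \log R_i.
\]
Subtracting, the $\log a$ contributions cancel and we obtain $\Gamma(\bpi, a\bR) = \Gamma(\bpi, \bR)$.

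For the equivalent divergence form, I would note the identity $\Gamma_{\bpi}\divg{\bY}{\bX} = \Gamma(\bpi, \bR)$ with $\bR := (Y_i/X_i)_{i=1}^n$ (viewed componentwise, with $0/0$ entries irrelevant since they lie outside $\supp(\bpi)$), which is immediate from comparing \eqref{eqn:excess.growth.divergence} with \eqref{eqn:EGR.financial}. Then $\Gamma_{\bpi}\divg{b\bY}{a\bX} = \Gamma(\bpi, (b/a)\bR) = \Gamma(\bpi, \bR) = \Gamma_{\bpi}\divg{\bY}{\bX}$ by applying the first part with constant $b/a > 0$.

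There is essentially no obstacle here; the only mild subtlety is bookkeeping with the restriction of sums to $\supp(\bpi)$, which is what allows $\sum \pi_i = 1$ to be used cleanly and avoids any $\log 0$ issues when some $R_i$ vanish outside the support.
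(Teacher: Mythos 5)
Your proof is correct and is essentially identical to the paper's: both factor out $\log a$ from the two terms of \eqref{eqn:EGR.financial} and cancel using $\sum_{i \in \supp(\bpi)} \pi_i = 1$. Your additional remarks on well-definedness of $(\bpi, a\bR) \in \cD_n$ and the reduction of the divergence form to the first part via $\bR = (Y_i/X_i)_i$ with constant $b/a$ are fine details the paper leaves implicit.
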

\begin{proof}
From \eqref{eqn:EGR.financial} and the additive property of the logarithm, we have
\begin{align*}
\Gamma(\bpi, a\bR) &= \log \left( \sum_{i \in \supp(\bpi)} \pi_i (aR_i) \right) - \sum_{i \in \supp(\bpi)} \pi_i \log (aR_i) \\
  &= \log \left( \sum_{i \in \supp(\bpi)} \pi_i R_i \right) + \log a - \sum_{i \in \supp(\bpi)} \pi_i \log R_i - \log a\\
  &= \Gamma(\bpi, \bR).\qedhere
\end{align*}
\end{proof}

Here is the financial interpretation. Suppose that we express the gross return $R_i$ of asset $i \in \supp(\bpi)$ as $Y_i/X_i$, where $X_i$ and $Y_i$ are, respectively, the initial and final prices. For concreteness, let $X_i$ and $Y_i$ be the {\it dollar values}. In financial terms, we say that the {\it num\'{e}raire} is cash (with respect to a fixed currency). Suppose that we measure prices in terms of another positive quantity (e.g.~the value of the S\&P500 Index or another currency) which moves from $Q$ to $Q'$. That is, we define the {\it relative prices} of asset $i$ by $\tilde{X}_i = X_i/Q$ and $\tilde{Y}_i = Y_i/Q'$; these are the prices under the new num\'{e}raire. Then, the {\it relative gross return} is given by
\begin{equation} \label{eqn:relative.return}
\tilde{R}_i := \frac{\tilde{Y}_i}{\tilde{X}_i} = \frac{Y_i/Q'}{X_i/Q} = \frac{Q}{Q'} \frac{Y_i}{X_i} = \frac{Q}{Q'} R_i.
\end{equation}
Thus, we have $\tilde{\bR} := (\tilde{R}_1, \ldots, \tilde{R}_n) = a\bR$, where $a = \frac{Q}{Q'} > 0$. Thus, the excess growth rate is independent of the choice of the num\'{e}raire. %

By num\'{e}raire invariance, for each $n$, the function $\Gamma : \cD_n \rightarrow \R_+$ is determined by its restriction to the set
\begin{equation} \label{eqn:An}
\cA_n := \cD_n \cap (\Delta_n \times \Delta_n).
\end{equation}
We define the slice $\cA_n(\bpi \mid \cdot)$ %
analogously (see \eqref{eqn:slices}). Specifically, for $(\bpi, \bR) \in \cD_n$, we have
\begin{equation} \label{eqn:egr.renormalize}
\Gamma(\bpi, \bR) = \Gamma(\bpi, \cC_{\bpi}[\bR]),
\end{equation}
where $\cC_{\bpi}: \cD_n(\bpi \mid \cdot) \rightarrow \cA_n(\bpi \mid \cdot)$ is the {\it closure} with respect to (the support of) $\bpi$, defined by
\begin{equation} \label{eqn:closure}
\left( \cC_{\bpi}[\bx] \right)_i := 
\left\{\begin{array}{ll}
        x_i / \sum_{j \in \supp(\bpi)} x_j, & \text{if } i \in \supp(\bpi),\\
        0, & \text{otherwise.}\\
        \end{array}\right.
\end{equation}
If the relevant support is $[n]$ (so that $\bx \in (0, \infty)^n$), we simply write $\cC[\bx]$ which is an element of $\Delta_n^{\circ}$. We introduce several related algebraic operations for later use:
\begin{itemize}
\item {\it Hadamard (componentwise) product}:
\begin{equation} \label{eqn:Hadamard}
(\bx \by)_i := x_iy_i, \quad \bx, \by \in \R^n.
\end{equation}
\item {\it Componentwise inverse}:
\[
(\bx^{-1})_i := \frac{1}{x_i}, \quad \bx \in (0, \infty)^n.
\]
\item {\it Perturbation operation} with respect to $\bpi \in \Delta_n$:
\[
\bx \oplus_{\bpi} \by := \cC_{\bpi}[\bx \by], \quad \bx, \by \in \cA_n(\bpi \mid \cdot).
\]
We write $\bx \oplus \by$ when the support is $[n]$. 
\item {\it Powering operation:}
\[
\alpha \otimes \bx := \cC[(x_i^{\alpha})_{1 \leq i \leq n}], \quad (\bx, \alpha) \in \Delta_n^{\circ} \times \R.
\]
\end{itemize}

It is well known in {\it compositional data analysis} \cite{A94, egozcue2003isometric, EA21} that the open simplex $\Delta_n^{\circ}$ becomes an $(n - 1)$-dimensional real vector space if we regard $\oplus$ as vector addition and $\otimes$ as scalar multiplication. The additive identity (zero element) is the {\it barycenter} $\barE = \barE_n := (1/n, \ldots, 1/n)$, and vector subtraction is given by
\[
\bx \ominus \by := \cC[\bx \by^{-1}], \quad \bx, \by \in \Delta_n^{\circ}.
\]
For a general $\bpi \in \Delta_n$ (whose support may be a strict subset of $[n]$), and for $\bx, \by \in \cA_n(\bpi \mid \cdot)$, we define the generalized difference $\bx \ominus_{\bpi} \by \in \cA_n(\bpi \mid \cdot)$ by
\[
(\bx \ominus_{\bpi} \by)_i := 
\left\{\begin{array}{ll}
        (x_i/y_i) / \sum_{j \in \supp(\bpi)} (x_j/y_j), & \text{if } i \in \supp(\bpi),\\
        0, & \text{otherwise.}\\
        \end{array}\right.
\]

The {\it chain rules}, to be stated next, tell us how to decompose the excess growth rate of a {\it composite portfolio}, i.e., a portfolio of portfolios. %

Let $n, k_1, \ldots, k_n \geq 1$ be integers and let
\[
\bpi \in \Delta_n, \quad \bp^1 \in \Delta_{k_1}, \ldots, \bp^n \in \Delta_{k_n}.
\]
Write $\bp^i = (p_1^i, \ldots, p_{k_i}^i)$ and $\bp = (\bp^1, \ldots, \bp^n)$. The {\it composite distribution} $\bpi \circ \bp$ is
\begin{equation} \label{eqn:composite.distribution}
\begin{split}
\bpi \circ \bp &:= (\pi_1 p_1^1, \ldots, \pi_1 p_{k_1}^1, \ldots,  \pi_n p_1^n, \ldots, \pi_n p_{k_n}^n)\\
  &= (\pi_1 \bp^1, \ldots, \pi_n \bp^n) \in \Delta_{k_1 + \cdots + k_n}.
\end{split}
\end{equation}
We index its components by $(\bpi, \bp)_{i,j} = \pi_i p_j^i$, where $(i, j) \in [n] \times [k_i]$. Financially, we may think of the $i$-th conditional distribution $\bp^i$ as a portfolio consisting of $k_i$ assets, and the composite portfolio holds the $n$ portfolios as individual assets. We allow some of the assets to overlap. For example, the capital corresponding to the weights $p_1^1$ and $p_1^2$ can be invested in the same asset. This can be enforced by letting the gross returns $R_1^1$ and $R_1^2$ be equal. Probabilistically, ${\bf p}$ represents a collection of conditional distributions.

The following version of the chain rule was formulated in \cite{PW13}. Throughout the paper, we denote the Euclidean inner product by $\langle \bx, \by \rangle$.

\begin{proposition}[Chain rule (first version)] \label{prop:chain.rule.1}
Let $n, k_1, \ldots, k_n \geq 1$,
\[
\bpi \in \Delta_n, \quad \bp = (\bp^1, \ldots, \bp^n) \in \Delta_{k_1} \times \cdots \times \Delta_{k_n},
\]
and let
\[
\bR = (\bR^1, \ldots, \bR^n) \in \prod_{i = 1}^n \cD_{k_i}(\bp^i \mid \cdot).
\]
Denote
\[
\llangle \bp, \bR\rrangle := (\langle \bp^1, \bR^1 \rangle, \ldots, \langle \bp^n, \bR^n\rangle) \in (0, \infty)^n.
\]
Then
\begin{equation} \label{eqn:egr.chain.rule.1}
\begin{split}
&\Gamma(\bpi \circ \bp, \bR) 
= \Gamma(\bpi, \llangle \bp, \bR \rrangle) + \sum_{i \in \supp(\bpi)} \pi_i \Gamma (\bp^i, \bR^i).
\end{split}
\end{equation}
\end{proposition}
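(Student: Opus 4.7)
The plan is to prove the identity by direct computation, exploiting a telescoping cancellation of an intermediate quantity that appears in both terms on the right-hand side. No deep machinery is needed; the main care is bookkeeping of supports in the composite distribution $\bpi\circ\bp$.

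First I would unpack the left-hand side using Definition~\ref{def:egr}. Writing $s_i := \langle \bp^i,\bR^i\rangle$ for $i\in\supp(\bpi)$, and noting that
\[
\supp(\bpi\circ\bp)=\{(i,j):i\in\supp(\bpi),\ j\in\supp(\bp^i)\},
\]
the double sum separates cleanly, giving
\begin{equation*}
\Gamma(\bpi\circ\bp,\bR)=\log\!\Bigl(\sum_{i\in\supp(\bpi)}\pi_i s_i\Bigr)-\sum_{i\in\supp(\bpi)}\pi_i\sum_{j\in\supp(\bp^i)}p^i_j\log R^i_j.
\end{equation*}
Next, I would expand the right-hand side. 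By definition, $\Gamma(\bpi,\llangle\bp,\bR\rrangle)=\log(\sum_{i\in\supp(\bpi)}\pi_i s_i)-\sum_{i\in\supp(\bpi)}\pi_i\log s_i$, which is well defined because $\supp(\llangle\bp,\bR\rrangle)\supseteq\supp(\bpi)$ (each $s_i>0$ for $i\in\supp(\bpi)$, since $(\bp^i,\bR^i)\in\cD_{k_i}$ forces $R^i_j>0$ for $j\in\supp(\bp^i)$). Similarly, for each $i\in\supp(\bpi)$, $\Gamma(\bp^i,\bR^i)=\log s_i-\sum_{j\in\supp(\bp^i)}p^i_j\log R^i_j$.

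Adding the two quantities on the right, the terms $\pm\sum_{i\in\supp(\bpi)}\pi_i\log s_i$ cancel telescopically, leaving exactly the expression obtained for $\Gamma(\bpi\circ\bp,\bR)$ above. This establishes the identity.

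The only potential obstacle is verifying that the membership hypotheses are consistent, i.e., that $(\bpi\circ\bp,\bR)$ indeed lies in $\cD_{k_1+\cdots+k_n}$ and that $(\bpi,\llangle\bp,\bR\rrangle)\in\cD_n$. Both follow from the given support inclusion $\supp(\bp^i)\subseteq\supp(\bR^i)$ for each $i$, which ensures that all logarithms appearing in the computation are finite. Once this is noted, the argument reduces to a one-line algebraic cancellation.
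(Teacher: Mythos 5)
Your proof is correct and follows essentially the same route as the paper's: a direct expansion of the definition with an add-and-subtract of $\sum_{i\in\supp(\bpi)}\pi_i\log\langle\bp^i,\bR^i\rangle$, which is exactly your telescoping cancellation read in the other direction. The extra attention you give to supports and well-definedness is a welcome addition that the paper leaves implicit.
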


Consider a portfolio of $k_1 + \cdots + k_n$ assets with weights $\bpi \circ \bp$. Its excess growth rate is equal to $\Gamma(\bpi \circ \bp, \bR)$. The chain rule states that we may decompose it as a sum of two parts. First, consider the {\it portfolio of portfolios}, where ``asset $i$'' is the $i$-th portfolio with gross return $\langle \bp^i, \bR^i \rangle$. This gives the excess growth rate $\Gamma(\bpi, \llangle \bp, \bR \rrangle)$. Note that the gross returns $R_j^i$ of the individual ``atomic'' assets enter indirectly through $\llangle \bp, \bR \rrangle$ (in \cite{leinster2021entropy} this property is called {\it modularity}). The second term is the weighted sum of the excess growth rates $\Gamma (\bp^i, \bR^i)$ of the individual portfolios. The reader should note that this property is reminiscent of the chain rule of relative entropy (see \eqref{eqn:relative.entropy.chain.rule}). The precise algebraic relationships between the relative entropy and excess growth rate are given in Lemma \ref{lem:link.rel.entr}.

\begin{proof}[Proof of Proposition \ref{prop:chain.rule.1}]
By definition of $\Gamma(\bpi \circ \bp, \bR)$, we have
\begin{align*}
&\Gamma(\bpi \circ \bp, \bR) \\
&= \log \left( \sum_{i \in \supp(\bpi)} \sum_{j \in \supp(\bp^i)} \pi_i p_j^i R_j^i\right) - \sum_{i \in \supp(\bpi)} \sum_{j \in \supp(\bp^i)} \pi_i p_j^i \log R_j^i\\
&=\log \left( \sum_{i \in \supp(\bpi)} \pi_i \langle \bp^i, \bR^i \rangle \right) - \sum_{i \in \supp(\bpi)} \pi_i \left( \sum_{j \in \supp(\bp^i)} p_j^i \log R_j^i \right) \\
&= \log \left( \sum_{i \in \supp(\bpi)} \pi_i \langle \bp^i, \bR^i \rangle \right) - \sum_{i \in \supp(\bpi)} \pi_i \log \langle \bp^i, \bR^i \rangle \\
&\quad + \sum_{i \in \supp(\bpi)} \pi_i \left( \log \langle \bp^i, \bR^i \rangle - \sum_{j \in \supp(\bp^i)} p_j^i \log R_j^i \right) \\
&= \Gamma(\bpi, \llangle \bp, \bR \rrangle) + \sum_{i \in \supp(\bpi)} \pi_i \Gamma(\bp^i, \bR^i).\qedhere
\end{align*}
\end{proof}

Given the support properties in Proposition \ref{prop:support}, num\'{e}raire invariance and the first chain rule are equivalent to a slightly more general chain rule. To motivate it, suppose that the $n$ portfolios hold assets in $n$ countries with {\it different currencies} (num\'{e}raires). To compute the excess growth rate of the composite portfolio, we need to express all returns in terms of a common currency. In the statement below, we think of $a_i \geq 0$ as the conversion factor for the assets in the $i$-th portfolio; when $a_i > 0$, it plays the role of the factor $Q/Q'$ in \eqref{eqn:relative.return}. 

\begin{proposition}[Chain rule (general)] \label{prop:chain.rule.2}
Let $n, k_1, \ldots, k_n \geq 1$,
\[
\bpi \in \Delta_n, \quad \bp = (\bp^1, \ldots, \bp^n) \in \Delta_{k_1} \times \cdots \times \Delta_{k_n},
\]
\[
\bR = (\bR^1, \ldots, \bR^n) \in \prod_{i = 1}^n \cD_{k_i}(\bp^i \mid \cdot) \quad \text{and} \quad \ba = (a_1, \ldots, a_n) \in \cD_n(\bpi \mid \cdot).
\]
Define
\[
\ba \circ \bR := (a_1 \bR^1, \ldots, a_n \bR^n) \in \cD_{k_1 + \cdots + k_n}(\bpi \circ \bp \mid \cdot).
\]
Then, we have
\begin{equation} \label{eqn:egr.chain.rule.2}
\Gamma(\bpi \circ \bp, \ba \circ \bR) = \Gamma(\bpi, \ba  \llangle \bp, \bR \rrangle) + \sum_{i \in \supp(\bpi)} \pi_i \Gamma (\bp^i, \bR^i).
\end{equation}
Here $\ba  \llangle \bp, \bR \rrangle$ is the componentwise product as defined in \eqref{eqn:Hadamard}.
\end{proposition}
\begin{proof}
While \eqref{eqn:egr.chain.rule.2} can be proved directly from the definition of $\Gamma$, we show that it is a consequence of the support properties in Proposition \ref{prop:support}, num\'{e}raire invariance (Proposition \ref{prop:numeraire.invariance}) and the first chain rule (Proposition \ref{prop:chain.rule.1}). That \eqref{eqn:egr.chain.rule.2} and the support properties imply num\'{e}raire invariance and the first chain rule  is shown in Remark \ref{rmk:chain.rule}.

Let $(\bpi,\ba)\in\cD_n$ and let
$(\bp^i,\bR^i)\in\cD_{k_i}$, $i=1,\ldots,n$. Define
$\widetilde{\ba}=(\widetilde{a}_1,\dots,\widetilde{a}_n)$ by $\widetilde a_i=a_i$ on $\supp(\bpi)$ and $\widetilde a_i=1$ otherwise. Then $\widetilde a_i>0$ for all $i$. Moreover,
$\ba\circ\bR$ and $\widetilde{\ba}\circ\bR$ agree on
$\supp(\bpi\circ\bp)$, while
$\widetilde{\ba}\llangle\bp,\bR\rrangle$ and
$\ba\llangle\bp,\bR\rrangle$ agree on $\supp(\bpi)$. By the
support condition,
\begin{equation}\label{eqn:tilde.identity.1}
\Gamma(\bpi\circ\bp,\ba\circ\bR)
=
\Gamma(\bpi\circ\bp,\widetilde{\ba}\circ\bR),
\qquad
\Gamma(\bpi,\widetilde{\ba}\llangle\bp,\bR\rrangle)
=
\Gamma(\bpi,\ba\llangle\bp,\bR\rrangle).
\end{equation}
Set $\widetilde{\bR}^i=\widetilde a_i\bR^i$. Since $\widetilde a_i>0$,
the first chain rule applies to
$\widetilde{\bR}=(\widetilde{\bR}^1,\ldots,\widetilde{\bR}^n)$ and gives
\[
\Gamma(\bpi \circ \bp, \widetilde{\bR}) 
= \Gamma(\bpi, \llangle \bp, \widetilde{\bR} \rrangle) + \sum_{i \in \supp(\bpi)} \pi_i \Gamma(\bp^i, \widetilde{\bR}^i).
\]
Equivalently,
\begin{equation}\label{eqn:tilde.identity.2}
\Gamma(\bpi\circ\bp,\widetilde{\ba}\circ\bR)
=
\Gamma(\bpi,\widetilde{\ba}\llangle\bp,\bR\rrangle)
+
\sum_{i \in \supp(\bpi)}\pi_i\Gamma(\bp^i,\widetilde a_i\bR^i).
\end{equation}
By num\'eraire invariance,
$\Gamma(\bp^i,\widetilde a_i\bR^i)=\Gamma(\bp^i,\bR^i)$ for each $i$. Combining this observation with \eqref{eqn:tilde.identity.1} and \eqref{eqn:tilde.identity.2} proves \eqref{eqn:egr.chain.rule.2}.    
\end{proof}

\begin{remark}[Converse] \label{rmk:chain.rule} 
Letting $a_1 = \cdots = a_n = 1$ in \eqref{eqn:egr.chain.rule.2} recovers the first chain rule. In addition, given the property in Proposition \ref{prop:support}, the general chain rule also implies num\'{e}raire invariance. To see this, apply the chain rule with a one-dimensional outer block: take $n=1$, $k_1=m\geq1$, $\bpi=(1)\in\Delta_1$, $\ba=(\alpha)$ for $\alpha>0$, $\bp=(\bp^1)$ for $\bp^1\in\Delta_m$, and $\bR=(\bR^1)$ for $\bR^1\in \cD_m(\bp^1 \mid \cdot)$. Then the chain rule reads,
\begin{align*}
\Gamma_{m}( \bp^1, \alpha\bR^1) &=\Gamma_{m}( \bpi\circ \bp, \ba\circ \bR) \\
&= \Gamma_{1}( (1), \alpha \langle \bp^1,\bR^1\rangle )+ 1 \cdot \Gamma_{m}( \bp^1, \bR^1) \\
&=\Gamma_{m}( \bp^1, \bR^1),
\end{align*}
since Proposition \ref{prop:support} ensures $\Gamma_{1}( (1), \alpha \langle \bp^1,\bR^1\rangle )=0$.
\end{remark}

Further insight about the chain rule can be obtained by restating it in probabilistic language.\footnote{This discussion, including Example \ref{eg:chain.rule} below, is inspired by Ruodu Wang.} Consider the functional
\begin{equation} \label{eqn:Gamma.probabilistic}
\tilde{\Gamma}(R) := \log \bE[R] - \bE [ \log R],
\end{equation}
which is defined for any random variable $R$ which is almost surely positive. From \eqref{eqn:EGR.general}, this is equal to $\Gamma(\bpi, \bR)$ if $\bP(R = R_i) = \pi_i$. Given a triple $(\bpi, \bp, a)$ in the context of Proposition \ref{prop:chain.rule.2}, consider random variables $X$, $a$ and $R$ whose joint distribution is specified by
\begin{equation*}
\begin{split}
&\bP(X = i) = \pi_i, \quad i \in [n],\\
&\bP(a = a_i \mid X = i) = 1, \quad i \in [n],\\
&\bP(R = R_j^i \mid X = i) = p_j^i, \quad j \in [k_i].
\end{split}
\end{equation*}
Note that $a$ is a function of $X$. The terms $\Gamma (\bp^i, \bR^i)$ correspond to 
\begin{equation} \label{eqn:Gamma.probabilistic.conditional}
\tilde{\Gamma}(R \mid X) := \log \bE[R \mid X] - \bE[\log R \mid X],
\end{equation}
which is the conditional version of \eqref{eqn:Gamma.probabilistic}. The general chain rule \eqref{eqn:egr.chain.rule.2} is equivalent to the identity
\begin{equation} \label{eqn:chain.rule.probabilistic}
\tilde{\Gamma} (aR) = \tilde{\Gamma} (a\bE[R \mid X]) + \bE[ \tilde{\Gamma}( R \mid X)],
\end{equation}
and \eqref{eqn:egr.chain.rule.1} is equivalent to the special case $a \equiv 1$. 

\begin{example} \label{eg:chain.rule}
Consider the excess growth rate $\gamma(\bpi, \br)$ in terms of the log return, see \eqref{eqn:egr.small.r}. Taylor expanding about $\br = 0$ shows that
\begin{equation} \label{eqn:Taylor}
\gamma(\bpi, \br) = \frac{1}{2}\mathrm{Var}_{\bpi}(\br) + o\Big(\sum_{i \in \supp(\bpi)} r_i^2\Big),
\end{equation}
where $\mathrm{Var}_{\bpi}(\br)$ is the variance of $\br$ weighted by $\bpi$:
\[
\mathrm{Var}_{\bpi}(\br) := \sum_{i \in \supp(\bpi)} \pi_i r_i^2 - \Big(\sum_{i \in \supp(\bpi)} \pi_i r_i \Big)^2.
\]

Consider the functional $\check{\Gamma}(\bpi, \bR) := \mathrm{Var}_{\bpi}(\br)$ which, in analogy to \eqref{eqn:Gamma.probabilistic.conditional}, corresponds to the functionals
\[
\check{\Gamma}(R) := \mathrm{Var}(\log R), \quad \check{\Gamma}(R \mid X) := \mathrm{Var}(\log R \mid X).
\]
The following statements hold:
\begin{itemize}
\item $\check{\Gamma}$ is num\'{e}raire-invariant: if $a > 0$ is a constant, then
\[
\check{\Gamma}(aR) = \mathrm{Var}(\log R + \log a) = \mathrm{Var}(\log R) = \check{\Gamma}(R).
\]
\item $\check{\Gamma}$ does not satisfy the first chain rule. By the law of total variance, we have
\begin{equation*}
\begin{split}
\check{\Gamma}(R) &= \mathrm{Var}(\log R) \\
&= \mathrm{Var}(\bE[\log R \mid X]) + \bE[ \mathrm{Var}(\log R \mid X)] \\
&= \check{\Gamma}(\exp(\bE[\log R \mid X])) + \bE[\check{\Gamma}( R \mid X)],
\end{split}
\end{equation*}
which may be interpreted as another chain rule. But \eqref{eqn:chain.rule.probabilistic} requires that the first term is $\check{\Gamma}( \bE[R \mid X])$. 
\end{itemize}
\end{example}

\subsection{Free energy and variational representation} \label{sec:variational}
We relate the excess growth rate with the Helmholtz free energy and state a variational representation. Recall that the {\it relative entropy} 
$H\divg{\cdot}{\cdot}$ is given on $\Delta_n \times \Delta_n$, $n \geq 1$, by
\begin{equation}\label{eqn:rel.ent.and.lse}
H\divg{\bp}{\bq} = 
\left\{\begin{array}{ll}
        \sum_{i = 1}^n p_i \log \frac{p_i}{q_i}, & \text{if } \supp(\bp) \subset \supp(\bq);\\
        +\infty, & \text{otherwise.}
        \end{array}\right.
\end{equation}

Consider a physical system with $n$ possible states and let $\bpi \in \Delta_n$ be a reference distribution that represents the multiplicities of states. Let $\mathbf{E} = (E_1, \ldots, E_n) \in \R^n$ represent the energies of the states and $\beta > 0$ be the inverse temperature. Consider the (weighted) {\it Gibbs distribution} $\bp^{\star} = \bp^{\star}(\bpi, \mathbf{E}, \beta) \in \Delta_n$ given by
\begin{equation} \label{eqn:Gibbs}
p_i^{\star} := 
\left\{\begin{array}{ll}
        \frac{1}{Z(\bpi, \mathbf{E}, \beta)} \pi_i e^{-\beta E_i}, & \text{if } i \in \supp(\bpi);\\
        0, & \text{otherwise,}\\
        \end{array}\right.
\end{equation}
where $Z(\bpi, \mathbf{E}, \beta)$ is the {\it partition function} given by
\[
Z(\bpi, \mathbf{E}, \beta) := \sum_{j \in \supp(\bpi)} \pi_j e^{-\beta E_j}.
\]
By construction, we have $\supp(\bp^{\star}) \subset \supp(\bpi)$. In this context, the {\it Helmholtz free energy} is the quantity
\begin{equation} \label{eqn:Helmholtz}
A(\bpi, \mathbf{E}, \beta) := - \frac{1}{\beta} \log Z(\bpi, \mathbf{E}, \beta) = \frac{-1}{\beta} \log \left( \sum_{j \in \supp(\bpi)} \pi_j e^{-\beta E_j} \right).
\end{equation}
(See, for example, \cite[Chapter 3]{PB21} for the physical background.) On the other hand, the average energy of the system with respect to the reference distribution $\bpi$ is given by
\[
U(\bpi, \mathbf{E}) := \sum_{j \in \supp(\bpi)} \pi_j E_j.
\]
Letting $\bR = \exp(-\beta \mathbf{E}) \in (0, \infty)^n$, we have the identity
\begin{equation} \label{eqn:free.energy.identity}
\Gamma(\bpi, \bR) = \beta ( U(\bpi, \mathbf{E}) - A(\bpi, \mathbf{E}, \beta)).
\end{equation}
That is, the excess growth rate is, up to a multiplicative constant, the difference between the reference average energy and the Helmholtz free energy.

The distribution $\bp^{\star}$ given by $\eqref{eqn:Gibbs}$ can be justified by {\it Gibbs' variational principle} \cite[Proposition 4.7]{PW25} of the free energy (or equivalently the log--exp--sum in \eqref{eqn:egr.small.r}):
\begin{equation} \label{eqn:free.energy.variational}
A(\bpi, \mathbf{E}, \beta) = \inf_{\bp \in \Delta_n} \left\{ \langle \bp, \mathbf{E} \rangle + \frac{1}{\beta} H\divg{\bp}{\bpi} \right\},
\end{equation}
and the infimum is attained uniquely by $\bp = \bp^{\star}$.  From this and \eqref{eqn:free.energy.identity}, we immediately obtain a variational representation of the excess growth rate which will be further explored in Section \ref{sec:variational.interpretation}. 

\begin{proposition}[Variational representation] \label{thm:egr.variational}
For $\bpi \in \Delta_n$ and $\br \in \R^n$, we have 
\begin{equation}\label{eqn:var.rep.egr}
\gamma(\bpi, \br) 
=\sup_{\bp\in\Delta_n}\Bigl\{\langle \bp-\bpi , \br \rangle-H \divg{\bp}{\bpi}\Bigr\},
\end{equation}
Moreover, the unique maximizer of \eqref{eqn:var.rep.egr} is $\bp^\star = \bpi \oplus_{\bpi} \cC[e^{\br}]$.
\end{proposition}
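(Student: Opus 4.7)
The plan is to deduce the proposition directly from Gibbs' variational principle \eqref{eqn:free.energy.variational} together with the free-energy identity \eqref{eqn:free.energy.identity}; the setup in the preceding paragraphs has done essentially all the work. I would begin by making the substitution $\beta = 1$ and $\mathbf{E} := -\br$, so that $\bR := \exp(-\beta \mathbf{E}) = e^{\br}$, $U(\bpi, \mathbf{E}) = -\langle \bpi, \br\rangle$, and the partition function becomes $Z := \sum_{j \in \supp(\bpi)} \pi_j e^{r_j}$. With these identifications, \eqref{eqn:free.energy.identity} collapses to the defining formula $\gamma(\bpi, \br) = \log Z - \langle \bpi, \br\rangle$ that appears in \eqref{eqn:egr.small.r}.

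Next, I would apply Gibbs' principle under the same substitution. It yields $-\log Z = \inf_{\bp \in \Delta_n} \{-\langle \bp, \br\rangle + H\divg{\bp}{\bpi}\}$, which, after negating both sides and then subtracting the $\bp$-independent constant $\langle \bpi, \br\rangle$, rearranges to exactly \eqref{eqn:var.rep.egr}. Note that the effective restriction of the supremum to $\bp$ with $\supp(\bp) \subset \supp(\bpi)$ is automatic, since by \eqref{eqn:rel.ent.and.lse} we have $H\divg{\bp}{\bpi} = +\infty$ on the complement, so the objective is $-\infty$ there.

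For the identification of the maximizer, I would invoke the uniqueness clause of Gibbs' principle: the unique minimizer is the Gibbs distribution $\bp^\star$ of \eqref{eqn:Gibbs}, which under the present substitution has coordinates $p_i^\star = \pi_i e^{r_i}/Z$ for $i \in \supp(\bpi)$ and zero otherwise. A short unfolding of the definitions---the inner closure $\cC[e^{\br}]$ contributes only a common factor $\sum_j e^{r_j}$ which cancels after the outer closure with respect to $\bpi$---verifies that this coincides with $\bpi \oplus_{\bpi} \cC[e^{\br}]$.

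I do not expect a genuine obstacle here, because each step is a direct translation from the statistical-mechanics language of \eqref{eqn:free.energy.variational} to the log-return language of $\gamma$. The only point requiring care is keeping track of supports and normalizations in the closure and perturbation operations when $\supp(\bpi)$ is a strict subset of $[n]$; this bookkeeping is the sole place where a slip could occur.
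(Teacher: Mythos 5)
Your proposal is correct and follows exactly the route the paper intends: the authors state that the proposition follows "immediately" from the free-energy identity \eqref{eqn:free.energy.identity} and Gibbs' variational principle \eqref{eqn:free.energy.variational}, and then omit the details as classical. Your substitution $\beta=1$, $\mathbf{E}=-\br$, the support/normalization bookkeeping, and the identification of the Gibbs distribution with $\bpi \oplus_{\bpi} \cC[e^{\br}]$ all check out.
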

\begin{proof}
We omit the proof as this result is classical. We only note that the support of the optimal $\bp$ must be contained in that of $\bpi$ as otherwise $H\divg{\bp}{\bpi} = \infty$. Also, the optimizer is unique since $H\divg{\cdot}{\bpi}$ is strictly convex on the set $\{ \bp  \in \Delta_n: \supp(\bp) \subset \supp(\bpi)\}$.
\end{proof}

\subsection{Information-theoretic interpretation} \label{sec:Campbell}
We show that the excess growth rate can be expressed in terms of L.~Campbell's measure of average code length \cite{campbell1965coding}. Mathematically, the relation is essentially the same as the one in \eqref{eqn:free.energy.identity}.

Fix $n \geq 1$. Let $\bpi = (\pi_1, \ldots, \pi_n) \in \Delta_n^{\circ}$ be a probability distribution and $\boldsymbol{\ell} = (\ell_1, \ldots, \ell_n) \in \mathbb{Z}_{> 0}^n$ be a set of codeword lengths over an alphabet $\mathcal{X}$ of size $D \geq 2$. 

\begin{definition}[Campbell's measure of expected code length]
Consider the distribution $\bpi$ and the vector $\boldsymbol{\ell}$ of codeword lengths as described above. For $\rho > 0$, we define
\begin{equation} \label{eqn:Campbell}
L_\rho(\bpi, \boldsymbol{\ell}) := \frac{1}{\rho} \log_D \left( \sum_{i=1}^n \pi_i D^{\rho \ell_i} \right).
\end{equation}
\end{definition}

The idea is to consider a cost which is exponential in the length of the codeword. Campbell's measure is obtained by normalizing the expected value of $D^{\rho \ell}$ by a logarithmic transformation. This can be contrasted with Shannon's expected code length $S(\bpi, \boldsymbol{\ell}) := \sum_{i=1}^n \pi_i \ell_i$, which is recovered in the limit $\lim_{\rho \rightarrow 0} L_{\rho}(\bpi, \boldsymbol{\ell}) = S(\bpi, \boldsymbol{\ell})$. At the other extreme, we have $\lim_{\rho \rightarrow \infty} L_{\rho}(\bpi, \boldsymbol{\ell}) = \max_{1 \leq i \leq n} \ell_i$. In \cite{campbell1965coding}, Campbell established source coding theorems under which the asymptotic optimal value of $L_\rho(\bpi, \boldsymbol{\ell})$---for long sequences of input symbols---is the R\'{e}nyi entropy. Also see \cite{bercher2009source} and the references therein for other extensions and applications.

We observe that the difference $L_{\rho}(\bpi, \boldsymbol{\ell}) - S(\bpi, \boldsymbol{\ell})$ between Campbell's and Shannon's expected lengths is, up to a multiplicative constant, an excess growth rate.

\begin{proposition}[Excess growth rate in Campbell's measure] \label{prop:Campbell}
Let $n \geq 1$, $\bpi \in \Delta_n^{\circ}$, $\boldsymbol{\ell} \in \mathbb{Z}_{>0}^n$ and $\rho > 0$. Define $\bR = (R_1, \ldots, R_n)$, where $R_i = D^{\rho \ell_i}$, is the vector of exponentiated code lengths. Then, we have
\begin{equation} \label{eqn:Campbell.identity}
L_{\rho}(\bpi, \boldsymbol{\ell}) - S(\bpi, \boldsymbol{\ell}) = \frac{1}{\rho \log D} \Gamma(\bpi, \bR).
\end{equation}
\end{proposition}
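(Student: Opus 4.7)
The plan is a direct algebraic identification: both sides of \eqref{eqn:Campbell.identity} are linear combinations of the two sums $\log(\sum_i \pi_i R_i)$ and $\sum_i \pi_i \log R_i$, so it suffices to rewrite $L_\rho$ and $S$ in the variables $R_i = D^{\rho \ell_i}$ and match the result to Definition \ref{def:egr}. First, I would convert the base-$D$ logarithm in \eqref{eqn:Campbell} to natural log using $\log_D(x) = (\log D)^{-1}\log x$. Substituting $D^{\rho \ell_i} = R_i$ inside the sum then gives
\[
L_\rho(\bpi,\boldsymbol{\ell}) \;=\; \frac{1}{\rho \log D}\,\log\!\left(\sum_{i=1}^n \pi_i R_i\right).
\]

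Second, I would express Shannon's measure in terms of $\bR$. Taking the natural log of $R_i = D^{\rho \ell_i}$ yields $\log R_i = \rho \ell_i \log D$, so $\ell_i = \frac{\log R_i}{\rho \log D}$, and therefore
\[
S(\bpi,\boldsymbol{\ell}) \;=\; \sum_{i=1}^n \pi_i \ell_i \;=\; \frac{1}{\rho \log D}\sum_{i=1}^n \pi_i \log R_i.
\]
Subtracting these two expressions produces the common prefactor $\frac{1}{\rho \log D}$ multiplying exactly the two-term combination that defines $\Gamma(\bpi,\bR)$ in \eqref{eqn:EGR.financial}.

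Finally, I would verify that the sums actually line up with those in Definition \ref{def:egr}, which are restricted to $\supp(\bpi)$. Since $\bpi \in \Delta_n^\circ$ we have $\supp(\bpi) = [n]$ (for $n \geq 2$ all coordinates are strictly positive by the sum constraint, and the $n=1$ case is trivial), and since $R_i = D^{\rho \ell_i} > 0$ we also have $\supp(\bR) = [n]$, so $(\bpi,\bR) \in \cD_n$ and the restricted sums in \eqref{eqn:EGR.financial} coincide with the full sums appearing above. There is essentially no obstacle here: the proposition is a bookkeeping identity, and the only points requiring a moment of care are the consistent treatment of the base change (the factor $\log D$) and the scaling $\rho$, which appear both in the prefactor of $L_\rho$ and in the exponent defining $R_i$.
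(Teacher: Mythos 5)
Your proof is correct and is essentially identical to the paper's: both rewrite $L_\rho$ and $S$ in terms of $\bR$ via the base change $\log_D(\cdot)=(\log D)^{-1}\log(\cdot)$ and subtract. The extra check that the support-restricted sums in Definition \ref{def:egr} coincide with the full sums (since $\bpi\in\Delta_n^\circ$) is a harmless, correct addition.
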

\begin{proof} 
We have
\[
L_{\rho}(\bpi, \boldsymbol{\ell}) = \frac{1}{\rho} \log_D \left(\sum_{i = 1}^n \pi_i R_i\right) = \frac{1}{\rho \log D} \log \left( \sum_{i  = 1}^n \pi_i R_i \right).
\]
On the other hand, we have
\[
S(\bpi, \boldsymbol{\ell}) = \sum_{i = 1}^n \pi_i \ell_i = \frac{1}{\rho} \sum_{i = 1}^n \pi_i \log_D R_i = \frac{1}{\rho \log D} \sum_{i = 1}^n \pi_i \log R_i.
\]
We obtain \eqref{eqn:Campbell.identity} by taking the difference. 
\end{proof}

\subsection{Probabilistic interpretations} \label{sec:probabilistic}
We prove two results that provide probabilistic interpretations of the excess growth rate in terms of the {\it scaled Dirichlet distribution}. We fix $n \geq 2$.

To motivate the first result, recall that the relative entropy arises in {\it Sanov's theorem} in the theory of large deviations. Let $\bp \in \Delta_n^{\circ}$. For $N \geq 1$, let 
\[
\bQ_N \sim \mathrm{Multinomial}(N, \bp)
\]
be an $n$-dimensional multinomial random vector. Let $\mu_{\bp, N}$ be the law of $\frac{1}{N} \bQ_N$. In other words, $\mu_{\bp, N}$ represents the law of the empirical distribution of $N$ independent samples from the categorical distribution on the state space $[n]$ with weights $\bp$. Then, it can be shown that the family $(\mu_{\bp, N})_{N \geq 1}$ satisfies the {\it large deviation principle} (LDP) with rate $N$ and rate function 
\begin{equation} \label{eqn:LDP.relative.entropy}
I(\bq) = H\divg{\bq}{\bp}, \quad \bq \in \Delta_n^{\circ}.
\end{equation}
In particular, we have
\begin{equation} \label{eqn:LDP}
\lim_{N \rightarrow \infty} \frac{1}{N} \log \mu_{\bp, N}(S) = -\inf_{\bq \in S} H\divg{\bq}{\bp}
\end{equation}
for sufficiently regular Borel subsets $S$ of $\Delta_n^{\circ}$. In Theorem \ref{thm:LDP}(ii) below, we recall the definition of LDP in the context of the excess growth rate. We refer the reader to \cite{D09} for a comprehensive treatment of large deviation theory, and \cite[Chapter 11]{CT06} for an introduction with a focus on information-theoretic concepts.

Analogously, we show that the excess growth rate arises in the large deviation principle of another stochastic model involving the scaled Dirichlet distribution \cite{MME21, MMPE11}. This extends the formulation in \cite[Section 3.1]{pal2020multiplicative} and \cite[Example III.18]{wong2022tsallis} which is restricted to the case $\bpi = \barE = (1/n, \ldots, 1/n)$.  For $\alpha, \beta \in (0, \infty)$, we let $\mathrm{Gamma}(\alpha, \beta)$ be the gamma distribution on $\R_+$ with shape parameter $\alpha$ and rate parameter $\beta$.

\begin{definition}[Scaled Dirichlet distribution] \label{def:scaled.Dirichlet}
The scaled Dirichlet distribution $\mathcal{SD}(\boldsymbol{\alpha}, \boldsymbol{\beta})$, with parameters $\boldsymbol{\alpha} = (\alpha_1, \ldots, \alpha_n)$ and $ \boldsymbol{\beta} = (\beta_1, \ldots, \beta_n)$ in $(0, \infty)^n$, is the distribution of the $\Delta_n^{\circ}$-valued random vector
\begin{equation} \label{eqn:scaled.Dirichlet}
\bY = \mathcal{C}[\bX], \quad \bX = (X_1, \ldots, X_n),
\end{equation}
where $X_1, \ldots, X_n$ are jointly independent with $X_i \sim \mathrm{Gamma}(\alpha_i, \beta_i)$.
\end{definition}

When $\boldsymbol{\beta} = (\beta, \ldots, \beta)$ is a constant vector, then $\mathcal{SD}(\boldsymbol{\alpha}, \boldsymbol{\beta})$ reduces to the Dirichlet distribution $\mathcal{D}(\boldsymbol{\alpha})$ with parameter $\boldsymbol{\alpha}$. The scaled Dirichlet distribution can be traced to the works of Savage and Dickey \cite{D68} in the 1960s. It was studied in \cite{MME21, MMPE11} as a more flexible version of the Dirichlet distribution to model simplex-valued (or {\it compositional}) data.

The following lemma shows that the scaled Dirichlet distribution can be expressed in terms of the usual Dirichlet distribution and the simplicial operations introduced in Section \ref{sec:egr.properties}.

\begin{lemma} \label{lem:scaled.Dirichlet.simplex}
Let $\boldsymbol{\alpha}, \boldsymbol{\beta} \in (0, \infty)^n$. Then $\mathcal{SD}(\boldsymbol{\alpha}, \boldsymbol{\beta})$ is equal to the distribution of
\begin{equation}
\bY = \cC[\boldsymbol{\beta}^{-1}] \oplus \bZ,
\end{equation}
where $\bZ \sim \mathcal{D}(\boldsymbol{\alpha})$. In particular, we have $\mathcal{SD}(\boldsymbol{\alpha}, \boldsymbol{\beta}) = \mathcal{SD}(\boldsymbol{\alpha}, c\boldsymbol{\beta})$ for any $c > 0$.
\end{lemma}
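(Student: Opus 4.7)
The plan is to reduce the scaled Dirichlet distribution to the ordinary Dirichlet distribution via the well-known scaling property of the gamma family, then translate the coordinatewise rescaling into the simplicial perturbation operation $\oplus$ introduced in Section~\ref{sec:egr.properties}.

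\textbf{Step 1: Reduce to standard gammas.} Using the fact that $X_i \sim \mathrm{Gamma}(\alpha_i,\beta_i)$ if and only if $\beta_i X_i \sim \mathrm{Gamma}(\alpha_i,1)$, I write $X_i = W_i/\beta_i$ where $W_1,\ldots,W_n$ are jointly independent with $W_i \sim \mathrm{Gamma}(\alpha_i,1)$. Setting $\bW = (W_1,\ldots,W_n)$, this gives $\bX = \boldsymbol{\beta}^{-1}\bW$ (componentwise Hadamard product), so that
\[
\bY = \cC[\bX] = \cC\bigl[\boldsymbol{\beta}^{-1}\bW\bigr].
\]

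\textbf{Step 2: Recognize the Dirichlet factor.} I invoke the classical representation of the Dirichlet distribution: if $W_1,\ldots,W_n$ are independent with $W_i \sim \mathrm{Gamma}(\alpha_i,1)$, then $\bZ := \cC[\bW] \sim \mathcal{D}(\boldsymbol{\alpha})$. Hence it suffices to show that
\[
\cC\bigl[\boldsymbol{\beta}^{-1}\bW\bigr] = \cC[\boldsymbol{\beta}^{-1}] \oplus \cC[\bW] = \cC[\boldsymbol{\beta}^{-1}] \oplus \bZ.
\]

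\textbf{Step 3: A closure identity.} The remaining piece is the identity $\cC[\bx\by] = \cC[\bx]\oplus\cC[\by]$ for arbitrary $\bx,\by\in(0,\infty)^n$. This follows from the scale invariance $\cC[c\bv]=\cC[\bv]$ for $c>0$: with $S_x=\sum_j x_j$ and $S_y=\sum_j y_j$, the $i$-th component of $\cC[\bx]\bigl(\cC[\by]\bigr)$ is $x_iy_i/(S_xS_y)$, whose closure is $x_iy_i/\sum_j x_jy_j = \cC[\bx\by]_i$. Applying this with $\bx = \boldsymbol{\beta}^{-1}$ and $\by = \bW$ completes the proof of the representation.

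\textbf{Step 4: The invariance corollary.} For the final assertion $\mathcal{SD}(\boldsymbol{\alpha},\boldsymbol{\beta}) = \mathcal{SD}(\boldsymbol{\alpha},c\boldsymbol{\beta})$, note that $\cC[(c\boldsymbol{\beta})^{-1}] = \cC[c^{-1}\boldsymbol{\beta}^{-1}] = \cC[\boldsymbol{\beta}^{-1}]$ by scale invariance of $\cC$, so both distributions admit the same representation $\cC[\boldsymbol{\beta}^{-1}]\oplus\bZ$ with $\bZ\sim\mathcal{D}(\boldsymbol{\alpha})$. There is no serious obstacle here; the only point requiring care is the verification of the closure/perturbation compatibility in Step~3, which is a short algebraic check once one exploits scale invariance of $\cC$.
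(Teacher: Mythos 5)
Your proof is correct and follows essentially the same route as the paper: rescale to standard gammas via the Gamma scaling property, identify $\cC[\bW]\sim\mathcal{D}(\boldsymbol{\alpha})$, and factor the closure through the perturbation operation. The only difference is cosmetic — you make explicit the identity $\cC[\bx\by]=\cC[\bx]\oplus\cC[\by]$, which the paper leaves as an unstated one-line check.
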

\begin{proof}
Suppose that $\bY \sim \mathcal{SD}(\boldsymbol{\alpha}, \boldsymbol{\beta})$ is expressed as \eqref{eqn:scaled.Dirichlet}. Recall that if $\tilde{X} \sim \mathrm{Gamma}(a, b)$, then $c\tilde{X} \sim \mathrm{Gamma}(a, b/c)$. So $\tilde{Z}_i := \beta_i X_i \sim \mathrm{Gamma}(\alpha_i, 1)$, and $\bZ := \cC[\tilde{\bZ}] \sim \mathcal{D}(\boldsymbol{\alpha})$. It follows that
\[
\bY = \mathcal{C}[\bX] = \mathcal{C}[\boldsymbol{\beta}^{-1} \odot \tilde{\bZ}] = \cC[\boldsymbol{\beta}^{-1}] \oplus \bZ.
\]
\end{proof}

On $\Delta_n^{\circ}$, we take as reference measure the {\it Aitchison measure} $\lambda_n$ defined by
\[
\dd \lambda_n(\by) := \frac{1}{\sqrt{n} \prod_{i = 1}^n y_i} \dd y_1 \cdots \dd y_{n-1}, \quad \by \in \Delta_n^{\circ}.
\]
More precisely, we regard the Lebesgue measure $\dd y_1 \cdots \dd y_{n-1}$ on
\[
\{(y_1, \ldots, y_{n-1}) \in (0, 1)^{n-1} : y_1 + \cdots + y_{n-1} < 1\}
\]
as a measure on $\Delta_n^{\circ}$ through the measurable bijection
\[
(y_1, \ldots, y_{n-1}) \leftrightarrow \left(y_1, \ldots, y_{n-1}, 1 - \sum_{i = 1}^{n-1} y_i\right).
\]
By \cite[(11)]{MME21}, the density of $\mathcal{SD}(\boldsymbol{\alpha}, \boldsymbol{\beta})$ with respect to the Aitchison measure $\lambda_n$ is given by
\begin{equation} \label{eqn:scaled.Dirichlet.density}
\frac{\Gamma\left( \sum_{i = 1}^n \alpha_i \right) \sqrt{n} }{\prod_{i = 1}^n \Gamma(\alpha_i)} \frac{\prod_{i = 1}^n (\beta_i y_i)^{\alpha_i}}{\left( \sum_{i = 1}^n \beta_i y_i \right)^{\sum_{i = 1}^n \alpha_i}}, \quad \by \in \Delta_n^{\circ},
\end{equation}
where $\Gamma(\cdot)$ is the gamma function (not to be confused with the excess growth rate).

\begin{remark} \label{rmk:Aitchison}
The Aitchison measure is the Haar measure on $(\Delta_n^{\circ}, \oplus)$ (as a topological commutative group) which is unique up to a multiplicative constant.
\end{remark}

To formulate an LDP involving the excess growth rate, we parameterize $\boldsymbol{\alpha}$ and $\boldsymbol{\beta}$ as follows. Let $\bpi \in \Delta_n^{\circ}$ and let $\sigma > 0$ be a noise parameter whose role is analogous to that of $1/N$ in \eqref{eqn:LDP}. For $\bx \in (0, \infty)^n$, we define 
\begin{equation} \label{eqn:mu.pi.x.sigma}
\mu_{\bpi, \bx, \sigma} := \mathcal{SD}\left(\boldsymbol{\alpha} = \frac{1}{\sigma} \bpi, \boldsymbol{\beta} = \bpi \odot \bx^{-1}\right).
\end{equation}
By Lemma \ref{lem:scaled.Dirichlet.simplex}, $\mu_{\bpi, \bx, \sigma}$ is the distribution of the random vector
\begin{equation} \label{eqn:Dirichlet.perturbation}
\bY = (\cC[\bx] \ominus \bpi) \oplus \bZ, \quad \bZ \sim \mathcal{D}\left(\frac{1}{\sigma} \bpi \right).
\end{equation}
Since $\bY$ depends on $\bx$ only via $\cC[\bx]$, we may assume without loss of generality that $\bx \in \Delta_n^{\circ}$. We obtain the density of $\mu_{\bpi, \bx, \sigma}$ by plugging \eqref{eqn:mu.pi.x.sigma} into \eqref{eqn:scaled.Dirichlet.density}.

\begin{lemma}[Density of $\mu_{\bpi, \bx, \sigma}$] \label{lem:Dirichlet.density}
For $\bpi, \bx \in \Delta_n^{\circ}$ and $\sigma > 0$, the density of $\mu_{\bpi, \bx, \sigma}$ with respect to $\lambda_n$ is given by
\begin{equation} \label{eqn:Y.density}
f(\by \mid \bpi, \bx, \sigma) := \frac{\Gamma(1/\sigma) \sqrt{n}}{\prod_{i = 1}^n \Gamma(\pi_i/\sigma)} e^{\frac{-1}{\sigma} H(\bpi)} e^{ \frac{-1}{\sigma} \Gamma_{\bpi}\divg{\by}{\bx}}, \quad \by \in \Delta_n^{\circ},
\end{equation}
where $H(\bpi) = -\sum_{i = 1}^n \pi_i \log \pi_i$ is the Shannon entropy of $\bpi$.
\end{lemma}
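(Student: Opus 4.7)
The plan is direct substitution into the scaled Dirichlet density formula \eqref{eqn:scaled.Dirichlet.density}, followed by algebraic rearrangement to recognize the Shannon entropy and the excess growth rate divergence \eqref{eqn:excess.growth.divergence}. There is no analytic obstacle; the only thing to be careful about is tracking the exponent bookkeeping so that the $\pi_i$-terms split correctly into an $H(\bpi)$ piece and a $\Gamma_{\bpi}\divg{\by}{\bx}$ piece.

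First I would set $\alpha_i = \pi_i/\sigma$ and $\beta_i = \pi_i/x_i$ and compute $\sum_{i=1}^n \alpha_i = 1/\sigma$ (since $\bpi \in \Delta_n^\circ$), which immediately yields the prefactor $\Gamma(1/\sigma) \sqrt{n} / \prod_{i=1}^n \Gamma(\pi_i/\sigma)$. The denominator in \eqref{eqn:scaled.Dirichlet.density} becomes
\[
\left(\sum_{i=1}^n \tfrac{\pi_i}{x_i} y_i\right)^{1/\sigma} = \exp\!\left(\tfrac{1}{\sigma} \log \sum_{i=1}^n \pi_i \tfrac{y_i}{x_i}\right),
\]
and the numerator becomes
\[
\prod_{i=1}^n \left( \tfrac{\pi_i y_i}{x_i}\right)^{\pi_i/\sigma} = \exp\!\left(\tfrac{1}{\sigma} \sum_{i=1}^n \pi_i \log \pi_i + \tfrac{1}{\sigma} \sum_{i=1}^n \pi_i \log \tfrac{y_i}{x_i} \right).
\]

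Next I would combine these two exponentials, splitting off $\frac{1}{\sigma} \sum_i \pi_i \log \pi_i = -\frac{1}{\sigma} H(\bpi)$ as a standalone factor $e^{-H(\bpi)/\sigma}$. The remaining exponent is
\[
-\tfrac{1}{\sigma} \Bigl( \log \sum_{i=1}^n \pi_i \tfrac{y_i}{x_i} - \sum_{i=1}^n \pi_i \log \tfrac{y_i}{x_i} \Bigr),
\]
which by Definition \ref{def:egr.divergence} is precisely $-\Gamma_{\bpi}\divg{\by}{\bx}/\sigma$ (noting that $\supp(\bpi)=[n]$ since $\bpi \in \Delta_n^\circ$). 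Assembling the three factors gives \eqref{eqn:Y.density}. The only ``obstacle,'' if any, is notational care: verifying that $\mathcal{SD}(\boldsymbol{\alpha},\boldsymbol{\beta})$ as defined in \eqref{eqn:mu.pi.x.sigma} indeed fits the parametrization assumed in \eqref{eqn:scaled.Dirichlet.density}, which is immediate, and handling the sign of $H(\bpi)$ correctly.
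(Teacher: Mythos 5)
Your proof is correct and follows exactly the route the paper intends: the paper states the lemma without a written proof, noting only that one "obtains the density by plugging \eqref{eqn:mu.pi.x.sigma} into \eqref{eqn:scaled.Dirichlet.density}," and your computation carries out that substitution correctly, with the exponent bookkeeping (splitting $\sum_i \pi_i \log\pi_i$ into the $-H(\bpi)/\sigma$ factor and recognizing the remainder as $-\Gamma_{\bpi}\divg{\by}{\bx}/\sigma$) handled properly.
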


\begin{remark}[The case of equal weights] \label{rmk:equal.weights}
If $\bpi = \barE$ is the barycenter of $\Delta_n^{\circ}$, then in \eqref{eqn:Dirichlet.perturbation} (with $\bx \in \Delta_n^{\circ}$) we have $\bY = \bx \oplus \bZ$, where $\bZ \sim \mathcal{D}(\sigma/n, \ldots, \sigma/n)$. This recovers the {\it Dirichlet perturbation model} studied in \cite{pal2020multiplicative, TWYZ25, wong2022tsallis}.
\end{remark}

\begin{theorem}[Excess growth rate as rate function] \label{thm:LDP}
Let $n \geq 2$ and $\bpi \in \Delta_n^{\circ}$.
\begin{itemize}
\item[(i)] We have
\begin{equation} \label{eqn:density.LDP}
\lim_{\sigma \downarrow 0} \sup_{\bx, \by \in \Delta_n^{\circ}} \left| -\sigma \log f(\by \mid \bpi, \bx, \sigma) - \Gamma_{\bpi}\divg{\by}{\bx} \right| = 0.
\end{equation}
\item[(ii)] For $\bx \in \Delta_n^{\circ}$, the family $\left( \mu_{\bpi, \bx, \sigma} \right)_{\sigma > 0}$ of probability distributions on $\Delta_n^{\circ}$ satisfy the large deviation principle with rate $1/\sigma$ and rate function $I(\by) = \Gamma_{\bpi}\divg{\by}{\bx}$. By definition, this means that for every closed subset $F$ and every open subset $G$ of $\Delta_n^{\circ}$, we have 
\begin{equation} \label{eqn:LDP.def.1}
\begin{split}
\limsup_{\sigma \downarrow 0} \sigma \log \mu_{\bpi, \bx, \sigma}(F) &\leq -\inf_{\by \in F} \Gamma_{\bpi}\divg{\by}{\bx}, \\
\liminf_{\sigma \downarrow 0} \sigma \log \mu_{\bpi, \bx, \sigma}(G) &\geq -\inf_{\by \in G} \Gamma_{\bpi}\divg{\by}{\bx}.
\end{split}
\end{equation}
\end{itemize}
\end{theorem}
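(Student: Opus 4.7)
My plan is to prove part (i) by a direct Stirling computation using the explicit density formula from Lemma \ref{lem:Dirichlet.density}, and then to derive part (ii) from (i) via standard large deviation arguments.

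First, for part (i), I would note that Lemma \ref{lem:Dirichlet.density} immediately yields
\[
-\sigma \log f(\by \mid \bpi, \bx, \sigma) - \Gamma_{\bpi}\divg{\by}{\bx} = H(\bpi) - \sigma \log \frac{\Gamma(1/\sigma)\sqrt{n}}{\prod_{i=1}^n \Gamma(\pi_i/\sigma)}.
\]
Since the right-hand side depends on neither $\bx$ nor $\by$, the supremum in \eqref{eqn:density.LDP} equals its absolute value. Stirling's expansion $\log \Gamma(z) = (z - \tfrac12)\log z - z + \tfrac12\log(2\pi) + O(1/z)$ applied at $z = 1/\sigma$ and $z = \pi_i/\sigma$ (each of which tends to infinity since $\bpi \in \Delta_n^\circ$) yields, after the $\log \sigma$ terms cancel,
\[
\sigma \log \frac{\Gamma(1/\sigma)\sqrt{n}}{\prod_i \Gamma(\pi_i/\sigma)} = H(\bpi) + O(\sigma \lvert \log \sigma \rvert),
\]
which proves (i).

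For part (ii), I would rewrite the density as $f(\by \mid \bpi,\bx,\sigma) = C_\sigma \, e^{-\Gamma_{\bpi}\divg{\by}{\bx}/\sigma}$, where $C_\sigma$ depends only on $\bpi$ and $\sigma$ and satisfies $\sigma \log C_\sigma \to 0$ by (i). The lower bound on an open $G \subseteq \Delta_n^\circ$ is standard: for $\by_0 \in G$ and small $\epsilon > 0$, continuity of $\Gamma_{\bpi}\divg{\cdot}{\bx}$ on $\Delta_n^\circ$ furnishes a compact neighborhood $K \subseteq G$ of $\by_0$ on which $\Gamma_{\bpi}\divg{\cdot}{\bx} \leq \Gamma_{\bpi}\divg{\by_0}{\bx} + \epsilon$; integrating, taking $\sigma \log$, letting $\sigma \downarrow 0$ and then $\epsilon \downarrow 0$, and finally infimizing over $\by_0 \in G$ delivers the bound. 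The upper bound on a compact $K \subseteq \Delta_n^\circ$ is likewise immediate by pulling $\sup_K e^{-\Gamma_{\bpi}\divg{\cdot}{\bx}/\sigma}$ out of the integral, since $\lambda_n(K) < \infty$ when $K$ is bounded away from $\partial \Delta_n$.

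The main obstacle will be establishing exponential tightness to promote the upper bound to general closed $F \subseteq \Delta_n^\circ$, since $\lambda_n$ itself is infinite on $\Delta_n^\circ$. The plan is to take $K_\delta := \{\by : \min_i y_i \geq \delta\}$ and exploit that on $K_\delta^c$, at least one term $-\pi_j \log(y_j/x_j) \geq \pi_j \log(x_j/\delta)$ in $\Gamma_{\bpi}\divg{\by}{\bx}$ dominates while the logarithmic mean term stays bounded, giving $\Gamma_{\bpi}\divg{\by}{\bx} \geq (\min_i \pi_i) \log(1/\delta) - C(\bpi,\bx)$ uniformly on $K_\delta^c$. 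Combined with the density bound and a change of reference measure to Lebesgue (on which $\Delta_n^\circ$ has finite mass), this would yield $\mu_{\bpi,\bx,\sigma}(K_\delta^c) \leq e^{-(\min_i \pi_i)\log(1/\delta)/\sigma + o(1/\sigma)}$, which suffices. An appealing alternative is the contraction principle: Lemma \ref{lem:scaled.Dirichlet.simplex} lets us write $\bY = (\cC[\bx] \ominus \bpi) \oplus \bZ$ with $\bZ \sim \mathcal{D}(\bpi/\sigma)$, so one can transport the LDP through the homeomorphism $\bz \mapsto (\cC[\bx] \ominus \bpi) \oplus \bz$ from the LDP for $\mathcal{D}(\bpi/\sigma)$, whose rate function is $H\divg{\bpi}{\cdot} = \Gamma_{\bpi}\divg{\cdot}{\bpi}$ and is handled in \cite{pal2020multiplicative, wong2022tsallis} for the barycentric case $\bpi = \barE$.
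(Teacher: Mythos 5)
Your proof of part (i) is correct and is essentially the paper's own argument: the difference $-\sigma\log f(\by\mid\bpi,\bx,\sigma)-\Gamma_{\bpi}\divg{\by}{\bx}$ is a constant in $(\bx,\by)$ equal to $H(\bpi)-\sigma\log\bigl(\Gamma(1/\sigma)\sqrt{n}/\prod_i\Gamma(\pi_i/\sigma)\bigr)$, and Stirling's formula (applicable at each $\pi_i/\sigma$ precisely because $\bpi\in\Delta_n^{\circ}$) shows this tends to $0$; the $\log\sigma$ cancellation you describe checks out. For part (ii) the paper simply declares the deduction from the uniform limit ``standard'' and omits it, whereas you supply the substance: the lower bound on open sets and the upper bound on compacts are indeed routine Laplace estimates once $\sigma\log C_\sigma\to 0$, and you correctly isolate the one non-routine point, namely that the Aitchison reference measure $\lambda_n$ is infinite on $\Delta_n^{\circ}$, so the upper bound must be promoted to closed sets via exponential tightness. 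Your estimate $\Gamma_{\bpi}\divg{\by}{\bx}\ge(\min_i\pi_i)\log(1/\delta)-C(\bpi,\bx)$ off $K_\delta$ is valid (the log-sum term is bounded below by $\log\min_i(\pi_i/x_i)$ and $-\pi_j\log y_j\ge\pi_j\log(1/\delta)$ for the offending coordinate), but note that you cannot then multiply by $\lambda_n(K_\delta^c)=\infty$; one must split the exponent, e.g.\ $e^{-\Gamma/\sigma}=e^{-\Gamma(1-\sigma)/\sigma}e^{-\Gamma}$, apply the lower bound to the first factor, and observe that $\int_{\Delta_n^\circ}e^{-\Gamma_{\bpi}\divg{\cdot}{\bx}}\,\dd\lambda_n<\infty$ (a Dirichlet-type integral), which is exactly the ``change of reference measure'' step you flag. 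Your alternative via Lemma \ref{lem:scaled.Dirichlet.simplex} and the contraction principle through the homeomorphism $\bz\mapsto(\cC[\bx]\ominus\bpi)\oplus\bz$ is also sound and arguably cleaner, and the identity $H\divg{\bpi}{\by\ominus(\cC[\bx]\ominus\bpi)}=\Gamma_{\bpi}\divg{\by}{\bx}$ follows from \eqref{eqn:EGR.divergence.H}; the only caveat is that the cited references treat the barycentric case, so the LDP for $\mathcal{D}(\bpi/\sigma)$ with general $\bpi$ would need the same Stirling computation you already performed. In short: part (i) matches the paper, and part (ii) is a correct completion of an argument the paper leaves to the reader.
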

\begin{proof}
(i) From \eqref{eqn:Y.density}, we have
\begin{equation} \label{eqn:log.difference}
\begin{split}
&-\sigma \log f(\by \mid \bpi, \bx, \sigma) - \Gamma_{\bpi}(\by \mid \bx) \\
&= -\sigma \log \Gamma\left(\frac{1}{\sigma}\right) - \sigma \log \sqrt{n} + \sigma \sum_{i = 1}^n \log \Gamma\left( \frac{\pi_i}{\sigma} \right) - H(\bpi),
\end{split}
\end{equation}
which is independent of $\bx$ and $\by$ (this gives the $\sup$ in \eqref{eqn:density.LDP}).

By Stirling's approximation
\[
\log \Gamma(z) = z \log z - z + O(\log z), \quad \text{as } z \rightarrow \infty,
\]
the last expression in \eqref{eqn:log.difference} is equal to
\begin{equation*}
\begin{split}
&- \log \frac{1}{\sigma} - 1 - \sigma \log \sqrt{n} + \sum_{i = 1}^n \left( \pi_i \log \frac{\pi_i}{\sigma} + \pi_i\right) - H(\bpi) + o(1) \\
&= H(\bpi) - H(\bpi) + \sigma \log \sqrt{n} + o(1) = o(1)  \rightarrow 0, \quad \text{as } \sigma \downarrow 0.
\end{split}
\end{equation*}
This gives the desired result \eqref{eqn:density.LDP}. In particular, from the computation above we have the limit
\[
\lim_{\sigma \downarrow 0} \sigma \log \left( \frac{\Gamma\left(\frac{1}{\sigma}\right)}{\prod_{i = 1}^n \Gamma\left( \frac{\pi_i}{\sigma} \right)} \right) = H(\bpi).
\]

(ii) This is an immediate consequence of the uniform limit in (i). Since the argument is standard in the theory of large deviation and is not needed in the rest of the paper, we omit the details.
\end{proof}

Our second result expresses the excess growth rate as a {\it R\'{e}nyi divergence} between members of the family $(\mu_{\bpi, \bx, \sigma})_{\bx \in \Delta_n^{\circ}}$. Here, $\bpi \in \Delta_n^{\circ}$ and $\sigma > 0$ are fixed. We may regard $\bx$ as a location parameter. To give a classical analogue, consider the squared {\it Mahalanobis distance} \cite{M2018} on $\R^n$ defined by
\begin{equation} \label{eqn:Mahalanobis}
d_{\mathrm{M}}^2(\bx, \by) := (\bx - \by)^{\top} \Sigma^{-1} (\bx - \by), 
\end{equation}
where $\bx, \by \in \R^d$ are considered column vectors and $\Sigma \in \R^{n \times n}$ is a given strictly positive definite matrix. It is well known that $d_{\mathrm{M}}^2$ expresses, up to a constant, the relative entropy between members of the {\it normal location family} $\{ \cN(\bx, \Sigma) \}_{\bx \in \R^n}$ (note that the covariance matrix $\Sigma$ is kept fixed):
\begin{equation} \label{eqn:Mahalanobis.normal.location}
H\divg{ \cN(\bx, \Sigma) }{ \cN(\by, \Sigma) } = \frac{1}{2} d_{\mathrm{M}}^2(\bx, \by), \quad \bx, \by \in \R^d.
\end{equation}
An axiomatic characterization of the Mahalanobis distance is given in Theorem \ref{thm:Mahalanobis} below.

\begin{remark} \label{rmk:exp.family}
The identity \eqref{eqn:Mahalanobis.normal.location} is a special case of the general result that the relative entropy between members of an exponential family of probability distributions can be expressed as a Bregman divergence \cite{A16}.
\end{remark}

For $\alpha > 0$, the {\it R\'{e}nyi divergence} of order $\alpha$ is defined for probability measures $\mu_1, \mu_2$ on a (measurable) state space $\cX$ by
\begin{equation*} %
H_{\alpha}\divg{\mu_1}{\mu_2} := \frac{1}{\alpha - 1} \log \left( \int_{\cX} \left( \frac{\dd \mu_1}{\dd \mu_2} \right)^{\alpha} \dd \mu_2 \right),
\end{equation*}
when $\mu_1$ is absolutely continuous with respect to $\mu_2$, and is $+\infty$ otherwise. If $\dd \mu_1 = f_1 \dd \nu$ and $\dd \mu_2 = f_2 \dd \nu$ where $\nu$ is a common dominating measure, then the R\'{e}nyi divergence can be expressed via the densities $f_1, f_2$ by
\begin{equation} \label{eqn:Renyi.divergence.densities}
H_{\alpha}\divg{\mu_1}{\mu_2} = \frac{1}{\alpha - 1} \log \left( \int_{\cX} f_1^{\alpha} f_2^{1 - \alpha} \dd \nu \right).
\end{equation}
See \cite{VH14} for a summary of the properties of the R\'{e}nyi divergence.

\begin{theorem}[Excess growth rate as R\'{e}nyi divergence] \label{thm:Renyi}
For $n \geq 2$, $\bpi \in \Delta_n^{\circ}$, $\sigma > 0$ and $\bx, \by \in \Delta_n^{\circ}$, we have
\begin{equation} \label{eqn:Dirichlet.Renyi}
H_{1 + \sigma} \divg{\mu_{\bpi, \by, \sigma} }{\mu_{\bpi, \bx, \sigma}} = \frac{1}{\sigma} \Gamma_{\bpi}\divg{\by }{\bx}.
\end{equation}
\end{theorem}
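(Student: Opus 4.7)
The plan is to directly evaluate the density-side formula \eqref{eqn:Renyi.divergence.densities} for the R\'{e}nyi divergence, using Lemma \ref{lem:Dirichlet.density} for the density and the perturbation representation \eqref{eqn:Dirichlet.perturbation} to turn the resulting integral into a Dirichlet moment. Writing $f_1(\bw) := f(\bw \mid \bpi, \by, \sigma)$ and $f_2(\bw) := f(\bw \mid \bpi, \bx, \sigma)$, I would begin with the trivial identity
$$f_1^{1+\sigma} f_2^{-\sigma} = f_1 \cdot (f_1/f_2)^{\sigma}.$$
By Lemma \ref{lem:Dirichlet.density}, the $\bw$-independent prefactors in $f_1/f_2$ cancel, leaving $(f_1/f_2)^{\sigma} = e^{\Gamma_{\bpi}\divg{\bw}{\bx} - \Gamma_{\bpi}\divg{\bw}{\by}}$. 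Expanding via the definition \eqref{eqn:excess.growth.divergence}, the $\sum_i \pi_i \log w_i$ terms cancel and one obtains
$$(f_1/f_2)^{\sigma}(\bw) = \prod_{i=1}^n (x_i/y_i)^{\pi_i} \cdot \frac{\sum_i \pi_i w_i/x_i}{\sum_i \pi_i w_i/y_i}.$$

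Integrating $f_1 \cdot (f_1/f_2)^{\sigma}$ against $\lambda_n$ then reduces the problem to computing the expectation $\bE_{\mathbf{W} \sim \mu_{\bpi, \by, \sigma}}\bigl[ (\sum_i \pi_i W_i/x_i)/(\sum_i \pi_i W_i/y_i) \bigr]$. Here is the key simplification: using \eqref{eqn:Dirichlet.perturbation}, I would write $\mathbf{W} = (\by \ominus \bpi) \oplus \bZ$ with $\bZ \sim \mathcal{D}(\bpi/\sigma)$, so that $W_i = (y_i Z_i/\pi_i)/S$ with common normalization $S := \sum_j y_j Z_j / \pi_j$. Direct substitution gives $\pi_i W_i/y_i = Z_i/S$ and $\pi_i W_i/x_i = (y_i/x_i) Z_i/S$; summing over $i$ (and using $\sum_i Z_i = 1$) the $S$ cancels between numerator and denominator, leaving the remarkably clean ratio $\sum_i (y_i/x_i) Z_i$.

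Since the Dirichlet $\mathcal{D}(\bpi/\sigma)$ has mean vector $\bpi$, the expectation equals $\sum_i \pi_i y_i/x_i$. Collecting everything,
$$\int_{\Delta_n^{\circ}} f_1^{1+\sigma} f_2^{-\sigma} \, \dd \lambda_n = \prod_i (x_i/y_i)^{\pi_i} \cdot \sum_i \pi_i y_i/x_i,$$
and applying $\frac{1}{\sigma} \log(\cdot)$ produces precisely $\frac{1}{\sigma} \Gamma_{\bpi}\divg{\by}{\bx}$ by \eqref{eqn:excess.growth.divergence}. The main delicate step is the ratio simplification in the middle paragraph: it is what avoids integrating against a more complex scaled Dirichlet kernel (with a shifted $\boldsymbol{\alpha}$ parameter) and reduces the whole calculation to the elementary fact that the Dirichlet distribution has an affine-in-parameters mean.
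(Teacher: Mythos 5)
Your proof is correct, and it takes a genuinely different route from the paper's. The paper evaluates the integral $C\int_{\Delta_n^{\circ}} e^{-\frac{1+\sigma}{\sigma}\Gamma\divg{\bz}{\by}+\Gamma\divg{\bz}{\bx}}\,\dd\lambda_n(\bz)$ indirectly: it starts from the normalization identity $\int C e^{-\frac{1}{\sigma}\Gamma\divg{\bz}{\by}}\dd\lambda_n \equiv 1$, differentiates under the integral sign in the cleverly chosen direction $\bv = (y_i^2/x_i)_i$, and rearranges the resulting moment identity --- a trick motivated by the general $\lambda$-exponential family theory (see Remark \ref{rmk:Renyi.discussion}), which requires a separate justification of the differentiation step (handled there in a footnote). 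You instead compute the R\'{e}nyi integral head-on: after cancelling the geometric-mean prefactor, the perturbation representation \eqref{eqn:Dirichlet.perturbation} makes the ratio $\bigl(\sum_i \pi_i W_i/x_i\bigr)/\bigl(\sum_i \pi_i W_i/y_i\bigr)$ collapse to the bounded linear functional $\sum_i (y_i/x_i) Z_i$ of the underlying Dirichlet vector, and the identity follows from $\bE[Z_i]=\pi_i$. Your route is more elementary and self-contained (only the Dirichlet first moment is needed, and integrability is immediate from boundedness of the ratio), at the cost of not exhibiting the structural mechanism that the paper's argument is designed to illustrate; both are valid. One small point worth making explicit if you write this up: the step $\bE_{\mathbf{W}\sim\mu_{\bpi,\by,\sigma}}[g(\mathbf{W})]=\bE_{\bZ\sim\mathcal{D}(\bpi/\sigma)}[g((\by\ominus\bpi)\oplus\bZ)]$ is just the law of the unconscious statistician applied to Lemma \ref{lem:scaled.Dirichlet.simplex}, and the cancellation of the prefactors $C_{\bpi,\sigma}$ in $f_1/f_2$ uses that they depend only on $(\bpi,\sigma)$ and not on the location parameter.
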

\begin{proof}
Fix $\bpi \in \Delta_n^{\circ}$ and $\sigma > 0$. By Lemma \ref{lem:Dirichlet.density}, we have
\[
f(\bz \mid \bpi, \bx, \sigma) = C e^{\frac{-1}{\sigma} \Gamma\divg{\bz}{\bx}},
\]
where $\Gamma\divg{\bz}{\bx} := \Gamma_{\bpi}\divg{\bz}{\bx}$ and
\[
C = C_{\bpi, \sigma} := \frac{\Gamma(\frac{1}{\sigma}) \sqrt{n}}{\prod_{i = 1}^n \Gamma(\frac{\pi_i}{\sigma})} e^{\frac{-1}{\sigma} H(\bpi)}.
\]
Let $\bx, \by \in \Delta_n^{\circ}$ be given, and consider 
\begin{equation*}
\begin{split}
&\sigma H_{1 + \sigma}\divg{\mu_{\bpi, \by, \sigma}}{ \mu_{\bpi, \bx, \sigma}} \\
&= 
\sigma \frac{1}{(1 + \sigma) - 1} \log \left( \int_{\Delta_n^{\circ}} f(\bz \mid \bpi, \by, \sigma)^{1 + \sigma} f(\bz \mid \bpi, \bx, \sigma)^{-\sigma}  \dd \lambda_n(\bz) \right) \\
&= \log \left(  C\int_{\Delta_n^{\circ}} e^{-\frac{1 + \sigma}{\sigma} \Gamma\divg{\bz}{\by} + \Gamma\divg{\bz}{\bx} } \dd \lambda_n(\bz) \right).
\end{split}
\end{equation*}

To evaluate the integral, consider the identity
\begin{equation} \label{eqn:integral.identity}
\int_{\Delta_n^{\circ}}  C e^{\frac{-1}{\sigma} \Gamma\divg{\bz}{\by}} \dd \lambda_n(\bz) \equiv 1, \quad \by \in \Delta_n^{\circ}.
\end{equation}
In fact, by num\'{e}raire invariance (Proposition \ref{prop:numeraire.invariance}), the identity holds for $\by \in (0, \infty)^n$. Let $\nabla_{\bv}$ be the directional derivative, with respect to $\by$, in the direction 
\[
\bv = \left( \frac{y_i^2}{x_i} \right)_{1 \leq i \leq n} \in \R^n.
\]
Differentiating under the integral sign in \eqref{eqn:integral.identity},\footnote{This can be justified using Remark \ref{rmk:Aitchison} and standard estimates.} we have
\[
C \int_{\Delta_n^{\circ}} e^{\frac{-1}{\sigma} \Gamma\divg{\bz}{\by}}  \nabla_{\bv} \Gamma\divg{\bz}{\by} \dd \lambda_n(\bz) = 0.
\] 
Noting that 
\[
\nabla_{\bv} \Gamma\divg{\bz}{\by} = \sum_i \frac{-\pi_i \frac{z_i}{y_i^2}}{\sum_j \pi_j \frac{z_j}{y_j}} \frac{y_i^2}{x_i} + \sum_i \frac{\pi_i}{y_i} \frac{y_i^2}{x_i} = - \frac{\sum_i \pi_i \frac{z_i}{x_i}}{\sum_i \pi_i \frac{z_i}{y_i}} + \sum_i \pi_i \frac{y_i}{x_i},
\]
we have
\[
\sum_i \pi_i \frac{y_i}{x_i} = C \int_{\Delta_n^{\circ}} e^{\frac{-1}{\sigma} \Gamma\divg{\bz}{\by}}  \frac{\sum_i \pi_i \frac{z_i}{x_i}}{\sum_i \pi_i \frac{z_i}{y_i}} \dd \lambda_n(\bz).
\]
Observe that we may rearrange the above as
\begin{equation*}
\begin{split}
&e^{\Gamma\divg{\by}{\bx} + \sum_i \pi_i \log \frac{y_i}{x_i}} = C \int_{\Delta_n^{\circ}} \frac{e^{\frac{-1}{\sigma} \Gamma\divg{\bz}{\by}} e^{\Gamma \divg{\bz}{\bx} + \sum_i \pi_i \log \frac{z_i}{x_i}}}{e^{\Gamma\divg{\bz}{\by} + \sum_i \pi_i \log \frac{z_i}{y_i}}} \dd \lambda_n(\bz) \\
&\Rightarrow e^{\Gamma\divg{\by}{\bx}} = C \int_{\Delta_n^{\circ}} e^{-\frac{1+\sigma}{\sigma} \Gamma\divg{\bz}{\by}} e^{\Gamma\divg{\bz}{\bx}} \dd \lambda_n(\bz).
\end{split}
\end{equation*}
We obtain \eqref{eqn:Dirichlet.Renyi} by taking logarithm on both sides.
\end{proof}

\begin{remark} \label{rmk:Renyi.discussion}
The identity \eqref{eqn:Dirichlet.Renyi} can also be derived by showing that the family $\{\mu_{\bpi, \bx, \sigma}\}_{\bx \in \Delta_n^{\circ}}$ can be reparameterized as a \emph{$\lambda$-exponential family} in the sense of \cite[Definition III.1]{wong2022tsallis}, and using the fact that the logarithmic divergence of a suitable potential function is the R\'{e}nyi divergence \cite[Theorem III.14]{wong2022tsallis}. This result extends the relation in Remark \ref{rmk:exp.family} beyond the standard exponential family. In fact, the direct proof given above, which is shorter but may appear to be tricky, is motivated by this general theory. %

\end{remark}

\section{Characterization theorems} \label{sec:characterization}
In this section, we state and prove three characterization theorems for the excess growth rate that highlight different aspects of this quantity. 

\subsection{Via relative entropy} \label{sec:relative entropy}
Our first characterization theorem shows that the properties of the excess growth rate discussed in Section \ref{sec:egr.properties}, together with Lebesgue measurability, uniquely characterize it (as a family $(\Gamma : \cD_n \rightarrow \R_+)_{n \geq 1}$) up to a proportional constant. Our proof makes use of an algebraic relation between the excess growth rate and the relative entropy (Lemma \ref{lem:link.rel.entr}), as well as an axiomatic characterization of the latter. 

For easy reference, we gather here the relevant properties which are stated in terms of a family of functions $(G : \cD_n \rightarrow \R)_{n \geq 1}$.
\begin{enumerate}[label=\textup{(A\arabic*)},leftmargin=*]
\item\label{A1} $G(\bpi,\bR)$ is (jointly) Lebesgue measurable.
\item\label{A3} 
$G(\bpi\sigma,\bR\sigma)=G(\bpi,\bR)$ for every permutation $\sigma$.
\item\label{A6} $G(\bpi,\bR)=G(\bpi,\bR')$ if $R_i=R_i'$ for $i\in \supp(\bpi)$.
\item\label{A4} 
$G(\bpi,\bR)=0$ if $\bR$ is constant on $\supp(\bpi)$. 
\item\label{A5}
For $(\bpi,\ba)\in \cD_n$, $k_i\ge1$, $(\bp^i,\bR^i)\in \cD_{k_i}$, $\bp=(\bp^1,\dots,\bp^n)$ and $\bR=(\bR^1,\dots,\bR^n)$, the general chain rule holds:
\begin{equation*}
G(\bpi \circ \bp, \ba \circ \bR) = G(\bpi, \ba \llangle \bp, \bR \rrangle) + \sum_{i = 1}^n \pi_i G (\bp^i, \bR^i).
\end{equation*}
\end{enumerate}

Assumption \ref{A1} asks for minimal regularity to rule out pathological functions. \ref{A3} is the permutation invariance of Proposition \ref{prop:permutation.invariance}. \ref{A6} and \ref{A4} correspond to Proposition \ref{prop:support}. \ref{A5} is the general chain rule in Proposition \ref{prop:chain.rule.2}. From Proposition \ref{prop:chain.rule.2} and Remark \ref{rmk:chain.rule}, given \ref{A3}--\ref{A4}, we may equivalently replace \ref{A5} by the following two conditions:
\begin{enumerate}[label=\textup{(A\arabic*)},leftmargin=*,start =6]
\item\label{A6.} $G(\bpi, a \bR) = G(\bpi, \bR)$ for $a > 0$.
\item\label{A7.} The first chain rule holds:
\[
G(\bpi \circ \bp, \bR) = G(\bpi, \llangle \bp, \bR \rrangle) + \sum_{i = 1}^n \pi_i G (\bp^i, \bR^i).
\]
\end{enumerate}

\begin{theorem}[Characterization I] \label{thm:characterization.rel.entr}
Let $(G : \cD_n \rightarrow \R)_{n \geq 1}$ be a family of functions. The following are equivalent:
\begin{itemize}
    \item[(i)] The family satisfies \ref{A1}--\ref{A4} and either \ref{A5} or \ref{A6.}--\ref{A7.}.\footnote{In the proof we will work with \ref{A1}--\ref{A5}.}
    \item[(ii)] $G = c\Gamma$ for some $c \in \R$.
\end{itemize}
\end{theorem}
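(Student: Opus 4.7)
My plan is to treat the direction (ii) $\Rightarrow$ (i) as essentially immediate: Lebesgue measurability is clear, and the remaining four axioms are exactly Propositions \ref{prop:permutation.invariance}, \ref{prop:support}, and Theorem \ref{prop:chain.rule.2}. The substantive work is in (i) $\Rightarrow$ (ii), for which my strategy is to reduce to a known axiomatic characterization of the relative entropy via Lemma \ref{lem:link.rel.entr}---which in the notation of Section \ref{sec:egr.properties} I read as the identity $\Gamma(\bpi,\bR) = H\divg{\bpi}{\bpi \oplus_{\bpi} \bR}$---and then to handle the boundary of $\Delta_n$ by descent to a subsimplex.

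The first step is to extract num\'eraire invariance from \ref{A5} by specializing to $n = 1$: then $\bpi = (1)$, the composite $\bpi \circ \bp$ is just $\bp^1$, the first term $G((1),(a_1\langle \bp^1,\bR^1\rangle))$ vanishes by \ref{A4}, and we obtain $G(\bp^1,a_1\bR^1)=G(\bp^1,\bR^1)$. Hence $G$ is determined by its restriction to $\cA_n$, and it suffices to prove $G=c\Gamma$ on pairs in $\Delta_n^{\circ}\times \Delta_n^{\circ}$ before extending.

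Next, on $\Delta_n^{\circ}\times\Delta_n^{\circ}$ I would introduce the auxiliary family $F_n(\bpi,\bq) := G(\bpi,\bq/\bpi)$, where $\bq/\bpi := (q_i/\pi_i)_{i=1}^n$ is componentwise division; by Lemma \ref{lem:link.rel.entr}, the claim $G=c\Gamma$ becomes $F_n=cH\divg{\cdot}{\cdot}$. The translations of the four properties are clean: \ref{A1} and \ref{A3} transfer directly, \ref{A4} yields $F_n(\bpi,\bpi)=0$, and the classical relative-entropy chain rule
\begin{equation*}
F(\bpi \circ \bp,\bmu \circ \bq) = F(\bpi,\bmu) + \sum_{i} \pi_i F(\bp^i,\bq^i)
\end{equation*}
follows from \ref{A5} by choosing the conversion factors $a_i = \mu_i/\pi_i$ and the returns $R^i_j = q^i_j/p^i_j$; a brief check shows $\ba\llangle \bp,\bR\rrangle = \bmu/\bpi$ and $\ba\circ\bR = (\bmu\circ\bq)/(\bpi\circ\bp)$, so the two sides of \ref{A5} collapse to the two sides of the $F$-chain rule. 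A Hobson-type characterization of relative entropy (as catalogued in \cite{leinster2021entropy}) then yields $F_n = cH\divg{\cdot}{\cdot}$ on the open simplex, hence $G = c\Gamma$ on $\Delta_n^\circ \times \Delta_n^\circ$.

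The main obstacle, I expect, is the boundary analysis. The Hobson-type theorem is naturally stated on the open simplex, whereas \ref{A5} mixes distributions of varying support and \ref{A6} forces $G$ to ignore off-support returns. I would (i) use \ref{A6} to restrict $G(\bpi,\bR)$ with $\supp(\bpi)=S\subsetneq[n]$ to the subsimplex over $S$; (ii) invoke the interior conclusion there to get $G(\bpi,\bR)=c_{|S|}\Gamma(\bpi,\bR)$; and (iii) use a single instance of \ref{A5} bridging different sizes (for example embedding a $k$-atom portfolio as the first entry of a composite with $\bpi\in\Delta_2$ and $\bp^2\in\Delta_1$) to force the constants $c_{|S|}$ to be a common $c$ across all $|S|$. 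A secondary subtlety is that standard Hobson-type theorems use continuity rather than Lebesgue measurability, so I would invoke the measurable version of Cauchy's functional equation, as flagged in the introduction.
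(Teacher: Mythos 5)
Your overall architecture matches the paper's: reduce by num\'eraire invariance, transport $G$ to a candidate relative entropy via Lemma \ref{lem:link.rel.entr}, characterize it on $\Delta_n^\circ\times\Delta_n^\circ$, then extend to the boundary. Two of your steps are actually cleaner than the paper's. Your derivation of the relative-entropy chain rule on the open simplex (choosing $a_i=\mu_i/\pi_i$, $R^i_j=q^i_j/p^i_j$, so that $\langle\bp^i,\bR^i\rangle=1$) collapses \ref{A5} directly to $F(\bpi\circ\bp,\bmu\circ\bq)=F(\bpi,\bmu)+\sum_i\pi_iF(\bp^i,\bq^i)$; the paper instead proves a general-support version (Lemma \ref{lem:implied.conditions.G}) with a correction term $\bh_{\bmu}(\bnu)$ that only disappears when supports match. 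Your boundary handling also differs: you stay in the original coordinates and use \ref{A6} plus one instance of \ref{A5} with outer weight $(1,0)$ to descend to the subsimplex over $\supp(\bpi)$ (the outer term dies by \ref{A6} and \ref{A4}), whereas the paper reformulates in terms of $I\divg{\bp}{\bq}$ with $\bq$ allowed mass off $\supp(\bp)$, obtains a decomposition $I\divg{\bp}{\bq}=\varphi(m_{\bp}(\bq))+I_{|\supp(\bp)|}\divg{\cdot}{\cdot}$ (Lemma \ref{lem:decomposition.In}), and must separately prove $\varphi\equiv 0$ (Lemma \ref{lem:rep.Gn.supp.p}). Your route avoids the $\varphi$ issue entirely because \ref{A6} kills the off-support returns at the outset; note also that the constant $c$ is automatically uniform across dimensions since the relative-entropy characterization is a statement about the whole family, so your step (iii) of matching constants $c_{|S|}$ is not really needed.

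The genuine gap is the sentence ``A Hobson-type characterization of relative entropy (as catalogued in \cite{leinster2021entropy}) then yields $F_n=cH\divg{\cdot}{\cdot}$ on the open simplex.'' No such off-the-shelf result applies here, and this is precisely why the paper proves Proposition \ref{prop:Interior.Char.Rel.Entropy} from scratch in Appendix \ref{app:proof.of.int.char.rel.entr}. Leinster's characterization is stated on pairs with $\supp(\bp)\subset\supp(\bq)$ and its proof essentially evaluates the functional at points with zero coordinates, which your $F_n$ never sees; the classical Kannappan--Ng route likewise solves the four-variable functional equation \eqref{eq:B.function} on a domain that includes boundary points. Restricting to the open simplex forces one to solve \eqref{eq:B.function} on the strictly open triangular domain, which requires the later Ebanks--Kannappan--Ng result and a careful argument that the ``constants'' in the general solution do not depend on the second arguments. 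You correctly flag the continuity-versus-measurability issue (measurable Cauchy), but the domain issue is the harder one and is where most of the paper's technical effort in this characterization actually lives. As written, your proof outsources its central lemma to a theorem that does not exist in the required form.
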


We have seen in Section \ref{sec:egr.properties} that the family $(c\Gamma)_{n \geq 1}$ satisfies the properties in (i); nothing is changed by multiplying $\Gamma$ by a constant. Given \ref{A1}--\ref{A6}, both \ref{A4}  and \ref{A5} are necessary as shown by the following examples.

\begin{example} { \ } 
\begin{itemize}
\item[(i)] The functional $\check{\Gamma}(\bpi, \bR) = \mathrm{Var}_{\bpi}(\br)$ in Example \ref{eg:chain.rule} satisfies \ref{A1}--\ref{A4} but not \ref{A5}.
\item[(ii)] Consider $\hat{\Gamma}(\bpi, \bR) := c \Gamma(\bpi, \bR) + c H(\bpi)$, where $H$ is the Shannon entropy and $c \in \mathbb{R} \setminus \{0\}$. Using the chain rule of the Shannon entropy,
\[
H(\bpi \circ {\bf p}) = H(\bpi) + \sum_{i \in \supp(\bpi)} \pi_i H({\bf p}^i),
\]
we see that $\hat{\Gamma}$ satisfies all of \ref{A1}--\ref{A5} except \ref{A4}.
\end{itemize}
\end{example}

The proof of the implication (i) $\Rightarrow$ (ii) is quite delicate, and we will adopt the following strategy:
\begin{enumerate}
\item[1.] Use num\'{e}raire invariance (from \ref{A5} and \ref{A4}) to reduce $(G: \cD_n \rightarrow \R)_{n \geq 1}$ to an equivalent family $(\sG: \cA_n \rightarrow \R)_{n \geq 1}$ of functions that take simplex-valued arguments.
\item[2.] Using algebraic relations between the excess growth rate and the relative entropy (Lemma \ref{lem:link.rel.entr}), as well as an axiomatic characterization of the latter (Proposition \ref{prop:Interior.Char.Rel.Entropy}), derive the characterization when the domain of the reduced functions from Step 1 is restricted to $\Delta_n^\circ \times \Delta_n^\circ$.
\item[3.] Extend the characterization to the full domain.
\end{enumerate}

{\it Step 1.} Recall the set $\cA_n$ defined by \eqref{eqn:An}. Clearly, if $(G: \cD_n \rightarrow \R)_{n \geq 1}$ is num\'{e}raire invariant, it is characterized by its restriction to $\cA_n$ (see \eqref{eqn:egr.renormalize}). We cast Theorem \ref{thm:characterization.rel.entr} in the following equivalent form.

\begin{theorem} \label{thm:characterization.rel.entr.reduced}
Let $(\sG:\cA_n\to\R)_{n \geq 1}$ be a family of functions. The following are equivalent:
\begin{itemize}
\item[(i)] The family satisfies \ref{B1}--\ref{B4} where:
\begin{enumerate}[label=\textup{(B\arabic*)},leftmargin=*]
\item\label{B1} $\sG(\bpi,\br)$ is (jointly) Lebesgue measurable.
\item\label{B2}
$\sG(\bpi\sigma,\br\sigma)=\sG(\bpi,\br)$ for every permutation $\sigma$.
\item\label{B5} $\sG(\bpi,\br)=\sG(\bpi,\cC_{\bpi}[\br])$.
\item\label{B3}
$\sG(\bpi,\br)=0$ if $\cC_{\bpi}[\br]=\overline{\be}_{\bpi}$, where $\overline{\be}_{\bpi}$ is the {\it barycenter} on the support of $\bpi \in \Delta_n$ defined by $\overline{\be}_{\bpi} := \cC_{\bpi}[\barE_n]$.%
\item\label{B4}
For $(\bpi,\ba)\in \cA_n$, $k_i\ge1$, $(\bp^i,\br^i)\in \cA_{k_i}$, $\bp=(\bp^1,\dots,\bp^n)$ and $\br = (\br^1,\dots,\br^n)$, we have
\[
\sG(\bpi\circ \bp,\ba\circ\br)
=
\sG(\bpi, \cC_{\bpi}[\ba \llangle \bp, \br \rrangle])+\sum_{i=1}^n \pi_i\,\sG(\bp^i,\br^i).
\]
\end{enumerate}
\item [(ii)] $\sG = c\Gamma$ for some $c \in \R$.
\end{itemize}
\end{theorem}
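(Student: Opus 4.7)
The plan is to execute the three-step outline indicated right after the statement. Sufficiency was essentially verified in Section \ref{sec:egr.properties}, so the content is the reverse implication: any $\sG$ satisfying \ref{B1}--\ref{B4} must be of the form $c\Gamma$ for some $c\in\R$.

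\emph{Interior step.} A short computation gives the identity, valid for $\bpi,\bq\in\Delta_n^\circ$,
\[
\Gamma(\bpi,\bq\ominus\bpi)=H\divg{\bpi}{\bq},
\]
which we expect to be the content of Lemma \ref{lem:link.rel.entr}. This suggests the reparametrisation
\[
\sI_n(\bpi,\bq):=\sG_n(\bpi,\bq\ominus\bpi),\qquad \bpi,\bq\in\Delta_n^\circ,
\]
and I would verify that the family $(\sI_n)_{n\ge 1}$ inherits from $(\sG_n)$ each of the axioms used in Proposition \ref{prop:Interior.Char.Rel.Entropy} to characterise the relative entropy. Measurability and permutation invariance transfer directly; \ref{B3} gives $\sI(\bpi,\bpi)=0$; and \ref{B4} becomes the composite-distribution chain rule
\[
\sI(\bpi\circ\bp,\bpi'\circ\bp')=\sI(\bpi,\bpi')+\sum_i\pi_i\sI(\bp^i,\bp'^i)
\]
via the perturbation/composition identity
\[
(\bpi\circ\bp)\oplus(\ba\circ\bR)=\bigl(\bpi\oplus(\ba\llangle\bp,\bR\rrangle)\bigr)\circ\bigl(\bp^i\oplus\bR^i\bigr)_{i=1}^{n},
\]
which says that the composite of a perturbation equals the perturbation of the composite. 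The cited characterisation then forces $\sI_n=cH$ with a single $c\in\R$ independent of $n$, and pulling this back through the reparametrisation yields $\sG=c\Gamma$ on $\Delta_n^\circ\times\Delta_n^\circ$.

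\emph{Boundary step.} Given $(\bpi,\br)\in\cA_n$ with $S:=\supp(\bpi)\subsetneq[n]$, I would first invoke \ref{B5} to replace $\br$ by $\cC_\bpi[\br]$, whose support is contained in $S$, and then use \ref{B2} to assume without loss of generality that $S=\{1,\dots,|S|\}$. The goal is the expansibility identity
\[
\sG_n(\bpi,\cC_\bpi[\br])=\sG_{|S|}(\tilde\bpi,\tilde\br),
\]
where $\tilde\bpi,\tilde\br\in\Delta_{|S|}^\circ$ are the restrictions to $S$. I would derive this from \ref{B4} by feeding in the degenerate composite with outer dimension $2$, outer weights $\bpi^{(2)}=(1,0)$, block sizes $(|S|,n-|S|)$, and num\'{e}raire $\ba=(1,0)$: the $S$-block reconstitutes $(\tilde\bpi,\tilde\br)$, the $S^c$-block drops out of $\sum_i\pi_i^{(2)}\sG(\bp^i,\br^i)$ because $\pi_2^{(2)}=0$, and the outer term in \ref{B4} vanishes by \ref{B3} because its argument collapses to $\overline{\be}_{(1,0)}$. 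Applying this identity to $\sG$ and, separately, to $\Gamma$ (which itself satisfies \ref{B1}--\ref{B4}) and invoking the interior step finishes the proof.

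\emph{Main obstacle.} The boundary analysis is the delicate part, as advertised. Expansibility is not itself an axiom and must be teased out of \ref{B4} in the presence of zero components; the support constraint $(\bp^i,\br^i)\in\cA_{k_i}$ built into \ref{B4} restricts which degenerate configurations can be inserted, so some care is needed in arranging the inputs so that the extraneous blocks genuinely disappear via \ref{B3}. A second, subtler point is that the constant $c$ produced by the interior characterisation must a priori be shown to be independent of dimension: this is not visible from \ref{B1}--\ref{B3} alone and is where the cross-dimensional coupling inherent to \ref{B4} (through the composite distribution $\bpi\circ\bp$) is indispensable. Measurability \ref{B1} enters through the classical characterisation of relative entropy to rule out pathological non-measurable solutions of the underlying functional equation.
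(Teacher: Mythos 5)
Your proposal is correct, and its overall architecture coincides with the paper's: reparametrize $\sI(\bpi,\bq):=\sG(\bpi,\bq\ominus\bpi)$ on $\Delta_n^\circ\times\Delta_n^\circ$, verify \ref{C1}--\ref{C4}, invoke Proposition \ref{prop:Interior.Char.Rel.Entropy} to obtain a single dimension-independent $c$, and then extend to the boundary. The interior step is essentially Lemmas \ref{lem:implied.conditions.G} and \ref{rem:interior.char}; one small caveat is that in passing from \ref{B4} to \ref{C4} the outer factor of the composite decomposition of $(\bq\circ\bq')\ominus(\bpi\circ\bp)$ is not $\bq\ominus\bpi$ itself but $\rho_i\propto (q_i/\pi_i)Z_i$ with $Z_i=\sum_j q'^i_j/p^i_j$, and it is only after the contraction $\llangle\bp,\cdot\rrangle$ that $\bq\ominus\bpi$ reappears — your sketch elides this bookkeeping, but on the interior it works out exactly as in the paper. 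Where you genuinely diverge is the boundary step. The paper works with $I\divg{\bp}{\bq}$ for general $(\bp,\bq)\in\cA_n$, where $\bq$ may carry mass off $\supp(\bp)$, and therefore needs the two-lemma detour (Lemmas \ref{lem:decomposition.In} and \ref{lem:rep.Gn.supp.p}) introducing an auxiliary function $\varphi(m_\bp(\bq))$ and showing it vanishes, invoking \ref{B5} only at the very end. You instead apply \ref{B5} up front, so the second argument is already supported on $S=\supp(\bpi)$, and a single degenerate instance of \ref{B4} with outer weights $(1,0)$ and $\ba=(1,0)$ yields the expansibility identity: the membership conditions $((1,0),(1,0))\in\cA_2$ and $(\tilde\bpi,\Pi_S[\cC_\bpi[\br]])\in\cA_{|S|}$ hold, the $S^c$-block may be filled arbitrarily since its weight is zero, and the outer term is $\sG_2((1,0),\overline{\be}_{(1,0)})=0$ by \ref{B3} since $\overline{\be}_{(1,0)}=(1,0)$. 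Applying the same identity to $\Gamma$ then closes the argument. Your route is shorter and bypasses the $\varphi$-analysis entirely; the paper's route buys a standalone structural decomposition (valid for any family satisfying \ref{C1}--\ref{C3} and \ref{B4'}) that isolates precisely which axiom forces the off-support contribution to vanish.
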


It is easy to see that Theorem \ref{thm:characterization.rel.entr} and Theorem \ref{thm:characterization.rel.entr.reduced} are equivalent. Given a family $(G : \cD_n \rightarrow \R)_{n \geq 1}$ that satisfies \ref{A1}--\ref{A5}, let $\sG$ be the restriction of $G$ to $\cA_n$ (for each $n \geq 1$). For this choice, it can be easily verified that \ref{B1}--\ref{B4} hold. Similarly, if $(\sG: \cA_n \rightarrow \R)_{n \geq 1}$ satisfies \ref{B1}--\ref{B4}, we can define $G(\bpi,\bR) = \sG(\bpi,\cC_{\bpi}[\bR])$ and recover \ref{A1}--\ref{A5}.

\medskip

{\it Step 2.} We use the following link between the excess growth rate and the relative entropy. It is a slight extension of \cite[Lemma 2]{pal2020multiplicative}.

\begin{lemma}[Excess growth rate as relative entropy]\label{lem:link.rel.entr}
For $(\bpi, \br) \in \cA_n$ we have
\begin{equation} \label{eqn:egr.relative.entropy}
\Gamma(\bpi, \br) = H\divg{ \bpi}{ \bpi \oplus_{\bpi} \br} \quad \text{and} \quad  \Gamma(\bpi, \br\ominus_{\bpi} \bpi) = H\divg*{ \bpi }{ \mathcal{C}_{\bpi}[\br]}.
\end{equation}
\end{lemma}
\begin{proof}
Write $\bpi \oplus_{\bpi} \br = (\frac{1}{Z} \pi_i r_i)_{1 \leq i \leq n}$, where $Z = \sum_{j\in\supp(\bpi)}\pi_jr_j$ is a normalizing constant. We verify the first identity by a direct computation:
\begin{align*}
H\divg{\bpi}{\bpi \oplus_{\bpi} \br} &= \sum_{i\in\supp(\bpi)} \pi_i \log \left( \frac{ \pi_i }{ \frac{1}{Z} \pi_i r_i} \right) \\
&= \log Z \sum_{i\in\supp(\bpi)} \pi_i - \sum_{i\in\supp(\bpi)} \pi_i \log r_i \\
&= \log \left( \sum_{i\in\supp(\bpi)} \pi_i r_i \right) - \sum_{i\in\supp(\bpi)} \pi_i \log r_i\\
&= \Gamma(\bpi, \br).
\end{align*}
To prove the second identity, write $\tilde{Z}:=\sum_{i\in\supp(\bpi)}r_i/\pi_i$. We have
\begin{align*}
\Gamma(\bpi, \br \ominus_{\bpi} \bpi) &= \log \left( \sum_{i\in\supp(\bpi)} \pi_i\frac{r_i/\pi_i}{\tilde{Z}} \right) - \sum_{i\in\supp(\bpi)} \pi_i \log\left(\frac{r_i/\pi_i}{\tilde{Z}}\right)\\
&= \log \left( \sum_{i\in\supp(\bpi)} r_i \right) - \sum_{i\in\supp(\bpi)} \pi_i \log\left(\frac{r_i}{\pi_i}\right) \\
&= \sum_{i\in\supp(\bpi)} \pi_i \log\left(\frac{\pi_i}{r_i/\sum_{j\in\supp(\bpi)}r_j}\right) \\
&=H \divg{\bpi }{ \mathcal{C}_{\bpi}[\br]}. \qedhere 
\end{align*}
\end{proof}

Lemma \ref{lem:link.rel.entr} suggests that if a family $(\sG: \cA_n \rightarrow \R)_{n \geq 1}$ satisfies \ref{B1}--\ref{B4}, then $I\divg{\bp}{\bq} := \sG(\bp, \bq \ominus_{\bp} \bp)$, defined for $(\bp, \bq) \in \cA_n$ and $n \geq 1$, is essentially a constant multiple of the relative entropy. To this end, we will make use of the following characterization of the relative entropy on the \emph{interior} of the simplex. It is a slight variant of existing characterizations of relative entropy (see~\cite[Section 3.5]{leinster2021entropy}). Since it differs slightly in its domain and aspects of its assumptions, we provide a proof and a technical discussion in Appendix \ref{app:proof.of.int.char.rel.entr}.

\begin{proposition}[Characterization of relative entropy on the open simplex]\label{prop:Interior.Char.Rel.Entropy}
Let $(I\divg{\cdot}{\cdot}:\Delta_n^\circ\times\Delta_n^\circ\to \R)_{n \geq 1}$ be a family of functions. The following are equivalent:
\begin{itemize}
\item[(i)] The family satisfies \ref{C1}--\ref{C4} where:
    \begin{enumerate}                   [label=\textup{(C\arabic*)},leftmargin=*]
    \item\label{C1} $I\divg{\cdot}{\cdot}$ is separately Lebesgue measurable: for each fixed $\bp$,
    the map $\bq\mapsto I\divg{\bp}{\bq}$ is Lebesgue measurable and for each fixed $\bq$ the map $\bp\mapsto I\divg{\bp}{\bq}$ is Lebesgue measurable.
    \item\label{C2}
    $I\divg{\bp\sigma}{\bq\sigma}=I\divg{\bp}{\bq}$ for every permutation $\sigma$.
    \item\label{C3}
    $I\divg{\bp}{\bp}=0$ for all $\bp\in\Delta_n^\circ$;
    \item\label{C4} 
    For $(\bp,\bq)\in \Delta_n^\circ\times \Delta_n^\circ$, and $(\bmu^i,\bnu^i)\in \Delta_{k_i}^\circ\times \Delta_{k_i}^\circ$ for $k_i\ge1$, $i=1,\dots,n$, the following chain rule holds:
    \begin{align} \label{eqn:relative.entropy.chain.rule}
    I\divg{\bp \circ \bmu}{\bq\circ\bnu}
    =
    I\divg{\bp}{\bq}+\sum_{i=1}^n p_iI \divg{\bmu^i}{\bnu^i}.
    \end{align}
    \end{enumerate}   
\item[(ii)] $I\divg{\cdot}{\cdot} = c H\divg{\cdot}{\cdot}$ for some $c \in \R$.
\end{itemize} 
\end{proposition}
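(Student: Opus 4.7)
The forward direction is a routine verification: the relative entropy on $\Delta_n^\circ\times\Delta_n^\circ$ is jointly continuous (hence measurable), permutation invariant, vanishes on the diagonal, and satisfies \eqref{eqn:relative.entropy.chain.rule} by direct computation. I focus on the converse.

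The plan is to reduce everything to the binary case. Set $f(p,q):=I_2\divg{(p,1-p)}{(q,1-q)}$ for $(p,q)\in(0,1)^2$; by \ref{C3}, $f(p,p)=0$, and by \ref{C2}, $f(p,q)=f(1-p,1-q)$. To extract a functional equation for $f$, I apply the chain rule \ref{C4} to $I_3\divg{(a,b,c)}{(x,y,z)}$ in two different ways: first by merging the first two coordinates (outer distributions $(a+b,c)$ and $(x+y,z)$), then by merging the last two (outer distributions $(a,b+c)$ and $(x,y+z)$). Equating the two expressions gives
\[
f(a+b,\,x+y)+(a+b)\,f\!\left(\tfrac{a}{a+b},\tfrac{x}{x+y}\right)=f(a,\,x)+(b+c)\,f\!\left(\tfrac{b}{b+c},\tfrac{y}{y+z}\right),
\]
valid whenever $a,b,c,x,y,z>0$ with $a+b+c=x+y+z=1$. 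This is the relative-entropy analogue of Faddeev's \emph{fundamental equation of information}.

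The next step is to solve this functional equation using only the measurability hypothesis \ref{C1}. Strategic substitutions---e.g., fixing the ratios $a/(a+b)$ and $x/(x+y)$ while varying $a+b$ and $x+y$, and exploiting $f(p,p)=0$ to eliminate terms in extremal configurations---decouple the two arguments and reduce the problem to a Cauchy-type equation of the form $g(uv)=g(u)+g(v)$. By the classical result that Lebesgue-measurable solutions of Cauchy's equation are linear, this forces
\[
f(p,q)=c\Bigl[p\log\tfrac{p}{q}+(1-p)\log\tfrac{1-p}{1-q}\Bigr]
\]
for some $c\in\R$ independent of $p$ and $q$. To extend to $n\ge3$ I iterate the chain rule to peel off one coordinate at a time:
\[
I_n\divg{\bp}{\bq}=f(p_1,q_1)+(1-p_1)\,I_{n-1}\divg{\widetilde{\bp}}{\widetilde{\bq}},
\]
where $\widetilde{\bp}=(p_2,\ldots,p_n)/(1-p_1)$ and $\widetilde{\bq}=(q_2,\ldots,q_n)/(1-q_1)$. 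Since $cH\divg{\cdot}{\cdot}$ satisfies the identical recursion with the same binary starting point, induction on $n$ gives $I_n=cH_n$ on $\Delta_n^\circ\times\Delta_n^\circ$, with the constant $c$ inherited from the binary case.

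The main obstacle is the functional-equation step. The two-argument equation above is genuinely more delicate than its single-argument counterpart in Shannon entropy characterizations because $p$ and $q$ vary independently in $(0,1)$, and the substitutions that separate them must remain within the open simplex at every intermediate stage. Under continuity this is standard; under only separate Lebesgue measurability of $f$ in each argument, the reduction to a Cauchy equation on an open subset of $\R$ requires careful bookkeeping---which is presumably what warrants the deferred technical appendix.
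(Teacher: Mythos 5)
Your skeleton matches the paper's: reduce to the binary function $f(p,q)=I_2\divg{(p,1-p)}{(q,1-q)}$, extract a two-variable functional equation by decomposing $I_3$ in two ways via \ref{C4}, solve it using measurability, and propagate to $n\ge 3$ by the peel-off recursion $I_n\divg{\bp}{\bq}=f(p_1,q_1)+(1-p_1)I_{n-1}\divg{\widetilde{\bp}}{\widetilde{\bq}}$, which is exactly the paper's Lemma on the recursion. Your functional equation (obtained by merging blocks rather than by peeling off two different coordinates) is a correct and equivalent variant of the one the paper derives.

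The gap is in the step you describe as ``strategic substitutions \ldots reduce the problem to a Cauchy-type equation.'' This is the entire analytical content of the proof, and it is not a routine reduction. The general separately measurable solution of the fundamental equation you wrote down is \emph{not} a priori a multiple of binary relative entropy: freezing the second pair of variables turns it into a four-unknown-function equation $f(x)+(1-x)g(u/(1-x))=h(u)+(1-u)k(x/(1-u))$ on the open triangle, whose general solution (due to Ebanks, Kannappan and Ng) involves an arbitrary solution $L$ of the logarithmic equation plus affine terms whose coefficients depend on the frozen variables. One must then (a) show these coefficients are in fact independent of the frozen variables in the right way, reducing to $B(x,y)=-c\,\mathcal{E}(x)+x\ell(y)+(1-x)\ell(1-y)$ with $\ell$ logarithmic; (b) invoke measurability in the \emph{first} argument to get $L=c\log$, and separately invoke measurability in the \emph{second} argument (e.g.\ via a pointwise limit $x_n\uparrow 1$ of measurable slices) to get $\ell=c'\log$; and (c) use $B(x,x)=0$ to force $c'=c$. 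None of this is supplied by your sketch, and your appeal to ``extremal configurations'' is specifically hazardous here: the domain is the \emph{open} simplex, so the boundary evaluations used in the classical Kannappan--Ng treatment (e.g.\ setting a coordinate to $0$) are unavailable --- this is precisely why the paper cannot quote the standard closed-domain characterizations and must work with the open-domain version of the functional-equation theorem. As written, your argument asserts the conclusion of the hardest lemma rather than proving it.
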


With this characterization in mind, we establish a link between \ref{B1}--\ref{B4} and \ref{C1}--\ref{C4} through the function $\sG(\bp, \bq\ominus_{\bp} \bp)$. For $(\bpi, \br) \in \cA_n$, we let
\[
m_{\bpi}(\br):=\sum_{i\in\supp(\bpi)}r_i>0
\]
for the mass that $\br$ puts on the support of $\bpi$.

\begin{lemma}\label{lem:implied.conditions.G}
Suppose that $(\sG: \cA_n \to \mathbb{R})_{n \geq 1}$ satisfies \ref{B1}--\ref{B4}. Define $(I\divg{\cdot}{\cdot}: \cA_n \rightarrow \R)_{n \geq 1}$ by $I\divg{\bp}{\bq} = \sG(\bp, \bq\ominus_{\bp} \bp)$. Then the family $(I\divg{\cdot}{\cdot}: \cA_n \rightarrow \R)_{n \geq 1}$ satisfies \ref{C1}--\ref{C3} with $\Delta_n^{\circ} \times \Delta_n^{\circ}$ replaced by $\cA_n$ and the following version of the \emph{chain rule}:
\begin{enumerate}                   [label=\textup{(C\arabic*$'$)},leftmargin=*]
\setcounter{enumi}{3}
    \item\label{B4'} For $(\bp,\bq)\in \cA_n$, $k_i\ge1$, $(\bmu^i,\bnu^i)\in \cA_{k_i}$, $\bmu=(\bmu^1,\dots,\bmu^n)$, and $\bnu = (\bnu^1,\dots,\bnu^n)$, we have
    \begin{equation}\label{eq:modified.chain.rule.I}
    I\divg{\bp \circ \bmu}{ \bq\circ\bnu} =\sG(\bp,(\bq\ominus_{\bp} \bp) \oplus_{\bp}  \bh_{\bmu}(\bnu))+\sum_{i=1}^n p_iI\divg{\bmu^i}{\bnu^i},
    \end{equation}
    where
    \[
    \bh_{\bmu}(\bnu) := \left(\frac{m_{\bmu^i}(\bnu^i)}{\sum_{j=1}^nm_{\bmu^j}(\bnu^j)}\right)_{1 \leq i \leq n}\in\Delta_n^\circ.
    \]
    In particular, when $\bmu$ and $\bnu$ are chosen so that $\supp(\bmu^i)=\supp(\bnu^i)$ for all $i=1,\dots,n$, then \eqref{eq:modified.chain.rule.I} reduces to \eqref{eqn:relative.entropy.chain.rule}.
\end{enumerate}
\end{lemma}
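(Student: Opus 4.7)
The plan is to transfer properties \ref{C1}--\ref{C3} directly from the corresponding properties of $\sG$, and then deduce the modified chain rule \ref{B4'} from the general chain rule \ref{B4} by writing $(\bq\circ\bnu)\ominus_{\bp\circ\bmu}(\bp\circ\bmu)$ in the product form $\ba\circ\br$ required by \ref{B4}. The first three are essentially routine. For \ref{C1}, I would note that on each of the finitely many support strata of $\cA_n$ the map $(\bp,\bq)\mapsto\bq\ominus_{\bp}\bp$ is continuous, so joint measurability of $\sG$ (property \ref{B1}) gives measurability of the composition $I$. Property \ref{C2} follows from the identity $(\bq\sigma)\ominus_{\bp\sigma}(\bp\sigma)=(\bq\ominus_{\bp}\bp)\sigma$, verified by unwinding the definitions, combined with \ref{B2}. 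For \ref{C3}, a direct calculation shows $\bp\ominus_{\bp}\bp=\barE_{\bp}$, whence $\cC_{\bp}[\bp\ominus_{\bp}\bp]=\barE_{\bp}$ and $I\divg{\bp}{\bp}=0$ by \ref{B3}.

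The substantive step is \ref{B4'}. Expanding the definition of $\ominus_{\bp\circ\bmu}$ gives, for $(i,j)\in\supp(\bp\circ\bmu)$,
\[
\bigl((\bq\circ\bnu)\ominus_{\bp\circ\bmu}(\bp\circ\bmu)\bigr)_{(i,j)}=\frac{q_i\nu_j^i/(p_i\mu_j^i)}{N},\qquad N:=\sum_{i\in\supp(\bp)}\sum_{j\in\supp(\bmu^i)}\frac{q_i\nu_j^i}{p_i\mu_j^i}.
\]
The natural factorization is $a_ir_j^i$ with
\[
\br^i:=\bnu^i\ominus_{\bmu^i}\bmu^i,\qquad Z_i:=\sum_{l\in\supp(\bmu^i)}\nu_l^i/\mu_l^i,\qquad a_i:=\frac{q_iZ_i}{p_iN}.
\]
A short check confirms that $\ba\in\Delta_n$ with $\supp(\ba)=\supp(\bp)$ and that $(\bmu^i,\br^i)\in\cA_{k_i}$ for each $i$, so \ref{B4} applies and yields
\[
\sG(\bp\circ\bmu,\ba\circ\br)=\sG(\bp,\cC_{\bp}[\ba\llangle\bmu,\br\rrangle])+\sum_{i=1}^np_i\sG(\bmu^i,\br^i),
\]
and by definition the sum on the right equals $\sum_{i=1}^np_iI\divg{\bmu^i}{\bnu^i}$.

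The main obstacle is the bookkeeping check that $\cC_{\bp}[\ba\llangle\bmu,\br\rrangle]=(\bq\ominus_{\bp}\bp)\oplus_{\bp}\bh_{\bmu}(\bnu)$. Using $\langle\bmu^i,\br^i\rangle=m_i/Z_i$ with $m_i:=m_{\bmu^i}(\bnu^i)$, both sides simplify to the vector on $\supp(\bp)$ with $i$-th coordinate proportional to $q_im_i/p_i$, so the normalizations $N$, $Z_i$, and the one appearing in $\bq\ominus_{\bp}\bp$ all cancel. Finally, when $\supp(\bmu^i)=\supp(\bnu^i)$ for each $i$, one has $m_i=\sum_{l\in\supp(\bmu^i)}\nu_l^i=1$, hence $\bh_{\bmu}(\bnu)=\barE_n$; since $\oplus_{\bp}$-perturbation by $\barE_n$ acts trivially on $\cA_n(\bp\mid\cdot)$, the first term collapses to $\sG(\bp,\bq\ominus_{\bp}\bp)=I\divg{\bp}{\bq}$, recovering \eqref{eqn:relative.entropy.chain.rule}.
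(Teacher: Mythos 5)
Your proposal is correct and follows essentially the same route as the paper: the transfer of \ref{C1}--\ref{C3} is handled identically, and for \ref{B4'} you use the same decomposition of $(\bq\circ\bnu)\ominus_{\bp\circ\bmu}(\bp\circ\bmu)$ into a composite distribution $\ba\circ\br$ with $\br^i=\bnu^i\ominus_{\bmu^i}\bmu^i$, apply \ref{B4}, and verify $\cC_{\bp}[\ba\llangle\bmu,\br\rrangle]=(\bq\ominus_{\bp}\bp)\oplus_{\bp}\bh_{\bmu}(\bnu)$ by the same proportionality argument. No gaps.
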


\begin{proof} We treat each property in turn.\footnote{Note that \ref{B5} is not used here but will be needed in Step 3 below.}
 
    \ref{C1} (on $\cA_n$ and similarly below): This follows immediately from the joint measurability asserted in \ref{B1} and composition with the measurable operation $\ominus_{\bp}$.
    
    \ref{C2}: It is easy to check that $(\bq\sigma\ominus_{\bp} \bp\sigma) = (\bq\ominus_{\bp} \bp)\sigma$ for any $(\bp, \bq) \in \cA_n$ and permutation $\sigma$. 
Using this with \ref{B2} we obtain
\[
I\divg{\bp\sigma}{ \bq\sigma}
=\sG\bigl(\bp\sigma,(\bq\sigma\ominus_{\bp} \bp\sigma)\bigr)
=\sG\bigl(\bp\sigma,(\bq\ominus_{\bp} \bp)\sigma\bigr)
=\sG\bigl(\bp,(\bq\ominus_{\bp} \bp)\bigr)
=I\divg{\bp}{\bq}.
\]

\ref{C3}: Observe that $\bp \ominus_{\bp} \bp = \overline{\be}_{\bp}$. Therefore, by \ref{B3} we have
\[
I\divg{\bp}{\bp} = \sG (\bp , \bp\ominus_{\bp} \bp) = \sG(\bp, \overline{\be}_{\bp}) = 0.
\]

\ref{B4'}: Consider
\[
I\divg{\bp \circ \bmu}{ \bq \circ \bnu} = \sG(\bp \circ \bmu, (\bq \circ \bnu)\ominus_{\bp \circ \bmu} (\bp \circ \bmu)).
\]
In order to invoke the chain rule \ref{B4}, we express $(\bq \circ \bnu)\ominus_{\bp \circ \bmu} (\bp \circ \bmu)$ as a composite distribution. Let $Z$ be the normalizing constant in the definition of $(\bq \circ \bnu)\ominus_{\bp \circ \bmu}  (\bp \circ \bmu)$:
\begin{align*}
Z =\sum_{(i, j) \in \supp(\bp \circ \bmu)} \frac{(\bq \circ \bnu)_{i, j}}{(\bp \circ \bmu)_{i, j}} 
=\sum_{i \in \supp(\bp)} \frac{q_i}{p_i} \sum_{j \in \supp(\bmu^i)} \frac{\nu^i_j}{ \mu^i_j} = \sum_{i \in \supp(\bp)} \frac{q_i}{p_i} Z_i,
\end{align*}
where $Z_i = \sum_{j \in \supp(\bmu^i)} \nu_j^i / \mu_j^i$ is strictly positive since $(\bmu^i,\bnu^i)\in \cA_{k_i}$.

Write
\begin{align*}
\left((\bq \circ \bnu)\ominus_{\bp \circ \bmu}  (\bp \circ \bmu)\right)_{i,j}
&= \frac{1}{Z} \frac{q_i\nu_j^i}{p_i \mu_j^i} \mathds{1}_{\supp(\bp)}(i) \mathds{1}_{\supp(\bmu^i)}(j) \\
&= \left( \frac{\frac{q_i}{p_i}Z_i}{Z} \mathds{1}_{\supp(\bp)}(i) \right) \cdot \left( \frac{\nu_j^i/\mu_j^i}{Z_i} \mathds{1}_{\supp(\bmu^i)}(j) \right) \\
&= (\brho \circ \bxi)_{i, j},
\end{align*}
where $\brho = (\rho_1, \ldots, \rho_n) \in \cA_n(\bp \mid\cdot)$ with
\[
\rho_i = \frac{\frac{q_i}{p_i}Z_i}{Z} \mathds{1}_{\supp(\bp)}(i), \quad i = 1, \ldots, n,
\]
and $\bxi = (\bxi^1, \ldots, \bxi^n)$ with
\[
\bxi^i = \bnu^i \ominus_{\bmu^i} \bmu^i \in \cA_{k_i}^{\bmu^i}, \quad i = 1, \ldots, n.
\]

Therefore, we may apply \ref{B4} to obtain,
\begin{align*}
    I\divg{\bp \circ \bmu}{ \bq \circ \bnu} &= \sG(\bp \circ \bmu, (\bq \circ \bnu)\ominus_{\bp \circ \bmu}  (\bp \circ \bmu))\\
    &= \sG(\bp \circ \bmu, \brho \circ \bxi)\\
    &= \sG(\bp, \cC[\brho \llangle \bmu, \bxi \rrangle])  + \sum_{i=1}^n p_i \sG(\bmu^i,\bxi^i)\\
    &= \sG(\bp, \cC[\brho \llangle \bmu, \bxi \rrangle])  + \sum_{i=1}^n p_i I\divg{\bmu^i}{\bnu^i},
\end{align*}
where the last equality follows from the definitions of $\bxi^i$ and $I_{k_i}$:
\[
\sG (\bmu^i, \bxi^i) = \sG(\bmu^i, \bnu^i \ominus_{\bmu^i} \bmu^i) = I\divg{\bmu^i}{\bnu^i}.
\]

It remains to show that $\cC[\brho \llangle \mu, \bxi \rrangle] = (\bq\ominus_{\bp} \bp) \oplus_{\bp} \bh_{\bmu}(\bnu)$. To simplify the notation, for non-zero $\bx, \by \in [0, \infty)^n$ we write $\bx \propto \by$ if $\by = c \bx$ for some $c > 0$. Clearly, $\bx, \by \in \Delta_n$ are equal if and only if $\bx \propto \by$. Since
\begin{equation*}
\begin{split}
\cC[\brho \llangle \mu, \bxi \rrangle] &\propto \left( \rho_i \langle \bmu^i, \bxi^i \rangle \right)_i \\
&\propto \left(  \frac{q_i}{p_i} Z_i \mathds{1}_{\supp(\bp)}(i) \left( \sum_{j = 1}^{k_i} \mu_j^i \xi_j^i \right) \right)_i \\
&= \left( \frac{q_i}{p_i}  \mathds{1}_{\supp(\bp)}(i) \sum_{j = 1}^{k_i} \mu_j^i \frac{(\nu_j^i/\mu_j^i) \mathds{1}_{\supp(\bmu^i)}(j)}{\sum_{\ell = 1}^{k_i} (\nu_{\ell}^i/\mu_{\ell}^i) \mathds{1}_{\supp(\bmu^i)}(\ell)} \right)_i \\
&= \left( \frac{q_i}{p_i} \mathds{1}_{\supp(\bp)}(i) m_{\bmu^i}(\bnu^i) \right)_i \\
&\propto (\bq \ominus_{\bp} \bp) \oplus_{\bp} \bh_{\bmu}(\bnu),
\end{split}
\end{equation*}
the claim is proved and we have the chain rule in \eqref{eq:modified.chain.rule.I}. Finally, note that if $\supp(\bmu^i)=\supp(\bnu^i)$ then $\bh_{\bmu}(\bnu)=\barE_n$. Hence \eqref{eq:modified.chain.rule.I} reduces to \eqref{eqn:relative.entropy.chain.rule}.  
\end{proof}

\begin{lemma}[Characterization on $\Delta_n^{\circ} \times \Delta_n^{\circ}$] \label{rem:interior.char}
Theorem \ref{thm:characterization.rel.entr.reduced} holds if the domain $\cA_n$ of $\sG$ and \ref{B1}--\ref{B4} is replaced by $\Delta_n^{\circ} \times \Delta_n^{\circ}$.\footnote{If $(\bpi, \br) \in \Delta_n^{\circ} \times \Delta_n^{\circ}$, then $\cC_{\bpi}[\br] = \br$. Thus \ref{B5} holds automatically and may be removed.} 
\end{lemma}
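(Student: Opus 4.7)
The plan is to mirror Step 2 of the outline preceding Theorem~\ref{thm:characterization.rel.entr.reduced}, exploiting the fact that on the open simplex every closure $\cC_{\bpi}[\cdot]$ acts as the identity and every $\ominus_{\bpi}$ coincides with the Aitchison subtraction $\ominus$. Given a family $(\sG:\Delta_n^\circ\times\Delta_n^\circ\to\R)_{n\ge 1}$ satisfying the open-simplex versions of \ref{B1}--\ref{B4}, I would introduce the auxiliary family
\[
I\divg{\bp}{\bq} := \sG(\bp,\,\bq\ominus\bp), \qquad (\bp,\bq)\in\Delta_n^\circ\times\Delta_n^\circ,
\]
and show that it satisfies \ref{C1}--\ref{C4}. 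Proposition~\ref{prop:Interior.Char.Rel.Entropy} will then yield $I=cH$, and it will remain to invert the substitution $\br=\bq\ominus\bpi$ and apply Lemma~\ref{lem:link.rel.entr} to recover $\sG=c\Gamma$.

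Verifying \ref{C1}--\ref{C3} for $I$ is immediate: \ref{C1} follows from \ref{B1} together with the smoothness of the map $(\bp,\bq)\mapsto\bq\ominus\bp$; \ref{C2} uses \ref{B2} and the equivariance $(\bq\sigma)\ominus(\bp\sigma)=(\bq\ominus\bp)\sigma$; and $I\divg{\bp}{\bp}=\sG(\bp,\bp\ominus\bp)=\sG(\bp,\barE_n)=0$ by \ref{B3}, since $\bp\ominus\bp=\barE_n=\overline{\be}_\bp$ in the full-support setting.

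The chain rule \ref{C4} is the main obstacle, but it is precisely the bookkeeping already carried out in Lemma~\ref{lem:implied.conditions.G}, specialized to vectors of full support. Following that argument, one rewrites $(\bq\circ\bnu)\ominus(\bp\circ\bmu)$ as a composite distribution $\brho\circ\bxi$ with $\bxi^i=\bnu^i\ominus\bmu^i$ and $\brho\propto (q_i Z_i/p_i)_i$, where $Z_i=\sum_j \nu^i_j/\mu^i_j$; \ref{B4} may then be applied since $\bp\circ\bmu$ and $\brho\circ\bxi$ lie in $\Delta_{k_1+\cdots+k_n}^\circ$. In the open-simplex regime, however, $m_{\bmu^i}(\bnu^i)=1$ for every $i$, so the auxiliary vector $\bh_{\bmu}(\bnu)$ from \eqref{eq:modified.chain.rule.I} equals the barycenter $\barE_n$, the correction $(\bq\ominus\bp)\oplus\barE_n$ collapses to $\bq\ominus\bp$, and \eqref{eq:modified.chain.rule.I} reduces to the standard chain rule \eqref{eqn:relative.entropy.chain.rule}. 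I would phrase this as a direct corollary of Lemma~\ref{lem:implied.conditions.G}, observing that the ``$\supp(\bmu^i)=\supp(\bnu^i)$'' proviso there is automatic on $\Delta_{k_i}^\circ\times\Delta_{k_i}^\circ$.

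Finally, Proposition~\ref{prop:Interior.Char.Rel.Entropy} delivers $I\divg{\bp}{\bq}=cH\divg{\bp}{\bq}$ for some $c\in\R$. Because $(\Delta_n^\circ,\oplus)$ is an abelian group, the translation $\br\mapsto\bpi\oplus\br$ is a bijection of $\Delta_n^\circ$ with inverse $\bq\mapsto\bq\ominus\bpi$. Hence, for an arbitrary $(\bpi,\br)\in\Delta_n^\circ\times\Delta_n^\circ$, setting $\bq:=\bpi\oplus\br$ gives
\[
\sG(\bpi,\br)=\sG(\bpi,\bq\ominus\bpi)=I\divg{\bpi}{\bq}=cH\divg{\bpi}{\bpi\oplus\br}=c\,\Gamma(\bpi,\br),
\]
where the last equality is \eqref{eqn:egr.relative.entropy} of Lemma~\ref{lem:link.rel.entr}. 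This proves the open-simplex version of Theorem~\ref{thm:characterization.rel.entr.reduced}, and the main technical work is contained entirely in the chain-rule verification above.
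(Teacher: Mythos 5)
Your proposal is correct and takes essentially the same route as the paper: define $I\divg{\bp}{\bq}:=\sG(\bp,\bq\ominus\bp)$, invoke Lemma~\ref{lem:implied.conditions.G} (noting that the support proviso in \ref{B4'} is automatic on $\Delta_n^\circ$, so the chain rule collapses to \ref{C4}), apply Proposition~\ref{prop:Interior.Char.Rel.Entropy} to get $I=cH$, and invert via Lemma~\ref{lem:link.rel.entr}. The paper's proof is simply more terse, citing Lemma~\ref{lem:implied.conditions.G} directly rather than rehearsing the verification of \ref{C1}--\ref{C4} as you do.
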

\begin{proof}
We only need to show that (i) implies (ii). Given a family $(\sG : \Delta_n^{\circ} \times \Delta_n^{\circ} \rightarrow \R)_{n \geq 1}$ that satisfies \ref{B1}--\ref{B4}, define  $I\divg{\bp}{\bq} = \sG(\bp,\bq\ominus\bp)$ for $(\bp, \bq) \in \Delta_n^\circ\times \Delta_n^\circ$. From Lemma \ref{lem:implied.conditions.G}, $(I: \Delta_n \times \Delta_n \rightarrow \R)_{n \geq 1}$ satisfies \ref{C1}--\ref{C4}. By Proposition \ref{prop:Interior.Char.Rel.Entropy}, there exists $c\in\R$ such that for all $n \geq 1$ we have
\[
\sG(\bp, \bq\ominus\bp) = c H\divg{\bp}{\bq}, \quad (\bp,\bq)\in\Delta_n^\circ\times\Delta_n^\circ.
\]
Since $\ominus$ is invertible on $\Delta_n^\circ$, we get (by Lemma \ref{lem:link.rel.entr})
\[
\sG(\bp, \bq) = c H\divg{\bp}{\bp\oplus\bq} = c\Gamma(\bp,\bq), \quad (\bp,\bq)\in\Delta_n^\circ\times\Delta_n^\circ.\qedhere
\]
\end{proof}

{\it Step 3.} We extend the characterization from $\Delta_n^{\circ} \times \Delta_n^{\circ}$ to all of $\cA_n$. To do so, we need two auxiliary results that address the boundary values. For $\bp \in [0, \infty)^n$ with support $\supp(\bp)=\{j_1,\dots,j_d\} \neq \emptyset$ (ordered according to increasing index $j_1<j_2<\cdots<j_d$), 
we define the \emph{coordinate projection operator} $\Pi_{\bp} : [0, \infty)^n \to [0, \infty)^d$, $d = |\supp(\bp)|$, by
\begin{equation*}
  \big(\Pi_{\bp}[\bq]\big)_i := q_{j_i}, \qquad i=1,\dots,d.
\end{equation*}
In words, $\Pi_{\bp}[\bq]$ restricts $\bq$ to the coordinates in $\supp(\bp)$. Note that
\[
(\bp, \bq) \in \cA_n \Rightarrow \Pi_{\bp}[\cC_{\bp}[\bq]] \in \Delta_{|\supp(\bp)|}^{\circ}. 
\]
For clarity, in the following we sometimes use $I_k$ and $\sG_k$ to show explicitly the underlying dimension.

\begin{lemma}\label{lem:decomposition.In}
Suppose $(I\divg{\cdot}{\cdot}: \cA_n \rightarrow \R)_{n \geq 1}$ satisfies \ref{C1}--\ref{C3} and \ref{B4'}. Then, there exists a Lebesgue measurable function $\varphi:(0,1]\to\R$ with $\varphi(1)=0$ such that 
for every $(\bp,\bq)\in \cA_n$ we have
\begin{equation}\label{eq:phi.In}
I\divg{\bp}{\bq} 
= \varphi(m_\bp(\bq)) 
+ I_{|\supp(\bp)|}\divg{\Pi_\bp[\bp]}{\Pi_\bp\left[\cC_{\bp}[\bq]\right]}.
\end{equation}
\end{lemma}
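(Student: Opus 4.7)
The plan is to define $\varphi$ by a canonical two-point instance and then verify the decomposition by expressing any $(\bp,\bq)\in\cA_n$ as a composition that isolates the mass $m=m_\bp(\bq)$ in a two-dimensional outer factor. Concretely, I set $\varphi(m):=I_2\divg{(1,0)}{(m,1-m)}$ for $m\in(0,1]$. Lebesgue measurability of $\varphi$ is inherited from the separate measurability \ref{C1} of $\bq\mapsto I_2\divg{(1,0)}{\bq}$ composed with the continuous embedding $m\mapsto(m,1-m)$, while $\varphi(1)=I_2\divg{(1,0)}{(1,0)}=0$ is immediate from \ref{C3}.

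For the decomposition, I first dispose of the trivial case $d:=|\supp(\bp)|=n$, in which $m=1$, $\cC_\bp[\bq]=\bq$, $\Pi_\bp$ is the identity, and the claim collapses to $I\divg{\bp}{\bq}=0+I\divg{\bp}{\bq}$. In the substantive case $d<n$, I use permutation invariance \ref{C2} to assume $\supp(\bp)=[d]$ and write $(\bp,\bq)=(\tilde\bp\circ\bmu,\tilde\bq\circ\bnu)$ with outer factor $\tilde\bp=(1,0)$, $\tilde\bq=(m,1-m)$ in $\cA_2$, inner blocks $\bmu^1=\Pi_\bp[\bp]\in\Delta_d^\circ$ and $\bnu^1=\Pi_\bp[\cC_\bp[\bq]]\in\Delta_d^\circ$ (positivity of $\bnu^1$ follows from $\supp(\bp)\subset\supp(\bq)$), and any admissible $(\bmu^2,\bnu^2)\in\cA_{n-d}$. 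A direct calculation confirms that this composition reproduces $(\bp,\bq)$ as an element of $\Delta_n$.

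Applying \ref{B4'} to this composition, the term $\tilde p_2\,I\divg{\bmu^2}{\bnu^2}$ vanishes because $\tilde p_2=0$, while $\tilde p_1\,I\divg{\bmu^1}{\bnu^1}=I_d\divg{\Pi_\bp[\bp]}{\Pi_\bp[\cC_\bp[\bq]]}$ supplies the inner summand. For the outer $\sG$-term, the singleton support $\supp(\tilde\bp)=\{1\}$ forces $(\tilde\bq\ominus_{\tilde\bp}\tilde\bp)\oplus_{\tilde\bp}\bh_\bmu(\bnu)=(1,0)$ irrespective of $\bh_\bmu(\bnu)\in\Delta_2^\circ$, so the outer term reduces to $\sG((1,0),(1,0))$. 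The same $\ominus_{(1,0)}$ collapse applied to the definition of $\varphi$ gives $\varphi(m)=\sG((1,0),(1,0))$ as well, identifying the outer term with $\varphi(m)$ and completing the proof.

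The main obstacle is the bookkeeping around boundary supports: when $m=1$ the natural candidate $\bnu^2=\Pi_{S^c}[\bq]/(1-m)$ is undefined, which is precisely why the chain rule must be arranged so that $\tilde p_2=0$ kills the second-block contribution and any valid $(\bmu^2,\bnu^2)$ (for instance $\bmu^2=\bnu^2=\barE_{n-d}$) may be substituted. This is also why I rely on the extended chain rule \ref{B4'} rather than the cleaner open-simplex form \eqref{eqn:relative.entropy.chain.rule}: the mass-correction $\varphi(m)$ is absorbed entirely into the outer $\sG$-term, which only becomes visible through \ref{B4'}.
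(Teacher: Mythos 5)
Your proof is correct and follows the same architecture as the paper's: the same choice $\varphi(m)=I_2\divg{(1,0)}{(m,1-m)}$, the same trivial case $|\supp(\bp)|=n$, and the same composite representation $\bp=(1,0)\circ(\hat\bp,\cdot)$, $\bq=(m,1-m)\circ(\hat\bq,\cdot)$ after permuting. The one genuine point of divergence is the outer term. The paper takes the second inner blocks equal, $\bmu^2=\bnu^2=\hat\bq'$, so that $\supp(\bmu^i)=\supp(\bnu^i)$ for every $i$ and the special case of \ref{B4'} applies, delivering the outer term directly as $I_2\divg{(1,0)}{(m,1-m)}$. You instead invoke the general form of \ref{B4'} and note that, since $\supp((1,0))$ is a singleton, the second argument of the outer $\sG$-term collapses to $(1,0)$ irrespective of $\bh_{\bmu}(\bnu)$, and then match this with $\varphi(m)$ by the same collapse applied to $I_2\divg{(1,0)}{(m,1-m)}=\sG_2((1,0),(1,0))$. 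Three remarks. First, that last identification uses the relation $I\divg{\bp}{\bq}=\sG(\bp,\bq\ominus_{\bp}\bp)$, which is not literally among the listed hypotheses \ref{C1}--\ref{C3} and \ref{B4'}; it does hold wherever the lemma is applied (it is how $I$ is defined in Lemmas \ref{lem:implied.conditions.G} and \ref{lem:rep.Gn.supp.p}), and can alternatively be extracted from \ref{B4'} itself by taking all inner blocks one-dimensional. Second, your computation shows $\varphi(m)=\sG_2((1,0),(1,0))$ is independent of $m$, hence $\varphi\equiv\varphi(1)=0$: you have in effect already established the conclusion of Lemma \ref{lem:rep.Gn.supp.p}, which the paper proves separately via \ref{B3}. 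Third, ``any admissible $(\bmu^2,\bnu^2)$'' is loose: when $m<1$ the block $\bnu^2$ is forced to equal $\Pi_{\supp(\bp)^c}[\bq]/(1-m)$ for the composition to reproduce $\bq$, and only $\bmu^2$ (subject to $\supp(\bmu^2)\subset\supp(\bnu^2)$) is free; your closing paragraph shows you are aware of this, but the main text should say it.
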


\begin{proof}
Fix $(\bp,\bq)\in \cA_n$ and set $d=|\supp(\bp)| \in [n]$. 
Write $\hat\bp=\Pi_\bp[\bp]\in\Delta_d^{\circ}$ and $\hat\bq=\Pi_\bp[\cC_\bp[\bq]]\in\Delta_d^{\circ}$.

{\it Case 1.} $d = n$. Then $\hat{\bp} = \bp$, $\hat{\bq} = \bq$ and $m_{\bp}(\bq) = 1$. The identity \eqref{eq:phi.In} holds by letting $\varphi(1) = 0$. 

{\it Case 2.} $d < n$. After permuting coordinates (using \ref{C2}) if necessary, we may assume $\supp(\bp)=\{1,\dots,d\}$. Let $m=m_\bp(\bq) \in (0, 1]
$. Next, define $\hat{\bq}'\in\Delta_{n-d}$ through $\hat{q}'_i= q_{i+d}/(1-m)$ for $i=1,\dots,n-d$ to account for the renormalized values of $\bq$ off the support of $\bp$. If $m=1$, we may take any arbitrary $\hat{\bq}'\in\Delta_{n-d}$. By construction, we may represent $\bp$ and $\bq$ as the compositions
\[
\bp = (1,0) \circ (\hat\bp,\hat\bq'), \quad \bq = (m,1-m)\circ (\hat\bq,\hat\bq').
\]
Since $((1,0),(m,1-m))\in \cA_2$ and $\supp(\hat{\bp}) = \supp(\hat{\bq}) = [d]$, the special case in \ref{B4'} applies and we obtain
\[
I\divg{\bp}{\bq} = I_2\divg{(1,0)}{(m,1-m)} + I_d\divg{\hat\bp}{\hat\bq}.
\]
Thus \eqref{eq:phi.In} holds with $\varphi(m):=I_2\divg{(1,0)}{(m,1-m)}$. Measurability of $\varphi$ follows from \ref{C1} and that $\varphi(1)=0$ follows from \ref{C3}.
\end{proof}

\begin{lemma}\label{lem:rep.Gn.supp.p}
    Suppose that $(\sG : \cA_n \rightarrow \R)_{n \geq 1}$ satisfies \ref{B1}--\ref{B4} and define $I\divg{\bp}{\bq} = \sG(\bp, \bq\ominus_{\bp} \bp)$ for $(\bp, \bq) \in \cA_n$. Then for $(\bp,\bq)\in \cA_n$ we have
    \[
    I\divg{\bp}{\bq} =I_{|\supp(\bp)|}\divg{\Pi_{\bp}[\bp] }{ \Pi_\bp\left[\cC_{\bp}[\bq]\right]}.
    \]
    In particular, the function $\varphi$ from Lemma \ref{lem:decomposition.In} vanishes identically.
\end{lemma}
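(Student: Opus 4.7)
The plan is to apply Lemma \ref{lem:decomposition.In} to the family $I\divg{\bp}{\bq} := \sG(\bp, \bq \ominus_{\bp} \bp)$ and then to use the additional structure supplied by \ref{B3} to force the measurable function $\varphi$ produced by that lemma to vanish identically.

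First I would invoke Lemma \ref{lem:implied.conditions.G} to conclude that the family $(I : \cA_n \to \R)_{n \geq 1}$ satisfies \ref{C1}--\ref{C3} together with the modified chain rule \ref{B4'}. Lemma \ref{lem:decomposition.In} therefore produces a Lebesgue measurable $\varphi : (0, 1] \to \R$ with $\varphi(1) = 0$ such that
\[
I\divg{\bp}{\bq} = \varphi(m_\bp(\bq)) + I_{|\supp(\bp)|}\divg{\Pi_\bp[\bp]}{\Pi_\bp[\cC_\bp[\bq]]}, \qquad (\bp, \bq) \in \cA_n.
\]
It remains only to prove $\varphi \equiv 0$, after which the claimed identity follows by substitution.

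The heart of the argument is a short boundary calculation in dimension two. Tracing the construction of $\varphi$ in the proof of Lemma \ref{lem:decomposition.In} gives $\varphi(m) = I_2\divg{(1, 0)}{(m, 1-m)}$ for $m \in (0, 1]$, and substituting the definition of $I$ yields $\varphi(m) = \sG_2\bigl((1, 0),\, (m, 1-m) \ominus_{(1, 0)} (1, 0)\bigr)$. Evaluating $\ominus_{(1, 0)}$ on the singleton support $\{1\}$ immediately gives $(m, 1-m) \ominus_{(1, 0)} (1, 0) = (1, 0)$, which coincides with the barycenter $\overline{\be}_{(1, 0)}$ on that support. Property \ref{B3} then forces $\sG_2((1, 0), (1, 0)) = 0$, so $\varphi \equiv 0$. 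I do not anticipate any serious obstacle here: the preceding lemmas have already isolated a single scalar unknown, and \ref{B3} pins it down directly. The only minor point requiring verification is that both $(1, 0)$ and $(m, 1-m)$ lie in $\cA_2((1, 0) \mid \cdot)$, which is immediate since their supports contain $\{1\}$.
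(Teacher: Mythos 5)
Your proof is correct and uses essentially the same mechanism as the paper: evaluate $I$ at a pair whose $\ominus$-difference collapses to the barycenter $\overline{\be}_{\bp}$ and invoke \ref{B3}. The only cosmetic difference is that you work directly with the defining formula $\varphi(m)=I_2\divg{(1,0)}{(m,1-m)}$ from the proof of Lemma \ref{lem:decomposition.In}, whereas the paper applies the decomposition \eqref{eq:phi.In} to a general $\bp$ with deficient support and a suitably constructed $\bq^{(\alpha)}$ satisfying $\cC_{\bp}[\bq^{(\alpha)}]=\bp$; both routes reduce to the same computation $\bq\ominus_{\bp}\bp=\overline{\be}_{\bp}$ and the same appeal to \ref{B3}.
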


\begin{proof}

    Since $(\sG: \cA_n \rightarrow \R)_{n \geq 1}$ satisfies \ref{B1}--\ref{B4}, Lemma \ref{lem:implied.conditions.G} implies that $(I : \cA_n \rightarrow \R)_{n \geq 1}$ satisfies \ref{C1}--\ref{C3} and \ref{B4'}. By Lemma \ref{lem:decomposition.In}, there exists a measurable $\varphi(\cdot)$ on $(0,1]$ with $\varphi(1)=0$ satisfying \eqref{eq:phi.In}. We claim that $\varphi(\cdot) \equiv 0$.
    
    Fix $n$ and $\bp \in \Delta_n$ be such that $|\supp(\bp)| < n$. For $\alpha \in (0, 1]$, let $\bq^{(\alpha)} \in \cA_n(\bp \mid\cdot)$ be such that $\cC_{\bp}[\bq^{(\alpha)}]=\bp$ and $m_{\bp}(\bq^{(\alpha)})=\alpha$. Such a $\bq^{(\alpha)}$ can always be constructed by multiplying the coordinates of $\bp$ by $\alpha$ and distributing the remaining mass $1-\alpha$ arbitrarily on $[n]\setminus \supp(\bp)$.
 Then, by Lemma \ref{lem:decomposition.In} and \ref{C3},
    \begin{align*}
        I\divg{\bp}{\bq^{(\alpha)}} &= \varphi(m_{\bp}(\bq^{(\alpha)})) +I_{|\supp(\bp)|}\divg*{\Pi_{\bp}[\bp]}{ \Pi_\bp\left[\cC_{\bp}[\bq^{(\alpha)}]\right]}\\
        &= \varphi(\alpha) + I_{|\supp(\bp)|}\divg{\Pi_{\bp}[\bp]}{ \Pi_\bp[\bp]} = \varphi(\alpha).
    \end{align*}
    On the other hand, $I \divg{\bp}{\bq^{(\alpha)}} = \sG(\bp,\bq^{(\alpha)}\ominus_{\bp} \bp)$, and a direct calculation shows that $\bq^{(\alpha)}\ominus_{\bp} \bp = \overline{\be}_{\bp}$,
    the uniform distribution on $\supp(\bp)$.  By \ref{B3}, $\sG(\bp,\overline{\be}_{\bp})=0$, hence $I\divg{\bp}{ \bq^{(\alpha)}}=0$. Therefore, $\varphi(\alpha)=0$ for all $\alpha \in (0,1]$, and so $\varphi(\cdot)\equiv 0$.
\end{proof}

We are now ready to complete the proof of Theorem \ref{thm:characterization.rel.entr.reduced} (and therefore, of Theorem \ref{thm:characterization.rel.entr} as well).

\begin{proof}[Proof of Theorem \ref{thm:characterization.rel.entr.reduced}]
We have seen that $(c\Gamma)_{n \geq 1}$ satisfies \ref{B1}--\ref{B4}. On the other hand, suppose that the collection $(\sG : \cA_n \rightarrow \R)_{n \geq 1}$ satisfies \ref{B1}--\ref{B4}. By Lemma \ref{rem:interior.char}, there exists a $c\in\mathbb{R}$ such that
\[
\sG(\bp,\bq) = c \Gamma(\bp,\bq), \quad \text{for all $n$ and } (\bp,\bq)\in\Delta_n^\circ\times\Delta_n^\circ.
\]
Next, observe that for any $(\bp,\bq)\in \cA_n$, 
\[
\left(\Pi_{\bp}[\bp],\Pi_{\bp}\left[\cC_{\bp}[\bq]\right]\right)\in\Delta_{|\supp(\bp)|}^\circ\times\Delta_{|\supp(\bp)|}^\circ,
\]
and moreover, $(\Pi_\bp\left[\cC_{\bp}[\bq]\right]\ominus \Pi_{\bp}[\bp]) \in \Delta_{|\supp(\bp)|}^{\circ}$. Therefore, by Lemma \ref{lem:rep.Gn.supp.p} (and writing $I\divg{\bp}{\bq} = \sG(\bp, \bq\ominus_{\bp} \bp)$),
\begin{align*}
\sG(\bp,\bq\ominus_{\bp}\bp) &= I\divg{\bp}{\bq} \\
&=I_{|\supp(\bp)|}\divg{\Pi_{\bp}[\bp]}{ \Pi_\bp\left[\cC_{\bp}[\bq]\right]} \\
&= \sG_{|\supp(\bp)|}\left(\Pi_{\bp}[\bp], \Pi_\bp\left[\cC_{\bp}[\bq]\right]\ominus \Pi_{\bp}[\bp] \right)\\
&= c\Gamma_{|\supp(\bp)|}\left(\Pi_{\bp}[\bp], \Pi_\bp\left[\cC_{\bp}[\bq]\right]\ominus \Pi_{\bp}[\bp] \right).
\end{align*}
One readily checks that
\[\Gamma_{|\supp(\bp)|}\left(\Pi_{\bp}[\bp], \Pi_\bp\left[\cC_{\bp}[\bq]\right]\ominus \Pi_{\bp}[\bp] \right) = \Gamma\left(\bp, \bq\ominus_{\bp} \bp \right),\]
and hence
\[
\sG(\bp,\bq\ominus_{\bp}\bp) = c \Gamma\left(\bp, \bq\ominus_{\bp} \bp \right), \quad (\bp,\bq) \in \cA_n.
\]
Unwinding by writing $\cC_{\bp}[\bq] = (\bq\oplus_{\bp}\bp)\ominus_{\bp}\bp$, we see that this implies
\[
\sG(\bp,\cC_{\bp}[\bq]) = c \Gamma\left(\bp, \bq \right).\]
Finally, we invoke \ref{B5} to obtain
\[
\sG(\bp,\bq) = \sG(\bp,\cC_{\bp}[\bq]) = c \Gamma\left(\bp, \bq \right), \quad (\bp,\bq)\in \cA_n,
\]
which completes the proof.
\end{proof}

\subsection{Via Jensen gap} \label{sec:Jensen.gap}
In this subsection, we characterize the excess growth rate
\[
\Gamma(\bpi, \bR) = \log \left( \sum_{i \in \supp(\bpi) } \pi_i R_i \right) - \sum_{i \in \supp(\bpi)} \pi_i \log R_i, \quad (\bpi, \bR) \in \cA_n,
\]
where $n \geq 2$ is {\it fixed}, as the gap in Jensen's inequality with respect to the logarithm which is strictly concave.  

We say that $g: \cA_n \rightarrow \R$ is a {\it gap function} if there exists $\varphi: (0, \infty) \rightarrow \R$ (which may be neither convex nor concave) such that 
\begin{equation} \label{eqn:gap.function}
g(\bpi, \bR) = \varphi\left(\sum_{i \in \supp(\bpi)} \pi_i R_i \right) - \sum_{i \in \supp(\bpi)} \pi_i \varphi(R_i), \quad (\bpi, \bR) \in \cA_n.
\end{equation}
We call $\varphi$ the {\it generator} of $g$. By definition, $\Gamma$ is the gap function with generator $\varphi = \log$. Note that since $R_i > 0$ for $i \in \supp(\bpi)$, $\varphi$ only needs to be defined on $(0, \infty)$.

\begin{lemma}[Uniqueness of generator] \label{lem:generator.uniqueness}
The generator of a gap function is unique up to the addition of an affine function. More precisely, if $g$ is a gap function and $\varphi$ and $\tilde{\varphi}$ are generators (that is, \eqref{eqn:gap.function} holds for both $\varphi$ and $\tilde{\varphi}$), then $\varphi(R) - \tilde{\varphi}(R) \equiv aR + b$ for some $a, b \in \R$.
\end{lemma}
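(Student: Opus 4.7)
The plan is to reduce the claim to a Jensen-type functional equation on the simplex and then deduce affinity by specializing to two-point weights, with no regularity hypothesis required. First I would set $h := \varphi - \tilde{\varphi}$; subtracting the two instances of \eqref{eqn:gap.function} yields
\begin{equation*}
h\!\left(\sum_{i \in \supp(\bpi)} \pi_i R_i\right) = \sum_{i \in \supp(\bpi)} \pi_i h(R_i), \qquad (\bpi, \bR) \in \cA_n.
\end{equation*}
Note that the gap formula only queries $\varphi$ at $R \in (0, 1)$ in a non-trivial way: since $\bR \in \Delta_n$, having $R_i = 1$ for some $i \in \supp(\bpi)$ or $\sum \pi_i R_i = 1$ forces $\bR$ and $\bpi$ to concentrate at a single index, making the gap vanish identically. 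Thus it suffices to show $h$ is affine on $(0, 1)$.

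The key step is to pick $\bpi = (t, 1-t, 0, \ldots, 0)$ and $\bR = (r, 1-r, 0, \ldots, 0) \in \Delta_n$ with $t \in [0, 1]$ and $r \in (0, 1) \setminus \{1/2\}$. Both vectors have support $\{1, 2\}$, so $(\bpi, \bR) \in \cA_n$, and the functional equation collapses to
\begin{equation*}
h\bigl(t r + (1-t)(1-r)\bigr) = t h(r) + (1-t) h(1-r), \qquad t \in [0, 1].
\end{equation*}
The right-hand side is an affine function of the scalar $t$, and $t \mapsto tr + (1-t)(1-r)$ is an affine bijection of $[0, 1]$ onto the interval $[\min(r, 1-r), \max(r, 1-r)]$. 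Composing, $h$ must be affine on this symmetric interval---crucially, no measurability or continuity of $h$ is needed. (This is stronger than the classical midpoint-convexity version of Jensen's equation, whose solutions can be pathological.)

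It remains to patch these local affine pieces. As $r$ ranges over $(0, 1/2)$, the intervals $[r, 1-r]$ are nested and exhaust $(0, 1)$; since any two affine functions that agree on an interval must coincide, the affine determinations at different levels of $r$ share a common slope and intercept. This produces a single affine function $aR + b$ equal to $h$ on $(0, 1)$, and extending it by the same formula to $(0, \infty)$ (an unconstrained choice) yields the claim. The anticipated main point to verify carefully is the ``no regularity needed'' step---inferring affinity on a segment directly from the equation with $t$ varying over $[0, 1]$---together with the bookkeeping that distinguishes the effective domain $(0, 1)$ from the nominal domain $(0, \infty)$ of $\varphi$.
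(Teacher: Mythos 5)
Your core mechanism is the same as the paper's: evaluate the gap identity on a two-point portfolio whose weight $t$ parametrizes the mean affinely, observe that the resulting identity forces $h = \varphi - \tilde{\varphi}$ to be affine on a segment with no regularity assumption whatsoever, and then patch the overlapping segments together. The only difference is the domain over which you allow $\bR$ to range, and that is exactly where your argument falls short of the stated conclusion.

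By taking the definition's domain $\cA_n$ (so $\bR \in \Delta_n$) literally, you only ever query $h$ on $(0,1)$, and you correctly prove $h$ is affine there. But your final sentence --- that extending the affine formula to $(0,\infty)$ as ``an unconstrained choice'' yields the claim --- is not a valid deduction. The lemma asserts that \emph{any} two generators of the same gap function differ by an affine function on all of $(0,\infty)$; if the generator really is unconstrained on $(1,\infty)$, that assertion is false (modify $\tilde{\varphi}$ non-affinely on $(1,\infty)$ and it remains a generator of the same $g$). You cannot ``choose'' the extension, because $\varphi$ and $\tilde{\varphi}$ are given, so you have either proved a strictly weaker statement or implicitly exhibited a counterexample. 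The resolution is that the intended domain of the second argument is $\cD_n(\bpi \mid \cdot)$, i.e., $\bR$ ranges over $(0,\infty)^n$ up to support --- this is what \ref{D1}, \ref{D3} and \ref{D4} presuppose and what the paper's own proof uses: taking $\bR = (u, v, 1, \ldots, 1)$ for arbitrary $0 < u < v$ and $\bpi$ supported on the first two coordinates makes $\langle \bpi, \bR \rangle$ sweep out $[u,v]$, giving affinity of $h$ on every $[u,v] \subset (0,\infty)$, after which the same patching argument finishes. With that reading your argument goes through essentially verbatim; without it, the last step is a genuine gap.
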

\begin{proof}
Let $0 < \underaccent{\bar}{R} < \bar{R}$ be given, and let
\[
\bR = (\underaccent{\bar}{R}, \bar{R}, 1, \ldots, 1) \in (0, \infty)^n.
\]

For $R \in [\underaccent{\bar}{R}, \bar{R}]$, consider
\[
\bpi(R) := \left( \frac{\bar{R} - R}{\bar{R} - \underaccent{\bar}{R}}, \frac{R - \underaccent{\bar}{R}}{\bar{R} - \underaccent{\bar}{R}}, 0, \ldots, 0 \right) \in \Delta_n.
\]
Then $\langle \bpi(R), \bR\rangle = R$ and we have
\begin{equation*}
\begin{split}
g(\bpi, \bR) &
= \varphi(R) - \frac{\bar{R} - R}{\bar{R} - \underaccent{\bar}{R}} \varphi(\underaccent{\bar}{R}) - \frac{R - \underaccent{\bar}{R}}{\bar{R} - \underaccent{\bar}{R}} \varphi(\bar{R}) \\
  &= \tilde{\varphi}(R) - \frac{\bar{R} - R}{\bar{R} - \underaccent{\bar}{R}} \tilde{\varphi}(\underaccent{\bar}{R}) - \frac{R - \underaccent{\bar}{R}}{\bar{R} - \underaccent{\bar}{R}} \tilde{\varphi}(\bar{R}).
\end{split}
\end{equation*}
It follows that $\varphi(R) - \tilde{\varphi}(R)$ is affine in $R$ on $[\underaccent{\bar}{R}, \bar{R}]$. Since $\underaccent{\bar}{R}, \bar{R}$ are arbitrary (and the intercept and slope remain the same upon extension of the domain), $\varphi - \tilde{\varphi}$ is affine on $(0, \infty)$ and the lemma is proved.
\end{proof}

Our goal is to characterize the excess growth rate among the family of gap functions. For $\bR \in [0, \infty)^n \setminus \{\mathbf{0}\}$ (where ${\bf 0} = (0, \ldots, 0)$ is the zero vector), we define
\[
\cD_n(\cdot \mid \bR) := \{ \bpi \in \Delta_n : (\bpi, \bR) \in \cD_n\}
\]
to be the {\it slice} of $\cD_n$ given the second slot. Consider the following assumptions on $g: \cA_n \rightarrow \R$:
\begin{enumerate}[label=\textup{(D\arabic*)},leftmargin=*]
\item\label{D1} For every $\bR \in [0, \infty)^n \setminus \{\mathbf{0}\}$, the map $\bpi\mapsto g(\bpi,\mathbf{R})$ is concave on $\cD_n(\cdot \mid \bR)$.
\item\label{D2} $g(\bpi,\mathbf{R})=0$ if $\mathbf{R}$ is constant on $\supp(\bpi)$.
\item\label{D3} $g(\bpi,\alpha \mathbf{R})=g(\bpi,\mathbf{R})$ for all  $(\bpi,\mathbf{R})\in \cD_n$ and $\alpha>0$.
\item\label{D4} For $m \in (0, \infty)$ and $\bR \in [0, \infty)^n \setminus \{\mathbf{0}\}$, let $C_{m,\mathbf{R}} \subset \cD_n(\cdot \mid \bR)$ be the constant mean set (which is convex) defined by
\[
C_{m,\mathbf{R}}:=\{\bpi\in \cD_n(\cdot \mid \bR): \langle \bpi, \bR\rangle =m\}.
\]
Then, for any $m$ and $\bR$, the map $\bpi \mapsto g(\bpi, \bR)$ is affine on $C_{m, \bR}$:
\begin{equation} \label{eqn:constant.mean.affine}
g(\bpi, \bR) = \langle \ba(\bR), \bpi\rangle + b(m), \quad \bpi \in C_{m, \bR},
\end{equation}
for some gradient $\ba(\bR) \in \R^n$ that depends only on $\bR$ and is Lebesgue measurable in $\bR$, and the intercept $b(m) \in \R$ depends only on $m$ and is Lebesgue measurable in $m$.
\end{enumerate}

Note that \ref{D3} encodes num\'{e}raire invariance. In Section \ref{sec:relative entropy}, num\'{e}raire invariance allows us to restrict the domain to the simplex; the main argument is then driven by the chain rule. Here, num\'{e}raire invariance is the key property that distinguishes the excess growth rate (again up to a multiplicative constant) among other gap functions. To motivate \ref{D4}, assume that $g$ is a gap function whose generator $\varphi$ is Lebesgue measurable. On the constant mean set $C_{m, \bR}$, we have
\begin{equation} \label{eqn:gap.function.constant.mean}
\begin{split}
g(\bpi, \bR) = \varphi(m) - \sum_{i = 1}^n \pi_i \varphi(R_i),
\end{split}
\end{equation}
which is affine in $\bpi$. We may take $\ba(\bR) = (-\varphi(R_i))_{1 \leq i \leq n}$ and $b(m) = \varphi(m)$ which are Lebesgue measurable in $\bR$ and $m$ respectively.

\begin{theorem} [Characterization II] \label{thm:characterization.Jensen}
Let $n \geq 2$ and let $g: \cD_n \rightarrow \R$ be (jointly) Lebesgue measurable. 
\begin{itemize}
    \item[(i)] $g$ is a gap function with a Lebesgue measurable generator if and only if it satisfies \ref{D2} and \ref{D4}. In this case, the generator $\varphi$ (which is unique up to an affine function by Lemma \ref{lem:generator.uniqueness}) is concave if and only if \ref{D1} holds.
    \item[(ii)] $g$ satisfies \ref{D2}--\ref{D4} if and only if $g = c \Gamma$ for some $c \in \R$. In this case, $c \geq 0$ if and only if \ref{D1} holds.
\end{itemize}
\end{theorem}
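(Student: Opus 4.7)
The plan is to prove (i) first and then reduce (ii) to a functional-equation problem using (i) together with \ref{D3}.

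For (i), the forward direction is immediate: if $g$ has the gap form with generator $\varphi$, then \ref{D2} follows because a constant value of $\bR$ on $\supp(\bpi)$ makes the two terms equal, and \ref{D4} is exactly the content of equation \eqref{eqn:gap.function.constant.mean}, with $\ba(\bR) = -(\varphi(R_i))_i$ and $b(m) = \varphi(m)$, both Lebesgue measurable whenever $\varphi$ is. The heart of the reverse direction is to use \ref{D2} to pin down the representation from \ref{D4}. For any $i \in \supp(\bR)$, the basis vector $\be_i$ belongs to $C_{R_i,\bR}$, and \ref{D2} gives $0 = g(\be_i, \bR) = a_i(\bR) + b(R_i)$, so $a_i(\bR) = -b(R_i)$. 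Substituting back into \ref{D4} yields
\[
g(\bpi, \bR) = \varphi(\langle \bpi, \bR\rangle) - \sum_{i \in \supp(\bpi)} \pi_i \varphi(R_i), \qquad \varphi := b,
\]
which is the gap-function form with Lebesgue measurable generator. The concavity clause is routine: for fixed $\bR$, $g(\bpi, \bR)$ equals $\varphi$ composed with the affine map $\bpi \mapsto \langle \bpi, \bR\rangle$ minus an affine function of $\bpi$, so $\varphi$ concave implies \ref{D1}; conversely, restricting $\bpi$ to two-point distributions supported on $\{i,j\}$ and letting the pair $(R_i, R_j)$ vary recovers concavity of $\varphi$ on arbitrary sub-intervals of $(0,\infty)$.

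For (ii), the direction $g = c\Gamma \Rightarrow$ \ref{D2}--\ref{D4} is clear from the properties of $\Gamma$ established in Section \ref{sec:egr.properties}. For the converse, part (i) gives the gap-function form with Lebesgue measurable $\varphi$. Using \ref{D3} (interpreted via the canonical numéraire-invariant extension of $g$ to $\cD_n$, so that $\bR$ may be taken in $(0,\infty)^n$), we obtain for all $\alpha > 0$ the identity
\[
\varphi(\alpha m) - \sum_{i}\pi_i \varphi(\alpha R_i) = \varphi(m) - \sum_i \pi_i \varphi(R_i), \qquad m = \langle \bpi, \bR\rangle.
\]
Setting $\psi_\alpha(R) := \varphi(\alpha R) - \varphi(R)$, this becomes the Jensen equality $\psi_\alpha(\sum_i \pi_i R_i) = \sum_i \pi_i \psi_\alpha(R_i)$, and specializing to two-point distributions forces $\psi_\alpha$ to be affine in $R$:
\[
\varphi(\alpha R) - \varphi(R) = A(\alpha) R + B(\alpha), \qquad R, \alpha > 0.
\]
Expanding $\varphi(\alpha\alpha' R) - \varphi(R)$ in two ways yields the coupled system
\[
A(\alpha\alpha') = \alpha' A(\alpha) + A(\alpha'), \qquad B(\alpha\alpha') = B(\alpha) + B(\alpha'),
\]
and symmetrizing the first equation in $(\alpha, \alpha')$ gives $(\alpha' - 1) A(\alpha) = (\alpha - 1) A(\alpha')$, whence $A(\alpha) = c'(\alpha - 1)$ for some $c' \in \R$. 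Since $A$ and $B$ are expressible as algebraic combinations of values of the measurable $\varphi$ (e.g.\ by evaluating the functional equation at $R = 1$ and $R = 2$), both are Lebesgue measurable, and the multiplicative Cauchy equation for $B$ therefore forces $B(\alpha) = c\log\alpha$ for some $c \in \R$. Evaluating the functional equation at $R = 1$ now yields $\varphi(R) = c\log R + c' R + d$ for some $d \in \R$, and since the affine part $c'R + d$ contributes zero to the gap, $g = c\Gamma$. The sign clause $c \geq 0 \iff$ \ref{D1} is then immediate from part (i), since $c\log$ is concave on $(0,\infty)$ precisely when $c \geq 0$.

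The main obstacle is the functional-equation step: the multiplicative Cauchy equation for $B$ admits wild non-measurable solutions without additional regularity, so one must carefully verify that Lebesgue measurability of $\varphi$ (inherited from the joint measurability of $g$ and from the measurability clauses in \ref{D4}) propagates to $A$ and $B$. A secondary subtlety is the correct reading of \ref{D3} relative to the stated domain $\cA_n$: since rescaling $\bR \in \Delta_n$ by $\alpha \neq 1$ generally exits $\Delta_n$, \ref{D3} must be interpreted via the canonical numéraire-invariant extension $g(\bpi, \bR) := g(\bpi, \cC_{\bpi}[\bR])$ to $\cD_n$, which is what provides the full range of the functional equation exploited above.
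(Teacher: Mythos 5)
Your proof is correct and follows essentially the same route as the paper's: part (i) by evaluating the affine representation in \ref{D4} at the basis vectors $\be_i$ via \ref{D2}, and part (ii) by using \ref{D3} to force $\varphi(\alpha\,\cdot)-\varphi(\cdot)$ to be affine, symmetrizing the resulting system to get $A(\alpha)=c'(\alpha-1)$, and then solving the measurable multiplicative Cauchy equation for the logarithmic part (the paper packages the last step as Cauchy's equation for $\tilde h(u)=h(u)-ru+r$, which is the same computation). One caveat on your parenthetical about \ref{D3}: you should not \emph{define} the extension of $g$ to $\cD_n$ to be num\'{e}raire invariant, since that would make \ref{D3} vacuous; the intended reading, which is what the paper's proof uses, is that $g$ and the hypotheses \ref{D1}--\ref{D4} are posed on $\cD_n$ from the start, so the gap form obtained in part (i) already holds for all $\bR\in(0,\infty)^n$ with a single generator on $(0,\infty)$, and \ref{D3} is then a genuine constraint exactly as you exploit it.
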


Despite the importance of Jensen's inequality, we have not been able to locate axiomatic characterizations of its {\it gap} in the literature.  
Before proving Theorem \ref{thm:characterization.Jensen}, we compare it with known results about the {\it quasiarithmetic mean} of which the {\it exponential mean} (also called the (weighted) {\it log--sum--exp} in machine learning)
\[
\log \left( \sum_{i \in \supp(\bpi)} \pi_i e^{r_i} \right) = \gamma(\bpi, \br) + \sum_{i \in \supp(\bpi)} \pi_i r_i
\]
is a member. For further details, see \cite[Chapter 5]{leinster2021entropy} and \cite[Chapter 4]{GMM09}.

 Let $\phi: I \rightarrow J$ be a homeomorphism between real intervals. Following \cite[Definition 5.1.1]{leinster2021entropy}, we define the {\it quasiarithmetic mean} on $I$ generated by $\phi$ to be the family $(M_{\phi}: \Delta_n \times I^n \rightarrow I)_{n \geq 1}$, where
\begin{equation} \label{eqn:quasi.arithmetic.mean}
M_{\phi}(\bpi, \br) := \phi^{-1} \left( \sum_{i = 1}^n \pi_i \phi(r_i) \right), \quad (\bpi, \br) \in \Delta_n \times I^n.
\end{equation}
Taking $\phi = \exp(\cdot): I = \R \rightarrow J = (0, \infty)$ recovers the exponential mean. The following result, which characterizes the (unweighted) exponential mean, can be found in \cite[Theorem 4.15]{GMM09}:

\begin{proposition}[Characterization of unweighted exponential mean]
Fix $n \geq 1$ and let $\mathsf{M}: \R^n
\rightarrow \R$ be an unweighted quasiarithmetic mean, i.e., $\mathsf{M}(\cdot) = \mathsf{M}_{\phi}(\barE_n, \cdot)$ for some $\phi: (0, \infty) \rightarrow J$. The following are equivalent:
\begin{itemize}
\item[(i)] $\mathsf{M}$ is difference scale invariant, in the sense that
\begin{equation} \label{eqn:scale.invariant}
\mathsf{M}(\br + s\mathbf{1}) = \mathsf{M}(\br) + s, \quad s \in \R.
\end{equation}
\item[(ii)] $\mathsf{M}(\cdot) = M_{\phi}(\barE_n, \cdot)$ where $\phi(x) = e^{\alpha x}$ for some $\alpha \in \R \setminus \{0\}$ or $\phi(x) = x$.
\end{itemize}
\end{proposition}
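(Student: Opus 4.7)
The plan is to first dispose of (ii) $\Rightarrow$ (i) by direct substitution. For $\phi(x) = e^{\alpha x}$ the identity $\phi(x+s) = e^{\alpha s}\phi(x)$ factors the shifted sum and, after applying $\phi^{-1}(u) = \alpha^{-1}\log u$, produces the extra $+s$; the case $\phi(x) = x$ is immediate. The case $n = 1$ of (i) $\Rightarrow$ (ii) is also trivial since $M_\phi(1, r) = r$ for any $\phi$, so one may take $\phi(x) = x$. I would therefore spend the bulk of the argument on (i) $\Rightarrow$ (ii) with $n \geq 2$.

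The first main step is to encode \eqref{eqn:scale.invariant} as a Pexider-type functional equation for $\phi$. Setting $\phi_s(r) := \phi(r+s)$ and observing that $\phi_s^{-1}(u) = \phi^{-1}(u) - s$, the invariance rearranges into $M_{\phi_s}(\barE_n, \br) = M_\phi(\barE_n, \br)$ for every $\br \in \R^n$ and $s \in \R$. I would then invoke the classical uniqueness theorem for unweighted quasiarithmetic means, namely that when $n \geq 2$, $M_\phi = M_\psi$ iff $\psi = c\phi + d$ for some $c \neq 0$ and $d \in \R$ (see \cite[Ch.~4]{GMM09} or Hardy--Littlewood--P\'olya). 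Applied to $\phi$ and $\phi_s$, this yields
\[
\phi(r+s) = a(s)\phi(r) + b(s), \qquad r, s \in \R,
\]
with $a(s) \neq 0$. Expressing $a$ and $b$ rationally in terms of $\phi$ at two distinct reference points shows that $a, b$ are continuous in $s$ (inheriting continuity from $\phi$).

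The second step is to solve this functional equation by a standard associativity argument. Expanding $\phi((r+s)+t) = \phi(r+(s+t))$ in two ways and matching coefficients (using that $\phi$ is non-constant) decouples the system into $a(s+t) = a(s)a(t)$ and $b(s+t) = a(t)b(s) + b(t)$. Continuous multiplicativity of $a$ forces $a(s) = e^{\alpha s}$ for some $\alpha \in \R$. If $\alpha = 0$, continuity reduces $b$ to a linear function, $\phi$ is then affine, and $M_\phi$ is the arithmetic mean (so take $\phi(x) = x$). If $\alpha \neq 0$, swapping $s \leftrightarrow t$ in the $b$-equation gives $(e^{\alpha t} - 1)\, b(s) = (e^{\alpha s} - 1)\, b(t)$, hence $b(s) = C(e^{\alpha s} - 1)$; plugging back and evaluating at $r = 0$ shows $\phi(x) = A e^{\alpha x} + B$ with $A \neq 0$. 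Since $M_\phi$ is invariant under affine transformations of $\phi$, one may take $\phi(x) = e^{\alpha x}$.

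The main obstacle I expect is the clean invocation of the quasiarithmetic-mean uniqueness theorem, which is itself a nontrivial classical result and is precisely where the hypothesis $n \geq 2$ enters. Once that is in hand, the subsequent Cauchy-type functional-equation analysis is routine, being powered entirely by the continuity built into the homeomorphism assumption on $\phi$; no stronger regularity (e.g.\ differentiability) is needed.
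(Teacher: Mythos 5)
Your argument is correct. Note, however, that the paper does not prove this proposition at all: it simply cites \cite[Theorem 4.15]{GMM09}, so any comparison is with the classical literature rather than with an in-paper argument. Your route is the standard textbook one and it goes through: (ii) $\Rightarrow$ (i) is indeed a one-line computation; for the converse you correctly observe that \eqref{eqn:scale.invariant} is equivalent to $M_{\phi_s}=M_\phi$ for $\phi_s(\cdot)=\phi(\cdot+s)$, invoke the uniqueness-of-generator theorem (valid precisely for $n\ge 2$, which you flag), obtain the Pexider equation $\phi(r+s)=a(s)\phi(r)+b(s)$, establish continuity of $a,b$ by solving for them at two reference points, and then the associativity trick cleanly splits the system into $a(s+t)=a(s)a(t)$ and $b(s+t)=a(t)b(s)+b(t)$, whose continuous solutions give $\phi$ affine or $\phi=Ae^{\alpha x}+B$; affine invariance of $M_\phi$ finishes the job. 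Two cosmetic points worth recording: the paper's statement has a domain typo ($\phi:(0,\infty)\to J$ should read $\phi:\R\to J$ since $\mathsf{M}$ acts on $\R^n$), and your proof correctly treats $\phi$ as defined on all of $\R$; also, the coefficient-matching step uses that $\phi$ is non-constant, which is guaranteed since $\phi$ is a homeomorphism. The only external input is the uniqueness theorem for quasiarithmetic-mean generators, which you name explicitly and which is exactly the nontrivial classical ingredient; everything else is elementary and requires no regularity beyond the continuity built into the homeomorphism hypothesis.
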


Difference scale invariance \eqref{eqn:scale.invariant}, when expressed in terms of $\bR = e^{\br}$, corresponds to the num\'{e}raire invariance of the excess growth rate; see the role of \ref{D3} in Theorem \ref{thm:characterization.Jensen}(ii). Also, see \cite[Theorem 4.10]{GMM09} which provides a list of properties which characterize the (unweighted) quasi-arithmetic mean (for some $\phi$) as a family $(\mathsf{M} : \R^n \rightarrow \R)_{n \geq 1}$ of functions. Together, these two results characterize the (unweighted) exponential mean. The theory of generalized means, or more generally the theory of {\it aggregation functions} (see \cite{GMM09}) and {\it value} (as in \cite[Chapter 7]{leinster2021entropy}), answers the question ``what is the value of the whole in terms of its parts.'' There, properties such as {\it monotonicity} ($\bx \leq \by \Rightarrow \mathsf{M}(\bx) \leq \mathsf{M}(\by)$) are natural and crucial. On the other hand, the excess growth rate, as the {\it difference} between the exponential and arithmetic means (see \eqref{eqn:egr.small.r}), focuses on how the returns differ from each other (so monotonicity no longer plays a role). This perspective distinguishes our study from that of generalized means. %

\begin{proof}[Proof of Theorem \ref{thm:characterization.Jensen}]
(i) Let $g$ be a gap function. Clearly, it satisfies \ref{D2}. From \eqref{eqn:gap.function.constant.mean}, it is affine on any constant mean set $D_{m, \bR}$ with
\[
\ba = \ba(\bR) = (-\varphi(R_i))_{1 \leq i \leq n} \quad \text{and} \quad b = b(m) = \varphi(m).
\]
Since $g$ is measurable, it is easy to see that $\varphi$ is measurable. For example, for any $0 < \underaccent{\bar}{R} < \bar{R}$, consider
\[
\bR = \left(\underaccent{\bar}{R}, \bar{R}, 1, \ldots, 1\right) \in (0, \infty)^n
\]
and
\[
\bpi_t = (1 - t, t, 0, \ldots, 0) \in \cD_n(\cdot \mid \bR) = \Delta_n, \quad t \in [0, 1].
\]
Then
\[
g(\bpi_t, \bR) = \varphi( (1 - t)\underaccent{\bar}{R} + t\bar{R}) - (1 - t) \varphi(\underaccent{\bar}{R}) - t \varphi(\bar{R}) 
\]
is measurable in $t$. It follows that $\varphi$ is measurable on $[\underaccent{\bar}{R}, \bar{R}]$. Since $\underaccent{\bar}{R}, R$ are arbitrary, we have that $\varphi$ is measurable on $(0, \infty)$. Hence $g$ also satisfies \ref{D4}.

Next, suppose that $g$ satisfies \ref{D2} and \ref{D4}. Then, there exist measurable functions $\ba: \R^n \rightarrow \R^n$ and $b: (0, \infty) \rightarrow \R$ such that
\[
g(\bpi, \bR) = \langle\ba(\bR), \bpi \rangle + b(\langle \bpi, \bR\rangle), \quad (\bpi, \bR) \in \cD_n.
\]
Define $\varphi = b$ which is measurable. Letting $\bpi = \be_i$ be the $i$-basis vector, we have 
\[
0 = g(\bpi, \bR) = a_i(\bR) + b(\langle \bpi, \bR \rangle) = a_i(\bR) + \varphi(R_i),
\]
where the first equality holds by \ref{D2}. It follows that $\ba(\bR) = (-\varphi(R_i))_{1 \leq i \leq n}$, and we have
\[
g(\bpi, \bR) = \varphi(\langle \bpi, \bR \rangle) - \sum_{i = 1}^n \pi_i \varphi(R_i) = \varphi\left(\sum_{i \in \supp(\bpi)}\pi_i R_i \right) - \sum_{i \in \supp(\bpi)} \pi_i \varphi(R_i).
\]
Thus, $g$ is a gap function whose generator is measurable. 

Given that $g$ is a gap function, it is clear that its generator $\varphi$ is concave if and only if $\bpi \mapsto g(\bpi, \bR)$ is concave on $\cD_n(\cdot \mid \bR)$ for every $\bR \in [0, \infty)^n \setminus \{\mathbf{0}\}$.

(ii) Suppose $g = c \Gamma$ for some $c \in \R$, so that $g$ is a gap function with generator $\varphi = c \log$. From (i), $g$ satisfies \ref{D2} and \ref{D4}. That $g$ satisfies \ref{D3} is a consequence of Proposition \ref{prop:numeraire.invariance} (num\'{e}raire invariance).

Now, suppose $g$ satisfies \ref{D2}--\ref{D4}. From (i), $g$ is a gap function with a measurable generator $\varphi$. We aim to use \ref{D3} to show that $\varphi$ is equal to $c \log$ plus an affine function, for some $c \in \R$. If so, we have $g = c\Gamma$.

For $\alpha > 0$, consider the function $k_{\alpha}: (0, \infty) \rightarrow \R$ defined by
\[
k_{\alpha}(u) := \varphi(\alpha u) - \varphi(u), \quad u \in (0, \infty).
\]
Also define $h : (0, \infty) \rightarrow \R$ by
\[
h(\alpha) := \varphi(\alpha) - \varphi(1) = k_{\alpha}(1).
\]
Note that $h(1) = 0$. Our strategy is to derive functional equations for $k_{\alpha}$ and $h$.

{\it Step 1: $k_{\alpha}$ is affine.} Fix $0 < u < v$. For $t \in [0, 1]$, consider
\[
\bpi = (1 - t, t, 0, \ldots, 0) \in \Delta_n \quad \text{and} \quad  \bR = (u, v, 1, \ldots, 1) \in (0, \infty)^n.
\]
By \ref{D3}, which is the homogeneity property $g(\bpi, \alpha \bR) = g(\bpi, \bR)$, we have
\[
\varphi \left( (1 - t) \alpha u + t \alpha v \right) - (1 - t) \varphi(\alpha u) - t \varphi(\alpha v) = \varphi( (1 - t) u + t v) - (1 - t) \varphi(u) - t \varphi(v).
\]
Writing this in terms of $k_{\alpha}$ gives
\[
k_{\alpha}( (1 - t) u + t v) = (1 - t) k_{\alpha} (u) + t k_{\varphi}(v).
\]
Thus, $k_{\alpha}$ is affine on $(0, \infty)$, and there exist unique $a_{\alpha}, b_{\alpha} \in \R$ such that
\[
k_{\alpha}(u) = a_{\alpha} u + b_{\alpha}, \quad u \in (0, \infty).
\]

{\it Step 2: $a_{\alpha}$ is affine in $\alpha$.} Observe that
\[
k_{\alpha}(u) = \varphi(\alpha u) - \varphi(u) = [h(\alpha u) + \varphi(1)] - [h(u) + \varphi(1)] = h(\alpha u) - h(u).
\]
On the other hand, we have
\[
k_{\alpha}(1) = a_{\alpha} + b_{\alpha} = \varphi(\alpha) - \varphi(1) = h(\alpha).
\]
And so,
\[
h(\alpha u) = h(u) + h(\alpha) + a_{\alpha}(u - 1), \quad \alpha, u \in (0, \infty).
\]
Swapping $\alpha$ and $u$ gives
\[
h(u \alpha) = h(\alpha) + h(u) = a_u(\alpha - 1).
\]
Equating the two expression gives
\[
0 = a_{\alpha} (u - 1) - a_u (\alpha - 1).
\]
Thus, for any $\alpha, u \in (0, \infty) \setminus \{1\}$ we have
\[
\frac{a_{\alpha}}{\alpha - 1} = \frac{a_u}{u - 1}.
\]
We conclude that there is a constant $r \in \R$ such that
\[
a_{\alpha} = r (\alpha - 1).
\]

{\it Step 3: Cauchy's functional equation for $\tilde{h}(u) := h(u) - ru + r$.} From Step 2, we have
\[
h(\alpha u) = h(u) + h(\alpha) + r(\alpha - 1)(u - 1), \quad \alpha, u \in (0, \infty).
\]
Rearranging yields
\[
(h(\alpha u) - r \alpha u + r) = (h(\alpha) - r\alpha + r) + (h(u) - ru + r).
\]
Letting $\tilde{h}(u) = h(u) - ru + r$, we have the functional equation
\begin{equation} \label{eqn:h.functional.equation}
\tilde{h}(\alpha u) = \tilde{h}(\alpha) + \tilde{h}(u), \quad \alpha, u \in (0, \infty).
\end{equation}
If we make the exponential change of variables $\alpha = e^x$, $u = e^y$, $x, y \in \R$, and let $\psi(x) := \tilde{h}(e^x)$, then \eqref{eqn:h.functional.equation} is equivalent to {\it Cauchy's functional equation}
\begin{equation} \label{eqn:Cauchy}
\psi(x + y) = \psi(x) + \psi(y), \quad x, y \in \R.
\end{equation}
Since $\psi$ is measurable, there exists $c \in \R$ such that \eqref{eqn:Cauchy} $\psi(x) \equiv cx$, see \cite[Theorem 1.1.8]{leinster2021entropy}. Unwinding the transformations, we have
\[
\tilde{h}(u) = c \log u, \quad u > 0.
\]
(Alternatively, we may apply directly \cite[Corollary 1.1.11]{leinster2021entropy} to \eqref{eqn:h.functional.equation}.) It follows that
\begin{equation} \label{eqn:varphi.solution}
\varphi(u) = h(u) + \varphi(1) = \tilde{h}(u) + ru - r + \varphi(1) = c \log u + (\varphi(1)  - r) + ru.
\end{equation}
Thus, $\varphi$ is equal to $c \log u$ plus an affine function. Finally, we note that $\varphi$ given by \eqref{eqn:varphi.solution} is concave if and only if $c \geq 0$.
\end{proof}

\subsection{Via logarithmic divergence and cross-entropy} \label{sec:log.divergence}
In this subsection, we consider the excess growth rate as the divergence $\Gamma_{\bpi}\divg{\bY}{\bX}$ (see Definition \ref{def:egr.divergence}). We fix $n \geq 2$ and, for simplicity, restrict $\bpi \in \Delta_n^{\circ}$ so that $\bX, \bY \in (0, \infty)^n$. By num\'{e}raire invariance, we may replace $\bX$ and $\bY$ by $\bp = \cC[\bX]$ and $\bq = \cC[\bY]$ respectively, and hence regard $\Gamma_{\bpi}\divg{\cdot}{\cdot}$ as a divergence on $\Delta_n^{\circ}$. We characterize the excess growth rate as the unique {\it logarithmic divergence} (Definition \ref{def:log.divergence}) which is {\it perturbation invariant} in the sense of \eqref{eqn:perturbation.invariant} below. In fact, our result can be equivalently stated as a characterization theorem for the {\it cross-entropy}.

To motivate our result, we first prove a characterization of the Mahalanobis distance \eqref{eqn:Mahalanobis}.\footnote{This result is probably known by experts but we are unable to find an exact reference in the literature. The closest result we could locate, proved in \cite{NBN07}, states that the squared Mahalanobis distance is the only Bregman divergence on $\R^n$ which is {\it symmetric} in the sense that $\mathbf{B}_{\phi}\divg{\by}{\bx} = \mathbf{B}_{\phi}\divg{\bx}{\by}$ for all $\bx, \by$. } Recall that the {\it Bregman divergence} \cite{B67} of a differentiable convex function $\phi$ on a convex subset of $\R^n$ is defined by
\begin{equation} \label{eqn:Bregman.divergence}
B_{\phi}\divg{\by}{\bx} := \phi(\by) - \phi(\bx) - \nabla_{\by - \bx} \phi(\bx).
\end{equation}
When $\phi$ is strictly convex, we have $B_{\phi}\divg{\by}{\bx} = 0$ only if $\bx = \by$. If $\phi: \R^n \rightarrow \R$ is a quadratic function of the form $\phi(\bx) = \frac{1}{2} \bx^{\top} A \bx + \bb^{\top} \bx + \bc$ where $A$ is an $n \times n$  positive semidefinite matrix and $\bb, \bc \in \R^n$ (we regard $\bx$ as a column vector), then
\begin{equation*} %
B_{\phi}\divg{\by}{\bx} = (\by - \bx)^{\top} A (\by - \bx), \quad \bx, \by \in \R^n,
\end{equation*}
is a squared Mahalanobis distance (provided $A$ is strictly positive definite). %

\begin{theorem}[Characterization of squared Mahalanobis distance as a Bregman divergence] \label{thm:Mahalanobis}
Let $\phi: \R^n \rightarrow \R$ be $C^2$ (twice continuously differentiable)  and strictly convex.\footnote{It is possible to assume only that $\phi$ is $C^1$. We assume $C^2$ to shorten the proof.} The following are equivalent:
\begin{itemize}
\item[(i)] $B_{\phi}\divg{\cdot}{\cdot}$ is invariant under translation, in the sense that
\begin{equation} \label{eqn:Bregman.translation.invariant}
B_{\phi}\divg{\by + \bz}{\bx + \bz} = B_{\phi}\divg{\by}{\bx}, \quad \bx, \by, \bz \in \R^n.
\end{equation}
\item[(ii)] $\phi(\bx) = \frac{1}{2} \bx^{\top} A \bx + \bb^{\top} \bx + \bc$ for some strictly positive definite matrix $A \in \R^{n \times n}$ and $\bb, \bc \in \R^n$.
\end{itemize}
In particular, any translation invariant Bregman divergence is a squared Mahalanobis distance.
\end{theorem}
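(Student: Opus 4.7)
The plan is to prove the two directions separately, with (ii)$\Rightarrow$(i) being a direct computation and (i)$\Rightarrow$(ii) being the substantive part via a Taylor-expansion argument that forces the Hessian of $\phi$ to be constant.

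For (ii)$\Rightarrow$(i), I would simply compute the Bregman divergence of $\phi(\bx) = \tfrac12 \bx^\top A \bx + \bb^\top \bx + \bc$ directly. Using $\nabla \phi(\bx) = A\bx + \bb$, a routine cancellation gives
\[
B_\phi\divg{\by}{\bx} = \tfrac12 (\by - \bx)^\top A (\by - \bx),
\]
which depends only on the difference $\by - \bx$ and is hence invariant under translation. Strict positive-definiteness of $A$ (which corresponds to strict convexity of $\phi$) identifies this with a squared Mahalanobis distance with $\Sigma = A^{-1}$, cf.~\eqref{eqn:Mahalanobis}.

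For (i)$\Rightarrow$(ii), the key observation is that a second-order Taylor expansion of $\phi$ at $\bx$ (valid since $\phi\in C^2$) yields
\[
B_\phi\divg{\bx+\bh}{\bx} = \tfrac12 \bh^\top \nabla^2 \phi(\bx)\, \bh + o(|\bh|^2), \qquad \bh \to 0.
\]
Applying translation invariance \eqref{eqn:Bregman.translation.invariant} with $\by = \bx+\bh$ and an arbitrary shift $\bz$ gives the identity
\[
\tfrac12 \bh^\top \nabla^2\phi(\bx+\bz)\, \bh + o(|\bh|^2) = \tfrac12 \bh^\top \nabla^2 \phi(\bx)\, \bh + o(|\bh|^2).
\]
Dividing by $|\bh|^2$ and sending $\bh\to 0$ along an arbitrary unit direction $\bv$ shows that $\bv^\top \nabla^2 \phi(\bx+\bz)\, \bv = \bv^\top \nabla^2 \phi(\bx)\, \bv$ for every $\bv$; since the Hessians are symmetric, polarization forces $\nabla^2 \phi(\bx+\bz) = \nabla^2 \phi(\bx)$ for all $\bx,\bz$. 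Hence the Hessian is a constant symmetric matrix $A$, and strict convexity of $\phi$ implies $A$ is strictly positive definite.

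Integrating $\nabla^2 \phi \equiv A$ twice then produces $\phi(\bx) = \tfrac12 \bx^\top A \bx + \bb^\top \bx + \bc$ for some $\bb \in \R^n$ and $\bc \in \R$, giving (ii). The ``main obstacle'' is really the Hessian-is-constant step; everything else is either direct computation or standard integration. An equivalent but slightly more calculational alternative would be to differentiate the translation-invariance identity with respect to $\bz$ at $\bz=0$, obtaining for each coordinate $k$ the relation
\[
\partial_k \phi(\by) - \partial_k \phi(\bx) = \sum_{j} \bigl(\partial_k\partial_j \phi(\bx)\bigr)(y_j - x_j),
\]
which says each $\partial_k \phi$ coincides with its own first-order Taylor expansion at any point, hence is affine; one then integrates as before and uses symmetry of mixed partials to conclude that $A$ is symmetric.
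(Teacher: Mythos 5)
Your proof is correct and follows essentially the same route as the paper: both arguments reduce translation invariance to the statement that $\nabla^2\phi$ is constant (you via the second-order Taylor expansion of $B_\phi\divg{\bx+\bh}{\bx}$ and polarization, the paper by differentiating the expanded identity in $\by$ and then taking a difference quotient in $\bz$), after which one integrates to get a quadratic $\phi$. You also spell out the (ii)$\Rightarrow$(i) computation that the paper simply declares clear; no gaps.
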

\begin{proof}
It is clear that (ii) implies (i). Assume $\phi$ satisfies (i). Expanding and rearranging \eqref{eqn:Bregman.translation.invariant}, we have, for $\bx, \by, \bz \in \R^n$,
\[
(\phi(\by + \bz) - \phi(\by)) - (\phi(\bx + \bz) - \phi(\bx)) = \langle \nabla \phi(\bx + \bz) - \nabla \phi(\bx), \by - \bx \rangle.
\]
Differentiating with respect to $\by$ gives
\[
\phi(\by + \bz) - \phi(\by) = \phi(\bx + \bz) - \nabla \phi(\bx),
\]
which is independent of $\by$. Letting $\bz = t \bv$ for $\bv \in \R^n$, dividing both sides by $t \neq 0$ and letting $t \rightarrow 0$ shows that the Hessian $\nabla^2 \phi(\by)$ is a constant matrix $A$. It follows that $\phi$ is quadratic. Since $\phi$ is strictly convex, $A$ is strictly positive definite. Hence (ii) holds and the theorem is proved.
\end{proof}

\begin{remark}[Characterization of relative entropy as a Bregman divergence]
The negative Shannon entropy $\phi(\bp) = -H(\bp)$ is differentiable and strictly convex in $\bp \in \Delta_n^{\circ}$. It is well known (see e.g.~\cite[Chapter 1]{A16}) that the induced Bregman divergence is the relative entropy:
\[
B_{-H}\divg{\bp}{\bq} = H\divg{\bp}{\bq}, \quad \bp, \bq \in \Delta_n^{\circ}.
\]
We are unaware of a characterization of the relative entropy (within the family of Bregman divergences) which is a direct analogy of Theorem \ref{thm:Mahalanobis} or Theorem \ref{thm:egr.L.divergence} below. What we could find is the following result by Amari \cite[Corollary]{A09}: the relative entropy is the unique Bregman divergence which is also an $f$-divergence.
\end{remark}

Our third and last characterization of the excess growth rate is analogous to Theorem \ref{thm:Mahalanobis}, except that we need a different kind of divergence which is closely related to the R\'{e}nyi entropy/divergence (see Remark \ref{eg:Renyi.log.divergence}  below). By an {\it exponentially concave function} on $\Delta_n^{\circ}$, we mean a function $\varphi: \Delta_n^{\circ} \rightarrow \R$ such that $\Phi = e^{\varphi}$ is concave on $\Delta_n^{\circ}$. Clearly, if $\varphi$ is exponentially concave then $\varphi$ itself is concave. The following definition is taken from \cite{PW16}. 

\begin{definition}[Logarithmic divergence] \label{def:log.divergence}
Let $\varphi$ be differentiable and exponentially concave on $\Delta_n^{\circ}$. Its logarithmic divergence is the function $L_{\varphi}\divg{\cdot}{\cdot}: \Delta_n^{\circ} \times \Delta_n^{\circ} \rightarrow \R_+$ defined by
\begin{equation} \label{eqn:L.divergence}
L_{\varphi}\divg{\bq}{\bp} := \log \left( 1 + \nabla_{\bq-\bp} \varphi(\bp)\right) - \left( \varphi(\bq) - \varphi(\bp) \right), \quad (\bq, \bp) \in \Delta_n^{\circ} \times \Delta_n^{\circ}.
\end{equation}
\end{definition}

The {\it logarithmic divergence} is a logarithmic generalization of the Bregman divergence \eqref{eqn:Bregman.divergence}. To illustrate this point and to see that the logarithmic divergence is well defined, let $\varphi$ be exponentially concave and consider $\Phi = e^{\varphi}$ which is a positive concave function on $\Delta_n^{\circ}$. For $\bp, \bq \in \Delta_n^{\circ}$, concavity of $\Phi$ implies that
\[
\Phi(\bp) + \nabla_{\bq - \bp} \Phi(\bp) \geq \Phi(\bq).
\]
Dividing both sides by $\Phi(\bp) > 0$, we have
\[
1 + \nabla_{\bq - \bp} \varphi(\bp) \geq \frac{\Phi(\bq)}{\Phi(\bp)} = e^{\varphi(\bq) - \varphi(\bp)} > 0.
\]
We obtain the logarithmic divergence by taking the logarithm and then the difference. Exponential concavity of $\varphi$ leads to a logarithmic first-order approximation; it is more accurate than the usual linear approximation since $\log(1 + \nabla_{\bq-\bp}\varphi(\bp)) \leq \nabla_{\bq-\bp}\varphi(\bp)$. See \cite{PW16, PW18, W18, W19, WY19, wong2022tsallis} for in-depth studies of the logarithmic divergence motivated by portfolio theory and information geometry.

\medskip

Following \cite{PW16} (also see \cite[Example 3.1.6]{F02}),the excess growth rate can be expressed as a logarithmic divergence. Recall that the {\it cross-entropy} $H^{\times}\divg{\bp}{\bq}$ is defined for $\bp, \bq \in \Delta_n^{\circ}$ by
\begin{equation} \label{eqn:cross.entropy}
H^{\times}\divg{\bp}{\bq} := -\sum_{i = 1}^n p_i \log q_i.
\end{equation}

\begin{proposition}[Excess growth rate as a logarithmic divergence] \label{prop:egr.as.L.divergence}
For $\bpi = (\pi_1, \ldots, \pi_n) \in \Delta_n^{\circ}$, the function $\varphi(\cdot) = -H^{\times}\divg{\bpi}{\cdot}$ is exponentially concave on $\Delta_n^{\circ}$. Moreover, its logarithmic divergence is the excess growth rate:
\begin{equation} \label{eqn:egr.as.L.divergence}
L_{\varphi}\divg{\bq}{\bp} = \Gamma_{\bpi}\divg{\bq}{\bp}, \quad \bp, \bq \in \Delta_n^{\circ}.
\end{equation}
\end{proposition}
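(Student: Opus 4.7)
The plan is a direct computational verification. The key observation is that $\Phi := e^{\varphi}$ is the weighted geometric mean
\[
\Phi(\bp) = \prod_{i=1}^n p_i^{\pi_i},
\]
so the proposition amounts to (i) the concavity of the weighted geometric mean and (ii) an algebraic rearrangement showing that the logarithmic divergence of $\varphi$ equals $\log$ of the weighted arithmetic mean of $q_i/p_i$ minus the weighted arithmetic mean of $\log(q_i/p_i)$, as in \eqref{eqn:excess.growth.divergence}.

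For exponential concavity, I would show that $\Phi(\bp) = \prod_{i=1}^n p_i^{\pi_i}$ is concave on $\Delta_n^{\circ}$ (in fact on $(0,\infty)^n$). The cleanest route is a direct Hessian computation: since $\partial_i \Phi = (\pi_i / p_i) \Phi$ and $\partial_i \partial_j \Phi = (\pi_i \pi_j / (p_i p_j)) \Phi - \delta_{ij} (\pi_i / p_i^2) \Phi$, for any $\bv \in \R^n$, writing $u_i := v_i/p_i$, one gets
\[
\bv^{\top} \nabla^2 \Phi(\bp) \bv = \Phi(\bp) \left[ \Bigl(\sum_i \pi_i u_i\Bigr)^2 - \sum_i \pi_i u_i^2 \right] \leq 0,
\]
by Jensen's inequality applied to the strictly convex function $u \mapsto u^2$. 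Alternatively, one can invoke AM--GM (noting that $\Phi$ is $1$-homogeneous) or cite the fact that the weighted geometric mean is a classical example of a concave function.

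For the identity \eqref{eqn:egr.as.L.divergence}, I would compute both terms in \eqref{eqn:L.divergence} directly. From $\varphi(\bp) = \sum_i \pi_i \log p_i$, the gradient is $\nabla \varphi(\bp) = (\pi_i/p_i)_{i}$, so
\[
\nabla_{\bq - \bp}\varphi(\bp) = \sum_i \pi_i \frac{q_i}{p_i} - \sum_i \pi_i = \sum_i \pi_i \frac{q_i}{p_i} - 1,
\]
hence $1 + \nabla_{\bq-\bp}\varphi(\bp) = \sum_i \pi_i (q_i/p_i)$. Also $\varphi(\bq) - \varphi(\bp) = \sum_i \pi_i \log(q_i/p_i)$. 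Substituting into \eqref{eqn:L.divergence} and comparing with Definition \ref{def:egr.divergence} (with $\supp(\bpi) = [n]$ since $\bpi \in \Delta_n^{\circ}$) gives
\[
L_{\varphi}\divg{\bq}{\bp} = \log\Bigl(\sum_i \pi_i \tfrac{q_i}{p_i}\Bigr) - \sum_i \pi_i \log \tfrac{q_i}{p_i} = \Gamma_{\bpi}\divg{\bq}{\bp}.
\]

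There is essentially no real obstacle here beyond bookkeeping; the only nontrivial ingredient is concavity of the weighted geometric mean, which is classical. The value of the statement lies less in the proof than in the observation itself---that $-H^{\times}\divg{\bpi}{\cdot}$ is exponentially concave and its generalized first-order slack is exactly the excess growth rate divergence---setting the stage for the characterization via perturbation invariance within the family of logarithmic divergences.
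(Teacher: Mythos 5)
Your proposal is correct and follows essentially the same route as the paper's own proof: identify $\Phi = e^{\varphi}$ as the weighted geometric mean to get exponential concavity, then compute $1+\nabla_{\bq-\bp}\varphi(\bp) = \sum_i \pi_i q_i/p_i$ and substitute into \eqref{eqn:L.divergence}. The only difference is that you supply an explicit Hessian verification of the concavity of the weighted geometric mean, whereas the paper simply cites it as classical; both are fine.
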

\begin{proof}
For completeness, we provide a sketch of the proof. Consider $\varphi(\bp) = -H^{\times}\divg{\bpi}{\bp} = \sum_{i = 1}^n \pi_i \log p_i$. Note that $\Phi(\bp) = e^{\varphi(\bp)} = \prod_{i = 1}^n p_i^{\pi_i}$ is the weighted geometric mean that is concave in $\bp \in \Delta_n^{\circ}$. Hence $\varphi$ is exponentially concave. A direct computation shows that
\[
1 + \nabla_{\bq-\bp} \varphi(\bp) = \sum_{i = 1}^n \pi_i \frac{q_i}{p_i}.
\]
It follows from \eqref{eqn:L.divergence} that 
\begin{equation*}
\begin{split}
L_{\varphi}\divg{\bq}{\bp} &= \log \left( \sum_{i = 1}^n \pi_i \frac{q_i}{p_i} \right) - \sum_{i = 1}^n \pi_i \log \frac{q_i}{p_i} = \Gamma_{\bpi}\divg{\bq}{\bp}. \qedhere
\end{split}
\end{equation*}
\end{proof}

We give another fundamental example of logarithmic divergence that can be expressed in terms of information-theoretic quantities. 

\begin{example}[R\'{e}nyi divergence] \label{eg:Renyi.log.divergence} 
Let $\lambda \in (0, 1)$ and $\alpha = 1/\lambda \in (1, \infty)$. Consider
\begin{equation*} %
\varphi(\bp) = (\alpha - 1) H_{\alpha}(\lambda \otimes \bp), \quad \bp \in \Delta_n^{\circ},
\end{equation*}
where $H_{\alpha}(\bp) := (1/(1 - \alpha)) \log \left( \sum_{i = 1}^n x_i^{\alpha} \right)$ is the {\it R\'{e}nyi entropy} of order $\alpha$. Then $\varphi$ is exponentially concave and its logarithmic divergence is given by
\begin{equation}
L_{\varphi}\divg{\bq}{\bp} = (\alpha - 1) H_{\alpha}\divg{ \lambda \otimes \bq }{\lambda \otimes \bp},
\end{equation}
where $H_{\alpha}\divg{\bp}{\bq} 
= (1/(\alpha - 1)) \log \left( \sum_{i = 1}^n p_i^{\alpha} q_i^{1 - \alpha} \right)$ is the {\it R\'{e}nyi divergence} of order $\alpha$ (this is a special case of \eqref{eqn:Renyi.divergence.densities}). The details can be found in \cite[Proposition 2]{W19}. General relationships between the logarithmic divergence and R\'{e}nyi entropy/divergence are developed in \cite{W18, wong2022tsallis}.
\end{example}

We set out to characterize the excess growth rate within the family of logarithmic divergences. To simplify the proof, we impose some regularity conditions on $\varphi$. We say that an exponentially concave function $\varphi: \Delta_n^{\circ} \rightarrow \R$ is {\it regular} if it is $C^4$ on $\Delta_n^{\circ}$ and, for each $\bp \in \Delta_n^{\circ}$ and $\bv \in \R^n \setminus \{0\}$ with $v_1 + \cdots + v_n = 0$ (that is, $\bv$ is tangent to $\Delta_n^{\circ}$), we have
\begin{equation} \label{eqn:Phi.condition}
\left. \frac{\dd^2 }{\dd t^2} \right|_{t = 0} \Phi(\bp + t\bv) < 0, \quad \text{where } \Phi = e^{\varphi}.
\end{equation}
In particular, \eqref{eqn:Phi.condition} implies that $\Phi$ is strictly concave.

\begin{theorem}[Characterization III] \label{thm:egr.L.divergence}
Let $\varphi: \Delta_n^{\circ} \rightarrow \R$ be regular exponentially concave. The following are equivalent:
\begin{itemize}
\item[(i)] $L_{\varphi}\divg{\cdot}{\cdot}$ is invariant under perturbations, in the sense that
\begin{equation} \label{eqn:perturbation.invariant}
L_{\varphi}\divg{\bq \oplus \bh }{ \bp \oplus \bh}  = L_{\varphi} \divg{\bq}{\bp
} \quad \text{for } \bp, \bq, \bh \in \Delta_n^{\circ},
\end{equation}
\item[(ii)] $\varphi = -H^{\times}\divg{\bpi}{\cdot} + c$ for some $\bpi \in \Delta_n^{\circ}$ and $c \in \R$.
\end{itemize}
\end{theorem}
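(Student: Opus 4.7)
The implication (ii) $\Rightarrow$ (i) is direct: Proposition \ref{prop:egr.as.L.divergence} identifies $L_\varphi = \Gamma_{\bpi}$ (noting $L_{\varphi+c}=L_\varphi$), so it suffices to verify perturbation invariance of $\Gamma_{\bpi}\divg{\cdot}{\cdot}$. A quick computation gives $(\bq \oplus \bh)_i/(\bp \oplus \bh)_i = (q_i/p_i) \cdot (\langle \bp, \bh\rangle/\langle \bq, \bh\rangle)$, whose second factor is independent of $i$, and num\'{e}raire invariance (Proposition \ref{prop:numeraire.invariance}) closes this direction.

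For the converse, I would begin with a group-theoretic reduction. Taking $\bh = \cC[\bp^{-1}]$ in the perturbation-invariance identity \eqref{eqn:perturbation.invariant} yields $\bp \oplus \bh = \barE$ and $\bq \oplus \bh = \bq \ominus \bp$, hence
\[
L_\varphi \divg{\bq}{\bp} = L_\varphi \divg{\bq \ominus \bp}{\barE} =: D(\bq \ominus \bp),
\]
so $L_\varphi$ depends only on the simplicial difference $\bq \ominus \bp$. Evaluating at $\bp = \barE$ and writing $\ba := \nabla \varphi(\barE)$, the defining formula \eqref{eqn:L.divergence} gives
\[
\varphi(\bq) = \log\bigl(1 + \langle \bq - \barE, \ba \rangle\bigr) - D(\bq) + \varphi(\barE),
\]
so $\varphi$ is determined by $D$, the gradient $\ba$, and the constant $\varphi(\barE)$.

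To pin down $D$, I would take the mixed second partial $\partial_{q_j}\partial_{p_k}$ of both sides of $L_\varphi\divg{\bq}{\bp} = D(\bq \ominus \bp)$ and evaluate at $\bq = \bp$, using a fixed extension of $\varphi$ off the simplex (e.g.\ the degree-zero homogeneous extension) to interpret the ambient derivatives. A direct computation shows the left side reduces to $\partial_j \partial_k \Phi(\bp)/\Phi(\bp)$ with $\Phi = e^\varphi$, while the right side becomes $M_{jk}/(p_j p_k)$ for a constant matrix $M$ built from $\nabla^2 D(\barE)$. Hence $\Phi$ satisfies the overdetermined PDE system
\[
p_j p_k\, \partial_j \partial_k \Phi(\bp) = M_{jk}\, \Phi(\bp), \quad \bp \in \Delta_n^{\circ}.
\]
Setting $g_j(\bp) := p_j \partial_j \log \Phi(\bp)$ turns the diagonal equations into Riccati ODEs $p_j \partial_j g_j = M_{jj} + g_j - g_j^2$ and the off-diagonal equations into $p_k \partial_k g_j = M_{jk} - g_j g_k$; combined with the consistency $p_k \partial_k g_j = p_j \partial_j g_k$ (from symmetry of mixed partials), these force each $g_j$ to be a constant $\alpha_j$, yielding $\Phi(\bp) = c \prod_i p_i^{\alpha_i}$. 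Substituting this back into $L_\varphi\divg{\bq}{\bp} = D(\bq \ominus \bp)$ and simplifying produces a residual term proportional to $\bigl(1 - \sum_i \alpha_i\bigr) \log\bigl(\sum_j q_j / p_j\bigr)$, which depends on $\bp, \bq$ separately (not just on $\bq \ominus \bp$) unless $\sum_i \alpha_i = 1$. Regularity of $\varphi$ then forces $\alpha_i > 0$, so $\bpi := (\alpha_1, \ldots, \alpha_n) \in \Delta_n^{\circ}$ and $\varphi = -H^{\times} \divg{\bpi}{\cdot} + c$, as required.

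The principal obstacle is the uniqueness argument for the PDE system, which is delicate because $\Phi$ lives on the $(n-1)$-dimensional simplex and the ambient second derivatives depend on the chosen extension. A cleaner alternative is to pass to log-coordinates $u_i = \log p_i$, in which $(\Delta_n^{\circ}, \oplus)$ becomes a translation group and the PDE reduces to the constant-coefficient linear system $(\partial_{u_j}\partial_{u_k} - \delta_{jk}\partial_{u_k})\Phi = M_{jk}\Phi$; its characteristic variety $\{\boldsymbol{\beta} : \beta_j \beta_k - \delta_{jk}\beta_k = M_{jk}\}$ is (generically) a single point, so the space of positive $C^2$ solutions is one-dimensional and consists of the exponentials $\Phi = c\, e^{\sum \alpha_i u_i} = c\prod p_i^{\alpha_i}$. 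The remaining normalization $\sum \alpha_i = 1$ and the positivity $\alpha_i > 0$ then follow from the full functional equation and from strict exponential concavity, respectively.
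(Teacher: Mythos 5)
Your (ii) $\Rightarrow$ (i) direction is correct and essentially the paper's argument, and your opening reduction for the converse --- taking $\bh = \cC[\bp^{-1}]$ to conclude $L_{\varphi}\divg{\bq}{\bp} = D(\bq \ominus \bp)$ --- is valid and a sensible starting point. The gap is in the claim that the second-order (metric-level) consequences of this identity pin down $\Phi$ as $c\prod_i p_i^{\alpha_i}$. Your constant-coefficient system in log-coordinates, $(\partial_{u_j}\partial_{u_k}-\delta_{jk}\partial_{u_k})\Phi = M_{jk}\Phi$, is \emph{linear} in $\Phi$, and its characteristic variety is generically \emph{not} a single point. Concretely, for $n=2$ and $\alpha \in (0,1)\setminus\{\tfrac12\}$, both $\beta=(\alpha,1-\alpha)$ and $\beta'=(1-\alpha,\alpha)$ satisfy $\beta_j\beta_k-\delta_{jk}\beta_k=M_{jk}$ for the \emph{same} matrix $M$ (with $M_{11}=M_{22}=\alpha^2-\alpha$, $M_{12}=\alpha(1-\alpha)$), so
\[
\Phi(\bp) = C_1\, p_1^{\alpha}p_2^{1-\alpha} + C_2\, p_1^{1-\alpha}p_2^{\alpha}, \qquad C_1, C_2 > 0,
\]
is a positive, smooth, strictly (exponentially) concave solution of your entire second-order system that is not of product form. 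Hence the solution space is not one-dimensional, and neither the Riccati manipulation nor the characteristic-variety argument forces the $g_j$ to be constant. This is not a technicality: the second-order conditions encode only that the induced Riemannian metric is translation invariant in exponential coordinates, which is strictly weaker than perturbation invariance of the divergence. The paper's proof works precisely because it goes one order higher --- it uses the third-order structure (the Christoffel symbols of the induced primal connection, via \cite[Theorem 4.7]{PW18}), whose constancy under translation directly forces the portfolio map $\bpi(\cdot)$ to be constant.

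To repair your argument you would either have to (a) carry the extra solutions through to the final substitution step and rule out the mixed exponentials $\sum_\beta C_\beta e^{\langle\beta,\bu\rangle}$ using the full functional equation $L_\varphi\divg{\bq}{\bp}=D(\bq\ominus\bp)$ (not merely its second jet at the diagonal), or (b) differentiate the functional equation to third order, which essentially reproduces the paper's Christoffel-symbol computation. A secondary but real issue is that the identity $L_\varphi\divg{\bq}{\bp}=D(\bq\ominus\bp)$ holds only on $\Delta_n^\circ\times\Delta_n^\circ$, so only derivatives in directions tangent to the simplex are meaningful; the two sides do not admit compatible extensions off the simplex (the $\oplus$-side is $0$-homogeneous in each argument while $L_\varphi$ with any fixed extension of $\varphi$ is not), so the full $n\times n$ ambient system $p_jp_k\partial_j\partial_k\Phi = M_{jk}\Phi$ is not actually available --- only its restriction to the $(n-1)$-dimensional tangent space, which further enlarges the solution set you must contend with.
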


\begin{figure}[t!] %
\vspace{-1.8cm}
\includegraphics[scale = 0.7]{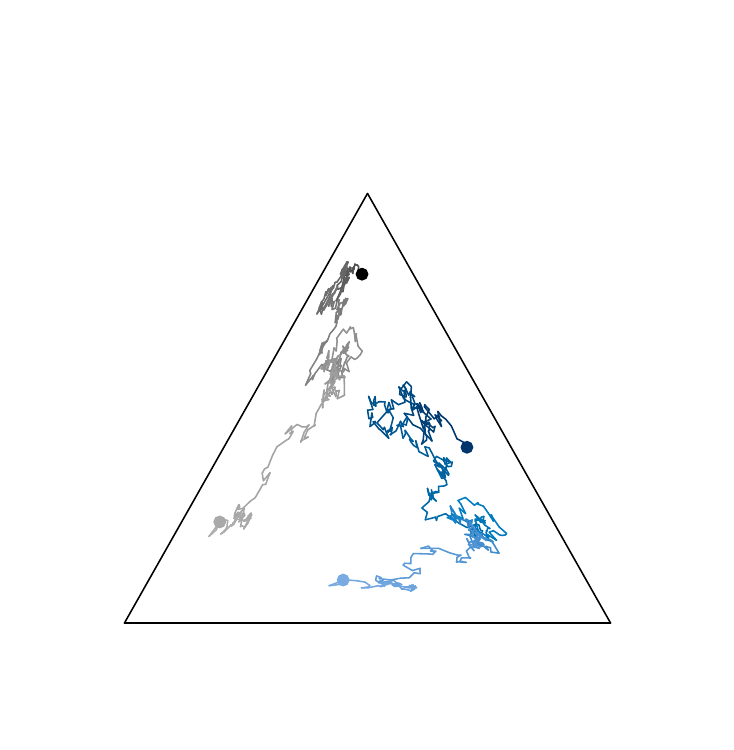}
\vspace{-1.5cm}
\caption{A path $\bp(t)$ in $\Delta_n^{\circ}$ and its perturbation $\bq(t) = \bp(t) \oplus \bh$.}
\label{fig:translation}
\end{figure}

We illustrate the perturbation invariance property \eqref{eqn:perturbation.invariant} in Figure \ref{fig:translation}. Consider a simulated path $(\bp(t))_{t = 0}^T$ in $\Delta_n^{\circ}$.\footnote{Here $n = 3$, $T = 500$ and the path is simulated using a $3$-dimensional Brownian bridge.} For some $\bh \in \Delta_n^{\circ}$, let $
\bq(t) = \bp(t) \oplus \bh$ be a perturbed path. The perturbation appears to be non-linear in the figure, but it is an ordinary translation in the Aitchison vector space $(\Delta_n^{\circ}, \oplus, \otimes)$. Now, \eqref{eqn:perturbation.invariant} implies that
\[
\sum_{t = 0}^T L_{\varphi}\divg{\bp(t + 1)}{\bp(t)} = \sum_{t = 0}^T L_{\varphi}\divg{\bq(t + 1)}{\bq(t)}.
\]
That is, the two paths have the same cumulative (relative) volatility. Perturbation invariance is closely related to num\'{e}raire invariance. Observe that \eqref{eqn:perturbation.invariant} is equivalent to the identity
\begin{equation} \label{eqn:perturbation.invariant2}
L_{\varphi}\divg{\bp \oplus \br}{\bp} = L_{\varphi}\divg{\bq \oplus \br}{\bq}, \quad \bp, \bq, \br \in \Delta_n^{\circ}.
\end{equation}
(To see this, in \eqref{eqn:perturbation.invariant} replace $\bq$ by $\bp \oplus \br$ and $\bh$ by $\bq \ominus \bp$.) In \eqref{eqn:perturbation.invariant2}, we regard $\br = \cC[\bR]$ as the (normalized) gross return. If $\bp$ is the (normalized) initial price of the assets, then $\bp \oplus \br$ is the (normalized) final prices. The identity \eqref{eqn:perturbation.invariant2} states that the logarithmic divergence depends only on the returns and is independent of the initial prices. Theorem \ref{thm:egr.L.divergence} states that the excess growth rate is the only logarithmic divergence with this property. We believe the regularity conditions can be partially relaxed but do not pursue this further in this paper.

\begin{proof}[Proof of Theorem \ref{thm:egr.L.divergence}]\footnote{This result was claimed in \cite[Example 3.10]{PW18} without proof. We provide a complete argument here.}
We first show that (ii) implies (i). Suppose that $\varphi = -H(\bpi, \cdot) + c$ for some $\bpi \in \Delta_n^{\circ}$ and $c \in \R$. Since $\varphi$ and $-H(\bpi, \cdot)$ only differ by a constant, they induce the same logarithmic divergence. By Proposition \ref{prop:egr.as.L.divergence} (and num\'{e}raire invariance of $\Gamma$), we have $L_{\varphi} \divg{\bq }{ \bp} = \Gamma(\bpi, \bq \ominus \bp)$. Since the perturbation operation is commutative on $\Delta_n^{\circ}$, we have
\begin{equation*}
\begin{split}
L_{\varphi}\divg{\bq \oplus \br }{ \bp \oplus \br} &= \Gamma(\bpi, ((\bq \oplus \br) \ominus (\bp \oplus \br))) \\
&= \Gamma(\bpi, \bq \ominus \bp) \\
&= L_{\varphi}\divg{\bq }{ \bp}.
\end{split}
\end{equation*}
Hence $\bL_{\varphi}$ is invariant under perturbation.

The proof of the converse is more delicate. We will use tools from stochastic portfolio theory and information geometry, which will be introduced as needed, to derive differential implications of the functional equation \eqref{eqn:perturbation.invariant}. Suppose $\bL_{\varphi}$ is invariant under perturbation. Define a mapping $\bpi: \Delta_n^{\circ} \rightarrow \R^n$ by
\begin{equation} \label{eqn:fgp}
\bpi_i(\bp) := p_i \left(1 + \nabla_{\be_i - \bp} \varphi(\bp)\right), \quad i = 1, \ldots, n,
\end{equation}
where $(\be_1, \ldots, \be_n)$ is the standard basis of $\R^n$. Since $\Phi$ is strictly concave, by \cite[Proposition 6]{PW16} we have that $\bpi(\bp) \in \Delta_n^{\circ}$ for $\bp \in \Delta_n^{\circ}$. Hence, $\bpi$ is a mapping from $\Delta_n^{\circ}$ into itself. We call $\bpi$ the {\it portfolio map} generated by $\varphi$. We claim that $\bpi(\bp)$ is {\it constant} in $\bp \in \Delta_n^{\circ}$. By an abuse of notation, we have $\bpi(\bp) \equiv \bpi$ for some element $\bpi$ of $\Delta_n^{\circ}$. On the other hand, the portfolio map generated by the exponentially concave function $-H(\bpi, \cdot)$ is the constant $\bpi$ \cite[Example 3.1.6]{F02}. Then, by \cite[Proposition 6(i)]{PW16}, we have that $\varphi = -H(\bpi, \cdot) + c$ for some $c \in \R$.\footnote{This is a variant of the classical fact that if two functions have the same gradient on a domain then they differ by a constant.}

To show that $\bpi(\cdot)$ is a constant mapping, we switch to another coordinate system on $\Delta_n^{\circ}$ under which the meaning of invariance under perturbation is more apparent. For $\bq \in \Delta_n^{\circ}$, we define its {\it exponential coordinates} $\boldsymbol{\theta} = (\theta_1, \ldots, \theta_{n-1}) \in \R^{n-1}$ by
\[
\theta_i = \log \frac{q_i}{q_n}, \quad i = 1, \ldots, n - 1.
\]
Similarly, let $\boldsymbol{\phi} = (\phi_1, \ldots, \phi_{n-1})$ be the exponential coordinates of $\bp \in \Delta_n^{\circ}$. Now, it is easy to verify that the exponential coordinates of $\bq \oplus \bp$ are
\[
\log \frac{(\bq \oplus \bp)_i}{(\bq \oplus \bp)_n} = \theta_i + \phi_i, \quad i = 1, \ldots, n - 1.
\]
That is, the exponential coordinate system is an isomorphism between the commutative groups $(\Delta_n^{\circ}, \oplus)$ and $(\R^{n-1}, +)$. 

Let $\widetilde{\bL}_{\varphi}: \R^{n-1} \times \R^{n-1} \rightarrow \R_+$ be the logarithmic divergence of $\varphi$ written in exponential coordinates:
\[
\widetilde{\bL}_{\varphi}\divg{\boldsymbol{\theta} }{ \boldsymbol{\phi}} := \bL_{\varphi}\divg{\bq }{ \bp}.
\]
The assumption that $\bL_{\varphi}$ is invariant under perturbation is equivalent to the condition that $\widetilde{\bL}_{\varphi}$ is {\it invariant under translation}:
\[
\widetilde{L}_{\varphi} \divg{\boldsymbol{\theta} + \bh}{ \boldsymbol{\phi} + \bh} = \widetilde{L}_{\varphi}\divg{\boldsymbol{\theta} }{ \boldsymbol{\phi}}, \quad \boldsymbol{\theta}, \boldsymbol{\phi}, \bh \in \R^{n-1}.
\]
For $\boldsymbol{\theta} \in \R^{n-1}$, we define
\[
g_{ij}(\boldsymbol{\theta}) := - \left. \frac{\partial}{\partial \theta_i} \frac{\partial}{\partial \phi_j} \widetilde{L}_{\varphi} \divg{\boldsymbol{\theta} }{ \boldsymbol{\phi}} \right|_{\boldsymbol{\phi} = \boldsymbol{\theta}}, \quad i, j = 1, \ldots, n - 1.
\]
In information geometry (see \cite[Chapter 6]{A16}), the matrix $(g_{ij}(\boldsymbol{\theta}))$ represents the {\it Riemannian metric} on $\Delta_n^{\circ}$ induced by the divergence $\bL_{\varphi}$, when expressed under the exponential coordinate system. The assumption that $\varphi$ is regular implies that the matrix $(g_{ij}(\boldsymbol{\theta}))$ is symmetric and strictly positive definite (see \cite[Theorem 4.5]{PW18}). We denote its inverse by $(g^{ij}(\boldsymbol{\theta}))$.

Furthermore, we define (using the $C^4$ condition)
\[
\Gamma_{ijk}(\boldsymbol{\theta}) := - \left. \frac{\partial}{\partial \theta_i} \frac{\partial}{\partial \theta_j} \frac{\partial}{\partial \phi_k} \widetilde{L}_{\varphi} \divg{\boldsymbol{\theta}}{ \boldsymbol{\phi}} \right|_{\boldsymbol{\phi} = \boldsymbol{\theta}}, \quad i, j, k = 1, \ldots, n - 1,
\]
and
\[
\Gamma_{ij}^k(\boldsymbol{\theta}) := \sum_{\ell = 1}^{n-1} \Gamma_{ij\ell}(\boldsymbol{\theta}) g^{\ell k}(\boldsymbol{\theta}), \quad i, j, k = 1, \ldots, n - 1.
\]
These are the {\it Christoffel symbols} of the so-called {\it primal affine connection} induced by the divergence. 
By \cite[Theorem 4.7]{PW18}, we have the identity
\begin{equation} \label{eqn:Christoffel.symbols}
\Gamma_{ij}^k(\boldsymbol{\theta}) = \delta_{ijk} - \delta_{ik} \bpi_j(\boldsymbol{\theta}) - \delta_{jk} \bpi_i(\boldsymbol{\theta}), \quad \boldsymbol{\theta} \in \mathbb{R}^{n-1}, %
\end{equation}
where $\delta_{ijk}$, $\delta_{ik}$ and $\delta_{jk}$ are Kronecker deltas and $\bpi(\boldsymbol{\theta}) := \bpi(\bp)$ is the portfolio map expressed in exponential coordinates. 

The key observation is that since $\widetilde{L}_{\varphi}$ is translation invariant, the Christoffel symbols $\Gamma_{ij}^k(\boldsymbol{\theta})$ are {\it constant} in $\boldsymbol{\theta}$. Differentiating \eqref{eqn:Christoffel.symbols} with respect to $\theta_{\ell}$ gives
\[
-\delta_{ik} \frac{\partial}{\partial \theta_{\ell}} \bpi_j(\boldsymbol{\theta}) - \delta_{jk} \frac{\partial}{\partial \theta_{\ell}} \bpi_i(\boldsymbol{\theta}) = 0, \quad i, j, k, \ell = 1, \ldots, n - 1.
\]
Now, setting $i = j = k$ gives
\[
\frac{\partial}{\partial \theta_{\ell}} \bpi_i(\boldsymbol{\theta}) = 0, \quad i, \ell = 1, \ldots, n - 1.
\]
It follows that $\bpi(\boldsymbol{\theta})$ is constant in $\boldsymbol{\theta}$ (and hence $\bpi(\bp)$ is constant in $\bp$), and the claim is proved.
\end{proof}

\begin{remark}[Excess growth rate and the Fisher--Rao metric] \label{rmk:Fisher.Rao}
In Example \ref{eg:chain.rule} we computed the Taylor approximation of $\gamma(\bpi, \br)$ when $\br \approx 0$. A similar computation, applied to $\Gamma_{\bpi}\divg{\bp + t \bv}{\bp}$ for $\bp \in \Delta_n^{\circ}$ and $\bv \in \R^n$ tangent to $\Delta_n^{\circ}$ (i.e., $\sum_{i = 1}^n v_i = 0$), shows that
\begin{equation} \label{eqn:EGR.expansion2}
\Gamma_{\bpi}\divg{\bp + t\bv}{\bp} = \frac{t^2}{2} \sum_{i, j = 1}^n \frac{\pi_i(\delta_{ij} - \pi_j)}{p_ip_j} v_iv_j  + o(t^2), \quad \text{as } t \rightarrow 0.
\end{equation}
In information-geometric language (see \cite[Chapter 6]{A16}), this expansion defines the Riemannian metric induced by the divergence $\Gamma_{\bpi}\divg{\cdot}{\cdot}$. Letting $\bpi = \bp$ gives
\[
\Gamma_{\bp}\divg{\bp + t\bv}{\bp} = \frac{t^2}{2} \sum_{i = 1}^n \frac{v_i^2}{p_i} + o(t^2).
\]
Thus, we recover the {\it Fisher--Rao metric} $\|\bv\|_{\bp}^2 := \sum_{i = 1}^n v_i^2/p_i$ at $\bp \in \Delta_n^{\circ}$. Further details can be found in \cite[Section 2.6]{PW16} and \cite{PW18}.
\end{remark}

\section{Optimization} \label{sec:optimization}

In this section, we study how the portfolio weights may be chosen to maximize the excess growth rate. For a random log-return vector $\br$ with values in $\R^n$, write $\bR:=e^{\br}$ and $\bm:=\mathbb{E}[\br]$.  The expected excess-growth objective is
\begin{equation}\label{eqn:obj.expected.egr}
J(\bpi):=\mathbb{E}\big[\gamma(\bpi,\br)\big]
=\mathbb{E}\big[\log\langle\bpi,\bR\rangle\big]-\langle\bpi,\bm\rangle,
\qquad \bpi\in\Delta_n.
\end{equation}
Thus $J$ is the part of expected log-wealth growth that remains after subtracting the portfolio-weighted average of the assets' expected log returns.  The corresponding optimization problem is
\begin{equation}\label{eqn:expected.egr.objective}
\max_{\bpi\in\Delta_n}\ J(\bpi).
\end{equation}
If the coordinates of $\bm$ are all equal, then maximizing $J$ agrees with log-wealth maximization and leads to the classical growth-optimal portfolio problem discussed below in Remark~\ref{rmk:growth.optimal}. 

As noted in Appendix \ref{sec:egr.applications}, the log-wealth of a constant rebalanced portfolio admits a decomposition\footnote{The decomposition is more complex for portfolios whose holdings change over time; see \cite{PW13} and \cite[Corollary 1.1.6]{F02} for the continuous-time analogue.} in which the excess growth rate captures a rebalancing premium arising from the cross-sectional dispersion of asset returns, while the remaining term is the weighted average log return of the assets. Thus the objective $J$ targets the contribution to log-growth attributable to diversification and rebalancing, rather than the component driven by forecasts of individual expected returns, which are difficult to estimate. This perspective is central in stochastic portfolio theory, where excess growth is closely related to volatility harvesting and the benefits of diversification; see \cite[Chapters 1--2]{F02}. For these reasons, optimization of the excess growth rate is both practically appealing and theoretically informative. Related optimization problems for the excess growth rate have also been studied in \cite{ding2023optimization, maeso2020maximizing}. 

The deterministic problem
\begin{equation}\label{eqn:deterministic.egr.objective}
\max_{\bpi\in\Delta_n}\ \gamma(\bpi,\br)
\end{equation}
is the special case in which the log-return vector is constant.  Although this case is degenerate from the viewpoint of one-period portfolio choice, it is analytically useful: it has an explicit solution and explains the geometry behind the objective.  We first solve this deterministic problem, record a short variational consequence, and then return to the expected objective \eqref{eqn:expected.egr.objective}.

\subsection{The deterministic problem}\label{subsec:max-gamma}

Consider the Lagrangian for the deterministic concave maximization problem \eqref{eqn:deterministic.egr.objective}:
\[
\mathcal L(\bpi,\lambda,\bmu)
=
\log\!\left(\sum_{i=1}^n \pi_i e^{r_i}\right)
-\sum_{i=1}^n \pi_i r_i
-\lambda\left(\sum_{i=1}^n\pi_i-1\right)
+\sum_{i=1}^n \mu_i \pi_i,
\]
where the multiplier $\lambda\in\R$ enforces $\sum_{i = 1}^n\pi_i=1$ and the multipliers $\mu_i\ge 0$ enforce $\pi_i\ge 0$. The feasible set $\Delta_n$ is nonempty, compact, and convex. Moreover, {\it Slater's condition} holds: there exists a strictly feasible point for the inequality constraints that satisfies the equality constraint, e.g. $\barE_n=(1/n,\ldots,1/n)$.
As a result, strong duality holds (see \cite[Section 5.2.3]{boyd2004convex}) and the Karush--Kuhn--Tucker (KKT) conditions are necessary and sufficient for optimality.

At any maximizer, the KKT conditions require
\[
\frac{\partial \mathcal L}{\partial \pi_i}
=
\frac{e^{r_i}}{\sum_{j=1}^n \pi_j e^{r_j}}
-r_i-\lambda+\mu_i
=0,
\]
together with
\[
\mu_i\ge 0,\quad \pi_i\ge 0,\quad \mu_i\pi_i=0\ \text{(complementary slackness)},\quad \sum_{i=1}^n\pi_i=1.
\]
In particular, if $\pi_i>0$ then $\mu_i=0$ and
\[
\frac{e^{r_i}}{\sum_{j=1}^n \pi_j e^{r_j}}-r_i=\lambda,
\]
while for $\pi_i=0$ we have
\[
\frac{e^{r_i}}{\sum_{j=1}^n \pi_j e^{r_j}}-r_i \le \lambda.
\]
This leads us to the following structural characterization of any optimizer. In the sequel we will see that as long as $\br$ has distinct entries the optimizer is unique.

\begin{lemma}\label{lem:support-exactly-two}
If $\br\in\mathbb{R}^n$ has $n\geq2$ distinct coordinates, any maximizer $\bpi^\star$ of \eqref{eqn:deterministic.egr.objective} is supported on \emph{exactly} two indices; i.e., $\lvert\supp(\bpi^\star)\rvert=2$. In particular, $\bpi^\star$ has support on the maximum and minimum of $\br$, $r_{(n)}$ and $r_{(1)}$.
\end{lemma}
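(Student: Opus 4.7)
The plan is to exploit the KKT stationarity conditions already derived in the excerpt and convert them into a statement about the level set of a strictly convex auxiliary function of a single variable. Set $M := \sum_{j=1}^n \pi_j^\star e^{r_j} > 0$ and define $f : \R \to \R$ by $f(r) := e^{r}/M - r$. The KKT conditions recorded just above the lemma are then precisely
\[
f(r_i) = \lambda \ \text{for every } i \in \supp(\bpi^\star), \qquad f(r_k) \le \lambda \ \text{for every } k \notin \supp(\bpi^\star).
\]
Since $f''(r) = e^{r}/M > 0$, the function $f$ is strictly convex, so the level set $\{r \in \R : f(r) = \lambda\}$ has cardinality at most two. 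Because the entries of $\br$ are distinct, this immediately gives $\lvert\supp(\bpi^\star)\rvert \le 2$.

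Next I would rule out a singleton support. If $\lvert\supp(\bpi^\star)\rvert = 1$, then $\bpi^\star = \be_i$ for some $i$ and $\gamma(\bpi^\star, \br) = 0$. On the other hand, picking any two indices $k \neq \ell$ (possible since $n \ge 2$) and the uniform two-point mixture $\bpi = \tfrac12(\be_k + \be_\ell)$ gives $\gamma(\bpi, \br) > 0$ by strict concavity of $\log$ applied to the distinct values $r_k, r_\ell$ (equivalently, the strict AM-GM gap noted in Remark \ref{rmk:EGR.def}(i)). Therefore any maximizer has $\lvert\supp(\bpi^\star)\rvert = 2$.

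To identify the two active indices, write $\supp(\bpi^\star) = \{i, j\}$ with $r_i < r_j$. Strict convexity of $f$ together with $f(r_i) = f(r_j) = \lambda$ forces $f(r) < \lambda$ for $r \in (r_i, r_j)$ and $f(r) > \lambda$ for every $r \notin [r_i, r_j]$. But the KKT inequality at any inactive coordinate requires $f(r_k) \le \lambda$ for all $k \notin \{i, j\}$, which forces $r_k \in [r_i, r_j]$. Combined with the hypothesis that the entries of $\br$ are distinct, this is only possible if $r_i = r_{(1)}$ and $r_j = r_{(n)}$, proving the claim.

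The main conceptual step, though light, is recognizing that at any maximizer the multidimensional first-order conditions collapse to the scalar equation $f(r) = \lambda$ for a strictly convex $f$ on $\R$; once this reduction is in place, the two-element support, the exclusion of a singleton, and the identification of the extreme coordinates each follow from elementary properties of level sets of strictly convex functions together with the nondegeneracy of $\gamma$ established in Section \ref{sec:egr.properties}.
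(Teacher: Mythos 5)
Your proof is correct and follows essentially the same route as the paper: the paper uses the auxiliary function $h_\lambda(x) = e^x/Z - x - \lambda$ where you use $f(r) = e^r/M - r$ at level $\lambda$, and both arguments then invoke strict convexity to bound the support size by two, rule out singletons via the strict Jensen gap, and locate the two active indices at the extremes by examining the sublevel/level set structure. No gaps.
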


\begin{proof}
    Let $Z:=\sum_{j=1}^n \pi_j^\star e^{r_j}>0$ and define
\[
h_\lambda(x):=\frac{e^{x}}{Z}-x-\lambda, \quad \lambda \in \R.
\]
Then $h_\lambda''(x)=e^{x}/Z>0$, so $h_\lambda$ is strictly convex and has at most two distinct zeros. From the KKT condition, for every $i$ in $\supp(\bpi^\star)$ we must have $h_\lambda(r_i)=0$. Since the $r_i$ are assumed to be distinct, we conclude $\lvert\supp(\bpi^\star)\rvert\leq 2$.

On the other hand, $\lvert\supp(\bpi)\rvert=1$ gives $\gamma(\bpi,\br)=0$, which is suboptimal whenever $\br$ is not constant, because for any $i\neq j$ and $\bpi' := \frac{1}{2}(\be_i+\be_j)\in\Delta_n$,
\[
\gamma(\bpi',\br)=\log\!\Bigl(\tfrac{e^{r_i}+e^{r_j}}{2}\Bigr)-\tfrac{r_i+r_j}{2}>0.
\]

Finally, to see that $\pi_i^\star>0$ if and only if $r_i\in\{r_{(1)},r_{(n)}\}$ we observe that the KKT conditions imply that $h_{\lambda}(r_i)\leq 0$ for all $i$. %
By strict convexity and the fact that $\lim_{x\to\pm\infty}h_\lambda(x)=+\infty$ the sublevel set $\{x:h_\lambda(x)\le 0\}$ is a compact interval $[u,v]$ with $h_\lambda(u)=h_{\lambda}(v)=0$. Moreover,
\[
h_\lambda(x)<0 \ \text{for all } x\in(u,v), \qquad
h_\lambda(x)>0 \ \text{for all } x\notin[u,v].
\]
As $h_\lambda(r_i)\le 0$ for \emph{every} $i$, $r_i\in[u,v]$ for all $i$. Since the $r_i$ are distinct, $(u,v)$ is non-empty. In particular,
\[
u\le \min_i r_i = r_{(1)} \quad\text{and}\quad \max_i r_i = r_{(n)} \le v.
\]
But $u$ and $v$ themselves are necessarily entries of $\br$ as we have already established that $\br$ contains the two unique roots $\{x:h_\lambda(x)=0\}$. Thus we necessarily have
\[
u=r_{(1)} \quad\text{and}\quad v=r_{(n)}.
\]
Consequently, the only indices with $h_\lambda(r_i)=0$ are those achieving the minimum and maximum of $\br$ as claimed.
\end{proof}

This result allows us to obtain an explicit equation for the maximum and the optimal allocation $\bpi^\star$. The intuition is very much in the spirit of ``volatility harvesting.'' By comparison to the {\it wealth maximizing} strategy (which would invest solely in the stock with the highest return), the optimal strategy for the {\it excess growth rate} allocates capital to the most extreme returns; both largest \emph{and} smallest. %

\begin{theorem}\label{thm:pairwise-max}
Suppose $\br\in\R^n$ has $n\geq2$ distinct coordinates. Then
\[
\max_{\bpi\in\Delta_n}\gamma(\bpi,\br)
=
\log\!\Bigl(\tfrac{e^{r_{(n)}}-e^{r_{(1)}}}{r_{(n)}-r_{(1)}}\Bigr)
-\frac{e^{r_{(n)}}r_{(1)}-e^{r_{(1)}}r_{(n)}}{e^{r_{(n)}}-e^{r_{(1)}}}
-1.
\]
Moreover, the unique optimizer $\bpi^\star$ is supported on two points:
\[
\pi_{i^\star}^\star
=
\frac{\,e^{r_{i^\star}}-e^{r_{j^\star}}-(r_{i^\star}-r_{j^\star})e^{r_{j^\star}}\,}
{(r_{i^\star}-r_{j^\star})\,(e^{r_{i^\star}}-e^{r_{j^\star}})},
\quad
\pi_{j^\star}^\star=1-\pi_{i^\star}^\star,
\quad
\pi_k^\star=0\ \ (k\notin\{i^\star,j^\star\}),
\]
where the indices $\{i^\star,j^\star\}$ attain the maximum and minimum of $\br$, respectively; i.e., $r_{i^\star}=r_{(n)}$ and $r_{j^\star}=r_{(1)}$.
\end{theorem}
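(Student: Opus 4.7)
The plan is to use Lemma~\ref{lem:support-exactly-two} to reduce the $n$-dimensional problem \eqref{eqn:deterministic.egr.objective} to a one-variable optimization. Since the coordinates of $\br$ are distinct, the indices $i^\star,j^\star$ realizing $r_{(n)}$ and $r_{(1)}$ are uniquely defined, and any maximizer has support exactly on $\{i^\star,j^\star\}$. Parameterizing $\bpi = t\be_{i^\star} + (1-t)\be_{j^\star}$ with $t \in [0,1]$ and writing $a := r_{(n)}$, $b := r_{(1)}$, the objective collapses to the scalar function
\[
f(t) := \log\!\bigl(te^a + (1-t)e^b\bigr) - \bigl(ta + (1-t)b\bigr),
\]
and the remainder of the proof is a calculus problem in one variable.

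Next, I would compute
\[
f'(t) = \frac{e^a - e^b}{te^a + (1-t)e^b} - (a-b)
\]
and set it to zero, obtaining the equation $te^a + (1-t)e^b = (e^a - e^b)/(a-b)$. Solving this linear equation for $t$ yields exactly the formula for $\pi_{i^\star}^\star$ claimed in the theorem. I would then verify that $t^\star \in (0,1)$: both inequalities reduce, after simplification, to the strict bound $e^x > 1 + x$ for $x \neq 0$, applied with $x = a - b > 0$ and $x = b - a < 0$ respectively. Uniqueness follows from strict concavity of $f$ on $[0,1]$, which one sees either by computing $f''(t) < 0$ directly or by invoking concavity of $\gamma(\cdot,\br)$ from Remark~\ref{rmk:EGR.def}(ii) together with the fact that strictness holds along this segment since $r_{i^\star} \neq r_{j^\star}$.

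Finally, plug $t^\star$ back into $f$. The argument of the logarithm is immediate from the first-order condition: $t^\star e^a + (1-t^\star) e^b = (e^a - e^b)/(a-b)$. For the second term, a short computation gives
\[
t^\star a + (1-t^\star)b = t^\star(a-b) + b = 1 + \frac{be^a - ae^b}{e^a - e^b},
\]
and taking the difference yields the claimed closed-form maximum after substituting $a = r_{(n)}$, $b = r_{(1)}$. The only real obstacle is careful algebraic bookkeeping—ensuring the quantities collapse as stated and that $t^\star$ indeed lies in the interior of $[0,1]$—rather than any conceptual difficulty, since Lemma~\ref{lem:support-exactly-two} has already done the structural work of identifying the two-point support.
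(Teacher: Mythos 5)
Your proposal is correct and follows essentially the same route as the paper: invoke Lemma~\ref{lem:support-exactly-two} to reduce to a two-point support on the indices attaining $r_{(n)}$ and $r_{(1)}$, then solve the resulting strictly concave one-variable problem via the first-order condition and substitute back. The algebra checks out (including the verification $t^\star\in(0,1)$ via $e^x>1+x$ and the simplification $t^\star a+(1-t^\star)b=1+\frac{be^a-ae^b}{e^a-e^b}$), so no gap remains.
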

\begin{proof}
Fix distinct indices $\{i,j\}$ and parameterize $\bpi$ by
\[
\pi_i=t,\quad \pi_j=1-t,\quad t\in[0,1],\qquad \pi_k=0\ \ (k\notin\{i,j\}).
\]
Define the univariate objective
\[
f_{ij}(t):=\gamma(\bpi,\br)=\log\!\bigl(te^{r_i}+(1-t)e^{r_j}\bigr)-\bigl(tr_i+(1-t)r_j\bigr).
\]
From Lemma \ref{lem:support-exactly-two} we conclude that
\begin{equation}\label{eqn:red.obj.rep}
\max_{\bpi\in\Delta_n}\gamma(\bpi,\br)
=
\max_{1\le i<j\le n}
\left\{\max_{t\in[0,1]} f_{ij}(t)\right\}=\max_{t\in[0,1]}f_{i^\star j^\star}(t).
\end{equation}
So, it suffices to treat the inner maximization problem for fixed $i\not=j$. One computes
\[
f_{ij}'(t)=\frac{e^{r_i}-e^{r_j}}{te^{r_i}+(1-t)e^{r_j}}-(r_i-r_j),
\qquad
f_{ij}''(t)=-\frac{(e^{r_i}-e^{r_j})^2}{\bigl(te^{r_i}+(1-t)e^{r_j}\bigr)^2}<0.
\]
Thus $f_{ij}$ is strictly concave on $[0,1]$. Since $f_{ij}(0)=f_{ij}(1)=0$ and $f_{ij}(1/2)>0$, the unique critical point is interior to $(0,1)$ and is therefore the constrained maximizer. Solving $f_{ij}'(t)=0$ yields
\[
t_{ij}^\star
=\frac{\,e^{r_i}-e^{r_j}-(r_i-r_j)e^{r_j}\,}{(r_i-r_j)\,(e^{r_i}-e^{r_j})}\in(0,1),
\]
and the corresponding maximal value simplifies to
\[
\max_{t\in[0,1]} f_{ij}(t)
=
\log\!\Bigl(\frac{e^{r_i}-e^{r_j}}{r_i-r_j}\Bigr)
-\frac{e^{r_i}r_j-e^{r_j}r_i}{e^{r_i}-e^{r_j}}
-1
\;>\;0\quad\text{for }r_i\neq r_j.
\]
Combining this with \eqref{eqn:red.obj.rep} completes the proof.
\end{proof}

\begin{remark}\label{rem:ties}
If $\br$ has repeated coordinates, we may aggregate equal entries; the same conclusions hold upon reducing to the list of distinct values. In particular, any maximizer is supported on exactly two distinct values of $\br$ unless $\br$ is constant, in which case any $\bpi\in\Delta_n$ is optimal.
\end{remark}

\subsubsection{Variational interpretation} \label{sec:variational.interpretation}

Before returning to the expected problem, we record one variational consequence of Theorem \ref{thm:pairwise-max}.  From Proposition \ref{thm:egr.variational} we obtain
\[
\sup_{(\bpi,\bq)\in\Delta_n\times\Delta_n}\Bigl\{\langle  \bq - \bpi,\br\rangle-H\divg{\bq}{\bpi}\Bigr\}
=
\sup_{\bpi\in\Delta_n}\gamma(\bpi,\br),
\]
and we understand the form of the optimal $\bq$ for fixed $\bpi$. More generally, for $\lambda>0$,
\begin{align}
\sup_{(\bpi,\bq)\in\Delta_n\times\Delta_n}\Bigl\{\langle  \bq - \bpi,\br\rangle-\lambda\,H\divg{\bq}{\bpi}\Bigr\}\nonumber
&=
\sup_{\bpi\in\Delta_n}\left\{\lambda\log\!\left(\sum_{i = 1}^n \pi_i e^{r_i/\lambda}\right)-\sum_{i = 1}^n \pi_i r_i\right\}\\
&=\lambda\,\sup_{\bpi\in\Delta_n}\gamma(\bpi,\br/\lambda).\label{eqn:penalized.joint.opt}
\end{align}
So, the maximization of $\gamma(\bpi,\br)$ in Theorem \ref{thm:pairwise-max} provides us with the solution to \eqref{eqn:penalized.joint.opt} which we collect in the next proposition. 

\begin{proposition}
    \label{prop:joint.optim.penalized.sol}
Suppose $\br\in\mathbb{R}^n$ has $n\geq2$ distinct coordinates. Then
    \begin{align*}
    \sup_{(\bpi,\bq)\in\Delta_n\times\Delta_n}&\Bigl\{\langle  \bq - \bpi,\br\rangle-\lambda\,H\divg{\bq}{\bpi}\Bigr\}\\
    &=
    \lambda\log\!\Bigl(\tfrac{e^{r_{(n)}/\lambda}-e^{r_{(1)}/\lambda}}{r_{(n)}-r_{(1)}}\Bigr)
    -\frac{e^{r_{(n)}/\lambda}r_{(1)}-e^{r_{(1)}/\lambda}r_{(n)}}{e^{r_{(n)}/\lambda}-e^{r_{(1)}/\lambda}}
    +\lambda\log\lambda-\lambda.
    \end{align*}
    Moreover, for the (unique) indices $\{i^\star,j^\star\}$ that attain the maximum and minimum of $\br$, respectively, the optimal pair $(\bpi^\star, \bq^\star)$ is given by 
    \[
\pi_{i^\star}^\star
=
\frac{\,e^{r_{i^\star}/\lambda}-e^{r_{j^\star}/\lambda}-\lambda^{-1}(r_{i^\star}-r_{j^\star})e^{r_{j^\star}/\lambda}\,}
{\lambda^{-1}(r_{i^\star}-r_{j^\star})\,(e^{r_{i^\star}/
\lambda}-e^{r_{j^\star}/\lambda})},
\quad
\pi_{j^\star}^\star=1-\pi_{i^\star}^\star,
\quad
\pi_k^\star=0\ \ (k\notin\{i^\star,j^\star\})
\]
and $\bq^\star = \bpi^{\star} \oplus_{\bpi^{\star}} \cC[e^{\br/\lambda}]$.
\end{proposition}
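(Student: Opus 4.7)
The plan is to combine equation \eqref{eqn:penalized.joint.opt}, which already reduces the joint supremum to $\lambda\,\sup_{\bpi\in\Delta_n}\gamma(\bpi,\br/\lambda)$, with the explicit formula in Theorem \ref{thm:pairwise-max}. Since the coordinates of $\br/\lambda$ are distinct precisely when those of $\br$ are, the maximum and minimum of $\br/\lambda$ are attained at the same indices $\{i^\star,j^\star\}$ as for $\br$, and Theorem \ref{thm:pairwise-max} applied with $\br$ replaced by $\br/\lambda$ yields both the optimal value and the optimal $\bpi^\star$ upon the substitution $r_i\mapsto r_i/\lambda$. The expression for $\pi_{i^\star}^\star$ stated in the proposition is exactly that of Theorem \ref{thm:pairwise-max} after this substitution, written in a form that keeps the differences $r_{i^\star}-r_{j^\star}$ (unscaled) together with the explicit $\lambda^{-1}$.

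To reach the closed form for the value, I would carry out the following algebraic bookkeeping. Multiplying the formula of Theorem \ref{thm:pairwise-max} (evaluated at $\br/\lambda$) by $\lambda$ gives three terms: first, $\lambda\log\bigl((e^{r_{(n)}/\lambda}-e^{r_{(1)}/\lambda})/(r_{(n)}/\lambda-r_{(1)}/\lambda)\bigr)$, which I split using $\log\bigl((r_{(n)}-r_{(1)})/\lambda\bigr)=\log(r_{(n)}-r_{(1)})-\log\lambda$ to produce the $+\lambda\log\lambda$ contribution and the claimed logarithmic factor with $r_{(n)}-r_{(1)}$ in the denominator; second, $-\lambda\cdot\frac{e^{r_{(n)}/\lambda}(r_{(1)}/\lambda)-e^{r_{(1)}/\lambda}(r_{(n)}/\lambda)}{e^{r_{(n)}/\lambda}-e^{r_{(1)}/\lambda}}$, whose $\lambda^{-1}$ inside cancels the outer $\lambda$ leaving $-\frac{e^{r_{(n)}/\lambda}r_{(1)}-e^{r_{(1)}/\lambda}r_{(n)}}{e^{r_{(n)}/\lambda}-e^{r_{(1)}/\lambda}}$; third, $\lambda\cdot(-1)=-\lambda$. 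Summing reproduces the stated expression.

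For the optimal $\bq^\star$, I would invoke Proposition \ref{thm:egr.variational} one more time at $\bpi^\star$ with $\br$ replaced by $\br/\lambda$. That proposition states the supremum defining $\gamma(\bpi^\star,\br/\lambda)$ is uniquely attained at $\bpi^\star\oplus_{\bpi^\star}\cC[e^{\br/\lambda}]$, which is precisely the $\bq^\star$ claimed. Uniqueness of the pair $(\bpi^\star,\bq^\star)$ is inherited from uniqueness of the $\bpi$-maximizer in Theorem \ref{thm:pairwise-max} and from the strict convexity of $H\divg{\cdot}{\bpi^\star}$ which underlies the uniqueness clause of Proposition \ref{thm:egr.variational}.

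There is no serious conceptual obstacle here: the proposition is a direct corollary of Theorem \ref{thm:pairwise-max} and Proposition \ref{thm:egr.variational} after applying the scaling identity \eqref{eqn:penalized.joint.opt}. The only mildly delicate part is the algebraic bookkeeping in the simplification above, which must carefully track where the factors of $\lambda$ arise and cancel so that the final expression takes the precise form stated.
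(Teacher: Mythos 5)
Your proposal is correct and matches the paper's (implicit) argument exactly: the paper states this proposition as an immediate consequence of the scaling identity \eqref{eqn:penalized.joint.opt}, Theorem \ref{thm:pairwise-max} applied to $\br/\lambda$, and Proposition \ref{thm:egr.variational} for the inner optimizer $\bq^\star$, without writing out a separate proof. Your algebraic bookkeeping for the value and for $\pi_{i^\star}^\star$ checks out.
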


Financially, we can interpret this problem as finding the two portfolios $\bpi,\bq$ whose holdings differ maximally in their average log returns when subject to a relative entropy penalization. The optimal pair has support on the two most extreme returns and $\bq^\star$ tilts away from $\bpi^\star$ towards the rescaled return profile $\lambda^{-1}\br$. By a related argument, we can also link the excess growth rate $\gamma(\bpi,\cdot)$ to a constrained optimization problem through a {\it perspective transformation}.\footnote{The \emph{perspective} of a function $f(\br)$ is given by $\mathfrak{p}(\lambda,\br):=\lambda f(\br/\lambda)$ for $\lambda\in(0,\infty)$. When $f$ is convex (concave), $\mathfrak{p}:(0,\infty)\times\mathbb{R}^n\to\mathbb{R}$ is \emph{jointly} convex (concave) (cf.~\cite[Section 3.2.6]{boyd2004convex}). The perspective of $\br\mapsto\gamma(\bpi,\br)$ has already appeared in \eqref{eqn:penalized.joint.opt}.} We defer the details to Appendix \ref{app:variational.constrained}.

\subsection{Maximizing the expected excess growth rate}\label{subsec:max-exp-egr}

We now return to the expected objective $J$ in \eqref{eqn:obj.expected.egr}. To ensure that $J$ is well defined and finite on $\Delta_n$, we impose the following mild integrability condition on $\br$.

\begin{assumption}
\label{ass:int}
Let $\br$ be an $\mathbb R^n$-valued random vector such that $\mathbb{E}[|r_i|]<\infty$ for all $1 \leq i \leq n$.
\end{assumption}
Under Assumption \ref{ass:int} the objective is finite (i.e., $J(\bpi)\in\mathbb{R}$) since
\[J(\bpi)\leq \mathbb{E}\left[\max_{i}\{r_i\}\right]-\langle\bpi, \bm \rangle\leq 2\sum_{i=1}^n\mathbb{E}\left[|r_i|\right]<\infty\]
and by choosing any $i\in \supp(\bpi)$,
\[J(\bpi)\geq \log(\pi_i) +\mathbb{E}[r_i]-\sum_{j=1}^n\mathbb{E}\left[|r_j|\right]\geq\log(\pi_i)-2\sum_{j=1}^n\mathbb{E}\left[|r_j|\right]>-\infty.\]

Our main result is the following necessary and sufficient first-order condition.

\begin{theorem}[First-order condition]
\label{thm:wealth.ratio}
Under Assumption \ref{ass:int}, a portfolio $\bpi^\star\in \Delta_n$ maximizes $J(\cdot)$ if and only if for every $\bpi\in\Delta_n$,
\begin{equation}\label{eq:wealth-ratio}
\mathbb{E}\left[\frac{\langle \bpi, \bR\rangle }{\langle \bpi^\star, \bR\rangle}\right]\ \le\ 1+\langle \bpi-\bpi^\star, \bm\rangle,
\end{equation}
with equality in \eqref{eq:wealth-ratio} whenever $\bpi$ is supported on $\supp(\bpi^\star)$.
\end{theorem}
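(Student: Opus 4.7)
The strategy is standard first-order optimality for concave maximization on a convex set, but the one-sided derivatives along boundary directions require care. First I would note that $J$ is concave on $\Delta_n$, since for each realization $\bR = e^{\br}$ the map $\bpi \mapsto \log\langle\bpi, \bR\rangle$ is concave and the linear term $\langle\bpi, \bm\rangle$ is affine. Combined with convexity of $\Delta_n$, this reduces the theorem to showing that \eqref{eq:wealth-ratio} is exactly the statement that the one-sided directional derivatives $J'(\bpi^\star;\bpi - \bpi^\star) := \lim_{t\downarrow 0} (J(\bpi^\star + t(\bpi-\bpi^\star)) - J(\bpi^\star))/t$ are nonpositive for every $\bpi \in \Delta_n$.

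Second, I would compute this directional derivative. Writing $X := \langle\bpi - \bpi^\star,\bR\rangle/\langle\bpi^\star,\bR\rangle \in [-1,\infty)$, we have
\[
\frac{J(\bpi^\star + t(\bpi-\bpi^\star)) - J(\bpi^\star)}{t} = \mathbb{E}\!\left[\frac{\log(1 + tX)}{t}\right] - \langle\bpi-\bpi^\star,\bm\rangle.
\]
Concavity of $y \mapsto \log(1+y)$ makes the difference quotient $\log(1+tX)/t$ non-increasing in $t > 0$, so it increases monotonically to $X$ as $t\downarrow 0$. Under Assumption~\ref{ass:int}, a two-sided bound $\log \pi_i + r_i \leq \log\langle\bpi, \bR\rangle \leq \max_i r_i$ on the support of $\bpi$ shows $\log\langle\bpi, \bR\rangle$ is integrable, so $\log(1+X)$ provides an integrable lower bound (taking $t=1$). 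Monotone convergence then gives
\[
J'(\bpi^\star;\bpi-\bpi^\star) = \mathbb{E}\!\left[\frac{\langle\bpi - \bpi^\star,\bR\rangle}{\langle\bpi^\star,\bR\rangle}\right] - \langle\bpi-\bpi^\star,\bm\rangle \in (-\infty, +\infty],
\]
and rearranging $J'(\bpi^\star;\bpi - \bpi^\star) \leq 0$ produces exactly \eqref{eq:wealth-ratio}.

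Third, for the equality claim, suppose $\supp(\bpi) \subseteq \supp(\bpi^\star)$. Then $\tilde\bpi := (1+\epsilon)\bpi^\star - \epsilon\bpi$ satisfies $\tilde\pi_i = 0$ for $i \notin \supp(\bpi^\star)$ and $\tilde\pi_i \geq \pi_i^\star(1+\epsilon) - \epsilon > 0$ for $i \in \supp(\bpi^\star)$ whenever $\epsilon > 0$ is sufficiently small, so $\tilde\bpi \in \Delta_n$. Applying \eqref{eq:wealth-ratio} to $\tilde\bpi$ and using linearity,
\[
1 + \epsilon\Bigl(1 - \mathbb{E}\!\left[\tfrac{\langle\bpi,\bR\rangle}{\langle\bpi^\star,\bR\rangle}\right]\Bigr) \leq 1 + \epsilon\,\langle\bpi^\star - \bpi,\bm\rangle,
\]
which rearranges to $\mathbb{E}[\langle\bpi,\bR\rangle/\langle\bpi^\star,\bR\rangle] \geq 1 + \langle\bpi-\bpi^\star,\bm\rangle$. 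Combined with the inequality in the forward direction this gives equality.

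The main obstacle I expect is the justification in the middle step: the ratio $R_i/\langle\bpi^\star,\bR\rangle$ for $i \notin \supp(\bpi^\star)$ is not bounded a priori, so a naive dominated-convergence argument fails and we must rely on the monotonicity of $\log(1+tX)/t$ together with the integrability of $\log\langle\bpi,\bR\rangle$ to legitimately exchange limit and expectation (allowing the limit to be $+\infty$, which correspondingly forces violation of \eqref{eq:wealth-ratio} and hence non-optimality of $\bpi^\star$). Everything else is bookkeeping from standard concave-maximization theory.
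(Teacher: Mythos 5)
Your proof is correct, and it takes a genuinely different route from the paper's. The paper establishes the result by characterizing the full superdifferential set $\partial_{\Delta_n}^{+}J(\bpi)$ (Lemma \ref{lem:superdifferential.set} in the Appendix), which requires a fairly technical analysis involving normal cones, differentiability on the relative interior of faces, and a Fatou argument for the boundary directions; optimality is then read off as $\mathbf{0}\in\partial_{\Delta_n}^{+}J(\bpi^\star)$ and unwound via explicit multipliers $\lambda,\bmu$. You instead work directly with the one-sided directional derivatives $J'(\bpi^\star;\bpi-\bpi^\star)$, which for a concave function on a convex set characterize maximality, and you compute them exactly by monotone convergence: the difference quotient $\log(1+tX)/t$ increases to $X=\langle\bpi-\bpi^\star,\bR\rangle/\langle\bpi^\star,\bR\rangle$ as $t\downarrow 0$ and is minorized by the integrable $\log(1+X)$ (integrability of $\log\langle\bpi,\bR\rangle$ under Assumption \ref{ass:int} is exactly the bound the paper records after that assumption). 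This correctly handles the case $\mathbb{E}[R_i/\langle\bpi^\star,\bR\rangle]=\infty$ for $i\notin\supp(\bpi^\star)$ by letting the derivative be $+\infty$, matching the violation of \eqref{eq:wealth-ratio}. Your derivation of the equality clause is also different and clean: when $\supp(\bpi)\subseteq\supp(\bpi^\star)$ the reversed direction $\tilde\bpi=(1+\epsilon)\bpi^\star-\epsilon\bpi$ remains feasible, and applying \eqref{eq:wealth-ratio} at $\tilde\bpi$ (where the ratio is bounded, so linearity of expectation is unproblematic) yields the reverse inequality; the paper instead gets equality from KKT stationarity on the support. What each approach buys: yours is shorter, more elementary, and makes transparent that the inequality \eqref{eq:wealth-ratio} alone already implies optimality (the equality clause being a consequence rather than an independent hypothesis); the paper's buys the complete description of $\partial_{\Delta_n}^{+}J(\bpi)$, which is of independent interest and parallels the classical treatment of the growth optimal portfolio.
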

\begin{proof}

Since $\gamma$ is concave in $\bpi$, we see that $J(\bpi)$ is concave in $\bpi$. In Lemma \ref{lem:superdifferential.set} of the Appendix we provide a direct characterization of the superdifferential set $\partial_{\Delta_n}^+ J(\bpi)$ (see Definition \ref{def:superdifferential.J}). Our approach is slightly more technical but avoids imposing additional integrability conditions on $\br$.

By standard convex analysis, we have that $\bpi^\star$ is a maximizer of $J(\cdot)$ if and only if
$\mathbf{0}\in \partial_{\Delta_n}^+ J(\bpi^\star)$ (cf.~\cite[Theorem 27.4]{rockafellar1997convex}). That is, in view of Lemma \ref{lem:superdifferential.set}, there is a $\lambda\in \mathbb{R}$ and a $\bmu\in\mathbb{R}^n_+$ with $\mu_i=0$ on $\supp(\bpi^\star)$ such that 
\begin{equation}\label{eqn:optimality.J}
\mathbb{E}\left[\frac{\bR}{\langle \bpi^{\star}, \bR\rangle }\right]-\bm-\lambda \mathbf{1}+\bmu=\mathbf{0},
\end{equation}
where $\mathbf{0}$ is the zero vector. Consequently, for any optimal $\bpi^\star$, we have the following inequality for all $i$:
    \[\mathbb{E}\left[\frac{R_i}{\langle \bpi^\star, \bR\rangle }\right]-m_i \le \lambda\]
and moreover, equality holds for all $i\in\supp(\bpi^\star)$. Multiplying by the coordinates of an arbitrary portfolio $\bpi\in\Delta_n$ we get
    \[\mathbb{E}\left[\frac{\pi_iR_i}{\langle \bpi^\star, \bR\rangle }\right]-\pi_im_i \le \pi_i\lambda.\]
    Summing over $i$ we get
    \[\mathbb{E}\left[\frac{\langle \bpi, \bR \rangle}{\langle \bpi^\star, \bR\rangle }\right]-\langle\bpi,\bm \rangle\le \lambda.\]
    In particular, if $\supp(\bpi)\subseteq\supp(\bpi^\star)$ then equality holds at all non-zero coordinates and 
    \[\mathbb{E}\left[\frac{\langle \bpi, \bR \rangle}{\langle \bpi^\star, \bR\rangle }\right]-\langle\bpi,\bm \rangle= \lambda, \qquad \supp(\bpi)\subseteq\supp(\bpi^\star).\]
    Taking $\bpi\equiv\bpi^{\star}$ in the above we see that $1-\langle\bpi^\star,\bm \rangle= \lambda$. The necessity of \eqref{eq:wealth-ratio} holding at any optimizer $\bpi^\star$ (with equality if $\supp(\bpi)\subseteq\supp(\bpi^\star)$) follows. 
    
    Sufficiency can be seen by choosing $\bpi=\be_j\in \Delta_n$ for $j=1,\dots,n$ and defining
    \begin{equation}\label{eqn:multipliers.J}
    \mu_j= 1+\langle \be_j-\bpi^\star, \bm\rangle - \mathbb{E}\left[\frac{\langle \be_j, \bR \rangle}{\langle \bpi^\star, \bR\rangle}\right]\geq0, \quad j=1,\dots,n.
    \end{equation}
    Since by hypothesis $\mu_j=0$ if $\supp(\be_j)=\{j\}\subseteq \supp(\bpi^\star)$, we recover \eqref{eqn:optimality.J} with $\lambda=1-\langle\bpi^\star,\bm \rangle$ by using \eqref{eqn:multipliers.J} and evaluating \eqref{eq:wealth-ratio} at $\bpi=\be_j\in \Delta_n$ for $j=1,\dots,n$.
\end{proof}

\begin{remark}[Growth optimal portfolio] \label{rmk:growth.optimal}
    If the linear term is absent, the objective
    \[
    \bpi \mapsto \mathbb{E}\left[\log\langle\bpi,\bR\rangle\right]
    \]
    is the classical \emph{log-wealth (growth rate) maximization} problem (cf.~\cite[Chapter 16]{cover2006elements}). An optimal portfolio for this objective, $\bpi^{G}$, is said to be \emph{growth optimal}. The analogue of Theorem \ref{thm:wealth.ratio} for $\bpi^G$ is given in \cite[Theorems 16.2.1--16.2.2]{cover2006elements}. There, it is shown that
    \[\mathbb{E}\left[\frac{\langle \bpi, \bR\rangle }{\langle \bpi^G, \bR\rangle}\right]\ \le\ 1, \quad \bpi\in\Delta_n,\]
    with equality holding if $\supp(\bpi)\subseteq\supp(\bpi^G)$.
\end{remark}

As a corollary, we obtain the following estimate on the relative growth rates by Jensen's inequality.
\begin{corollary}\label{cor:rel.growth.rate}
Let $\bpi^\star$ maximize $J(\cdot)$. Then, for all $\bpi\in\Delta_n$
\[\mathbb{E}\left[\log\left(\frac{\langle \bpi, \bR\rangle }{\langle \bpi^\star, \bR\rangle}\right)\right]\ \le\ \log\left(1+\langle \bpi-\bpi^\star,\bm\rangle \right).\]
\end{corollary}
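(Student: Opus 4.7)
The plan is to combine the first-order condition from Theorem \ref{thm:wealth.ratio} with Jensen's inequality applied to the (strictly concave) logarithm. The inequality in \eqref{eq:wealth-ratio} gives a linear upper bound for the expected wealth ratio $\mathbb{E}\bigl[\langle\bpi,\bR\rangle/\langle\bpi^\star,\bR\rangle\bigr]$ in terms of the mean log-return vector $\bm$, and Jensen's inequality converts this into a bound on the \emph{expected log} wealth ratio, which is exactly what the corollary asserts.

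Concretely, first I would fix an arbitrary $\bpi\in\Delta_n$ and consider the random variable
\[
W_{\bpi,\bpi^\star} := \frac{\langle \bpi, \bR\rangle}{\langle \bpi^\star, \bR\rangle},
\]
which is strictly positive since $\bR=e^{\br}\in(0,\infty)^n$ (so the denominator is strictly positive whenever $\supp(\bpi^\star)\neq\emptyset$, which holds because $\bpi^\star\in\Delta_n$). Next, I would invoke Jensen's inequality in the form
\[
\mathbb{E}\bigl[\log W_{\bpi,\bpi^\star}\bigr]\le \log \mathbb{E}\bigl[W_{\bpi,\bpi^\star}\bigr],
\]
noting that the right-hand side is well defined in $[-\infty,\infty)$ and that the left-hand side is $\mathbb{E}[\log\langle\bpi,\bR\rangle] - \mathbb{E}[\log\langle\bpi^\star,\bR\rangle]$, both of which are finite under Assumption \ref{ass:int} since $\bpi^\star$ is an optimizer with $J(\bpi^\star)$ finite.

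Finally, by the first-order condition \eqref{eq:wealth-ratio} in Theorem \ref{thm:wealth.ratio},
\[
\mathbb{E}\bigl[W_{\bpi,\bpi^\star}\bigr] \le 1 + \langle \bpi-\bpi^\star,\bm\rangle.
\]
Chaining the two inequalities and using that $\log$ is monotone increasing on $(0,\infty)$ yields the claimed estimate. Since Theorem \ref{thm:wealth.ratio} does all the heavy lifting, there is essentially no obstacle here; the only minor care is to justify that we may apply Jensen's inequality with the integrability guaranteed by Assumption \ref{ass:int}, and to note that the bound is vacuous (i.e.\ the right-hand side is $+\infty$) if $1+\langle\bpi-\bpi^\star,\bm\rangle\le 0$ can fail — but in that range Theorem \ref{thm:wealth.ratio} itself forces $\mathbb{E}[W_{\bpi,\bpi^\star}]\le 0$, which is impossible for a strictly positive random variable, so in fact $1+\langle\bpi-\bpi^\star,\bm\rangle>0$ automatically, and the logarithm on the right-hand side is finite.
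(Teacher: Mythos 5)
Your proof is correct and follows exactly the route the paper intends: the paper introduces the corollary with ``we obtain the following estimate on the relative growth rates by Jensen's inequality,'' i.e.\ apply $\mathbb{E}[\log W]\le\log\mathbb{E}[W]$ to $W=\langle\bpi,\bR\rangle/\langle\bpi^\star,\bR\rangle$ and then invoke the bound \eqref{eq:wealth-ratio} from Theorem~\ref{thm:wealth.ratio}. Your added remarks on integrability and on why $1+\langle\bpi-\bpi^\star,\bm\rangle>0$ are sound and only make explicit what the paper leaves implicit.
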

\begin{remark}
    From Corollary \ref{cor:rel.growth.rate} we see that if $\bpi^{G}$ is the growth optimal portfolio then
    \[0\leq \mathbb{E}\left[\log\left(\frac{\langle \bpi^{G}, \bR\rangle }{\langle \bpi^\star, \bR\rangle}\right)\right]\ \le\ \log\left(1+\langle \bpi^{G}-\bpi^\star,\bm\rangle \right).\]
    Namely, the growth rate differential between $\bpi^\star$ and $\bpi^G$ is controlled by the deviation of $\bpi^{\star}$ from $\bpi^G$ and the expected returns $\bm$. Indeed, if $\bm$ is a \emph{constant vector} (i.e. all stocks have the same \emph{expected} log return) then $\langle \bpi^{G}-\bpi^\star,\bm\rangle=0$ and $\bpi^\star$ is also growth optimal. We can also use this chain of inequalities to conclude $\langle \bpi^\star,\bm\rangle\leq \langle \bpi^G,\bm\rangle$. That is, the log-optimal portfolio not only has a higher growth rate than $\bpi^\star$, but also a larger weighted expected-log-return component. %
\end{remark}

\section{Conclusion} \label{sec:conclusion}
Beginning with the financial definition of the excess growth rate, we demonstrate its rich connections with information-theoretic quantities, characterize it axiomatically from three complementary perspectives, and study its maximization that modifies the classical growth optimal portfolio. We conclude this paper by highlighting several natural questions related to our work. 
\begin{enumerate}
\item[1.] Motivated again by Leinster's book \cite{leinster2021entropy}, one may ask if there are (financially) meaningful {\it deformations} of the excess growth rate, analogous to how the R\'{e}nyi divergence deforms the relative entropy. If so, a natural follow-up question is to derive axiomatic characterization theorems that show these deformations are, in a sense, {\it canonical}. The divergence in Example \ref{eg:Renyi.log.divergence}, which involves the R\'{e}nyi divergence and corresponds to the so-called {\it diversity-weighted portfolio} in stochastic portfolio theory, seems to be a reasonable candidate.
\item[2.] As shown in \cite{larsson2025numeraire, ramdas2024hypothesis}, intuition and techniques from mathematical finance are instrumental in many modern developments in information theory, statistical inference, and hypothesis testing (Section \ref{sec:IT.literature}). Can the excess growth rate contribute to this development?
\item[3.] Our theoretical study of maximization of the excess growth rate covers only the basic one-period setting. From the practical perspective, it is both interesting and necessary to consider extensions to dynamic (multi-period or continuous-time) settings, as well as transaction costs and model uncertainty. One may also ask if there are relations with (suitable generalizations of) the asymptotic equipartition property and Cover's universal portfolio.
\item[4.] Closely related to the relative volatility (quantified by the excess growth rate) of a stock market is the concept of {\it market diversity}. Market diversity measures the concentration of a stock market. It is high when capital is spread more evenly among the different companies, and it is low when a small number of big companies dominate the entire market. In stochastic portfolio theory, it is typically quantified by the Shannon entropy (see \cite{F99}). Currently (2026), the diversity of the US market is rather low (relative to the historical average) due to the emergence of large technology companies. In fact, changes in market diversity tend to correlate with the performance of active large cap fund managers relative to the market; see \cite[Section 3]{campbell2024macroscopic} for detailed discussions and an empirical study. For a related geometric perspective on market diversity and the capital distribution, see \cite[Section 6]{campbell2025efficient}. A shortcoming of Shannon entropy (and similar quantities) is that it does not take into account the {\it similarities} between stocks. For example, stocks in the same industry sector tend to be more correlated with each other. In \cite{LC12} (also see \cite[Chapter 5]{leinster2021entropy}), generalized diversity measures (and hence entropies) were defined for probability distributions on a finite set equipped with a {\it similarity matrix}. It is interesting to adapt their approach to stock markets and study implications for portfolio selection.\footnote{This problem was suggested to us by Martin Larsson (private communication).}
\end{enumerate}

\appendix
\section*{Appendix}

\section{Financial applications of the excess growth rate} \label{sec:egr.applications}
We discuss two financial applications of the excess growth rate. First, the excess growth rate, when accumulated over time, captures the profit of a portfolio gained by {\it rebalancing}. By rebalancing, we mean the adjustment of positions through trading rather than buy-and-hold. The simplest situation, which can be considerably generalized (see \cite{PW13, PW16, W19}), is a portfolio that periodically rebalances to the same set of weights. For concreteness, consider a portfolio of $n$ stocks that rebalances to the weights given by $\bpi \in \Delta_n$ at the beginning of each month. Let $\bR(1), \ldots, \bR(T) \in (0, \infty)^n$ denote the gross returns of the stocks in $T$ months, and let $\br(t) = \log \bR(t)$ be the monthly log return. The total gross return of the rebalanced portfolio over $T$ months is given by compounding the monthly gross returns:\footnote{For simplicity, here we assume implicitly that there are no transaction costs.}
\[
\prod_{t = 1}^T \langle \bpi, \bR(t) \rangle.
\]
The log return, which is additive over time, is given by
\begin{align} \label{eqn:CW.decomposition}
\log \left( \prod_{t = 1}^T \langle \bpi, \bR(t) \rangle \right) &=
\sum_{t = 1}^T  \log \langle \bpi, \bR(t) \rangle   \\
&= \sum_{t = 1}^T \left( \log \langle \bpi, \bR(t) \rangle - \langle \bpi, \br(t) \rangle + \langle \bpi, \br(t) \rangle \right) \\
&= \left\langle \bpi, \sum_{t = 1}^T \br(t) \right\rangle + \sum_{t = 1}^T \Gamma(\bpi, \bR(t)).
\end{align}
That is, the log return of the rebalanced portfolio is the sum of the weighted average log return of the stocks and the accumulated excess growth rate. Since the accumulated excess growth rate is increasing, it contributes {\it positively} to the portfolio's log return. In particular, if we consider two price paths along which the stocks have the same initial and final prices, the rebalanced portfolio earns more over the path that has a larger accumulated excess growth rate. This analysis can be expanded to explain the empirical observation that a systematically rebalanced portfolio often (but not always) outperforms a capitalization-weighted benchmark portfolio over long horizons.\footnote{This phenomenon is sometimes called {\it volatility pumping} or {\it volatility harvesting}, see \cite{BNPS12, BNW15} and the references therein. Also see \cite{RX20} for a recent empirical study.}

\begin{figure}[t!] %
\includegraphics[scale = 0.5]{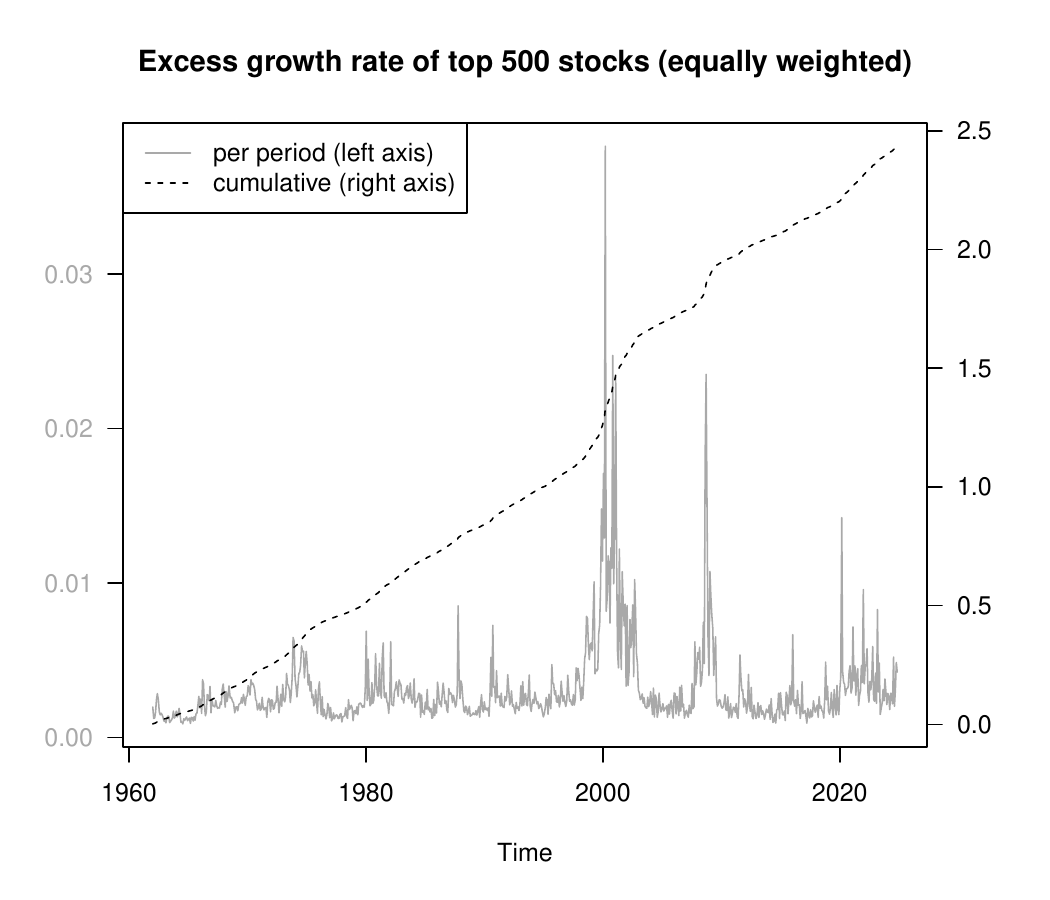}
\vspace{-0.3cm}
\caption{Excess growth rates of the largest $500$ stocks, over consecutive $20$-day periods and equally weighted, of the US stock market from $1962$ to $2024$. We show both the per period excess growth rate and its aggregate through time.}
\label{fig:EGR}
\end{figure}

Second, the excess growth rate, which is invariant under change of num\'{e}raire, quantifies the relative volatility of a stock market. Relative volatility measures how much the stocks' returns differ from each other. Intuitively, a market that is relatively volatile offers more opportunities to construct portfolios that may outperform the market. For precise statements in stochastic portfolio theory, see \cite{FKR18, fernholz2005relative} and the references therein. In fact, it can be argued that a portfolio should rebalance more frequently when the market is relatively more volatile, especially when transaction costs cannot be neglected \cite{PW18, W19}. In Figure \ref{fig:EGR}, which adopts the data-set and methodology used in \cite[Section 4]{campbell2024macroscopic},\footnote{Specifically, we use data from The CRSP US Stock Databases, see \url{https://www.crsp.org/products/research-products/crsp-us-stock-databases}. The data-set can be processed using codes on the following repository: \url{https://github.com/stevenacampbell/Macroscopic-Properties-of-Equity-Markets}.} we illustrate the relative volatility of the US stock market. For each (non-overlapping) $20$-day period between 1962 and 2024,\footnote{A year has about 252 trading days, or roughly 21 days per month.} we identify the returns $\bR(t)$ of the $500$ largest stocks of the US stock market, and compute the excess growth rate $\Gamma(\bpi, \bR(t))$ where $\bpi = \barE_{500}$ is equally weighted. Note that due to rank switching (as well as events such as initial public offering and delisting) the set of the top $500$ stocks changes over time. From the figure, we see that the cumulative excess growth rate increases steadily but sometimes abruptly. The series of per period excess growth rate exhibit {\it clustering} of volatility which is typical in financial time series. Periods with high relative volatility can often be identified with major economic events such as the financial crisis in 2008 and COVID-19 in early 2020. For other studies of the excess growth rate in financial economics and portfolio management, we refer the reader to \cite{BFPRS19, FM07, maeso2020maximizing, mantilla2022can, willenbrock2011diversification} and their references.

\section{Proof of Proposition \ref{prop:Interior.Char.Rel.Entropy}}\label{app:proof.of.int.char.rel.entr}

The proof of this result takes three ingredients. The first is a recursion formula for $I\divg{\cdot}{\cdot}$ that is also satisfied by the relative entropy. The second ingredient is a functional equation in four variables that must be satisfied by $B(x,y) := I_2\divg{(x,1-x)}{(y,1-y)}$ given the recursion formula. The third is a characterization of the symmetric separately measurable solutions to this equation that vanish on the diagonal. This latter result makes use of the general solution of an auxiliary functional equation due to \cite{ebanks1987generalized}.

\begin{remark}
    We provide here some historical context for our approach and also highlight why the specific structure in Proposition \ref{prop:Interior.Char.Rel.Entropy} necessitates a dedicated analysis. As will be seen, the domain $(\bp,\bq)\in\Delta_n^\circ\times\Delta_n^\circ$ leads to some subtleties that must be carefully checked.
    
    In \cite[Section 3.5]{leinster2021entropy}, Leinster provides the characterization in Proposition \ref{prop:Interior.Char.Rel.Entropy} on the larger domain $(\bp,\bq)\in \cA_n$ and also dispenses with measurability in the first argument. However, his proof (see \cite[Lemma 3.5.3]{leinster2021entropy}) fundamentally requires information about $I\divg{\cdot}{\cdot}$ outside of $\Delta_n^\circ\times\Delta_n^\circ$ and so cannot be directly adapted for our purposes. Instead, our proof follows an alternative line of argumentation that is well understood in the literature (see, for instance, the historical remarks in \cite[Section 3.5]{leinster2021entropy} and Section 2.1 of the survey \cite{csiszar2008axiomatic} where a result of this flavor is attributed to Kannappan and Ng). Indeed, under a related set of assumptions a characterization of relative entropy is proven in Kannappan's book (see \cite[Section 10.2f]{kannappan2009functional}). 
    
    The arguments employed in \cite[Section 10.2f]{kannappan2009functional} leverage Kannappan's work with Ng in \cite{kannappan1983generalized}. Importantly, the paper \cite{kannappan1983generalized} enables a characterization of the solution to the functional equation \eqref{eq:B.function} on the domain $x,y\in[0,1)$ $x+u\in[0,1]$, $y,v,y+v\in (0,1)$. For our purposes we need a characterization on the restricted domain $x,u,y,v,x+u,y+v\in (0,1)$. Fortunately, we may substitute the result of \cite{kannappan1983generalized} with a more general result that was proved a few years later by Ebanks, Kannappan and Ng \cite{ebanks1987generalized}. This result allows us to recover the same characterization using similar arguments, but without including the additional boundary points in the domain of the functional equation. Since we are unable to locate the precise characterization postulated in Proposition \ref{prop:Interior.Char.Rel.Entropy} in the literature, we provide a complete proof here.
\end{remark}

For notational convenience we use the leave-one-out notation $\bp^{(-i)}\in\mathbb{R}^{n-1}$ to denote the deletion of coordinate $i$ from the vector $\bp\in\mathbb{R}^n$ when $n\geq2$.

\begin{lemma}\label{lem:I.recursion}
    If $I_n\divg{\cdot}{\cdot}$ satisfies \ref{C2}--\ref{C4} then for any $(\bp,\bq)\in \Delta_n^\circ\times\Delta_n^\circ$ and $n\geq3$ it satisfies the recursion
    \[
    I_n\divg{\bp}{\bq} = I_2\divg{(p_i,1-p_i)}{ (q_i, 1-q_i)} + (1-p_i) I_{n-1}\divg*{\frac{\bp^{(-i)}}{1-p_i}}{ \frac{\bq^{(-i)}}{1-q_i}},
    \]
    for $i = 1, \ldots, n$.
\end{lemma}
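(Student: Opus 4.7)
My plan is to derive the recursion as a direct application of the chain rule \ref{C4}, by writing both $\bp$ and $\bq$ as two-level composite distributions that isolate coordinate $i$ from the remaining coordinates. By the permutation invariance \ref{C2}, there is no loss of generality in treating the case $i = 1$; the general case then follows by relabeling.

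Concretely, for $(\bp, \bq) \in \Delta_n^\circ \times \Delta_n^\circ$ with $n \geq 3$, observe that $(p_1, 1-p_1) \in \Delta_2^\circ$ (since $p_1 \in (0,1)$), $(1) \in \Delta_1^\circ$, and $\bp^{(-1)}/(1-p_1) \in \Delta_{n-1}^\circ$, with an analogous statement for $\bq$. The composite distribution definition \eqref{eqn:composite.distribution} then yields the identities
\[
\bp = (p_1, 1-p_1) \circ \bigl((1),\ \bp^{(-1)}/(1-p_1)\bigr),
\qquad
\bq = (q_1, 1-q_1) \circ \bigl((1),\ \bq^{(-1)}/(1-q_1)\bigr),
\]
so that all the ingredients for the chain rule \ref{C4} lie in the appropriate open simplices and the identity is applicable.

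Applying the chain rule with $n = 2$, $k_1 = 1$, and $k_2 = n-1$ gives
\[
I_n\divg{\bp}{\bq} \;=\; I_2\divg{(p_1, 1-p_1)}{(q_1, 1-q_1)} \;+\; p_1\, I_1\divg{(1)}{(1)} \;+\; (1-p_1)\, I_{n-1}\divg*{\frac{\bp^{(-1)}}{1-p_1}}{\frac{\bq^{(-1)}}{1-q_1}}.
\]
The middle term vanishes by \ref{C3} applied at $n=1$, leaving exactly the claimed identity for $i = 1$. For arbitrary $i \in \{1, \ldots, n\}$, let $\sigma$ be any permutation sending $i$ to $1$; then \ref{C2} gives $I_n\divg{\bp}{\bq} = I_n\divg{\bp\sigma}{\bq\sigma}$, the case $i=1$ applies to $(\bp\sigma, \bq\sigma)$, and \ref{C2} applied again at dimension $n-1$ identifies the reduced term with $I_{n-1}\divg{\bp^{(-i)}/(1-p_i)}{\bq^{(-i)}/(1-q_i)}$.

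There is no real obstacle here beyond bookkeeping: the only subtlety is checking that the inner and outer distributions remain in the \emph{open} simplices (so that the domain restriction of the chain rule \ref{C4} is respected), which is automatic since $p_i, q_i \in (0,1)$ for $(\bp, \bq) \in \Delta_n^\circ \times \Delta_n^\circ$.
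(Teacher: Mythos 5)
Your proof is correct and follows essentially the same route as the paper's: write $\bp$ and $\bq$ as composites $(p_1,1-p_1)\circ\bigl((1),\,\bp^{(-1)}/(1-p_1)\bigr)$ (and likewise for $\bq$), apply the chain rule \ref{C4}, kill the $I_1\divg{(1)}{(1)}$ term via \ref{C3}, and invoke permutation invariance \ref{C2} to handle general $i$. Your explicit remark that \ref{C2} must be applied again at dimension $n-1$ to identify the reduced term is a small bookkeeping point the paper glosses over, but it changes nothing of substance.
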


\begin{proof}
We can write
\[\bp = (p_1 , 1 - p_1) \circ \left( (1), \left(\frac{p_2}{1-p_1},\dots,\frac{p_n}{1-p_1}\right)\right) \]
and 
\[\bq = (q_1 , 1 - q_1) \circ \left( (1), \left(\frac{q_2}{1-q_1},\dots,\frac{q_n}{1-q_1}\right)\right). \]
Observe that $(p_1 , 1 - p_1),(q_1,1-q_1)\in \Delta_2^\circ, 1\in \Delta_1^\circ$ and $\frac{\bp^{(-1)}}{1-p_1},\frac{\bq^{(-1)}}{1-q_1}\in \Delta_{n-1}^\circ$.  By the chain rule \ref{C4} and the fact that $I_1((1)\|(1)) =0$ (see \ref{C3})
\begin{align*}I_n\divg{\bp}{\bq} &= I_2((p_1,1-p_1)\|(q_1,1-q_1))+ (1-p_1) I_{n-1}\divg*{\frac{\bp^{(-1)}}{1-p_1}}{\frac{\bq^{(-1)}}{1-q_1}}.
\end{align*}
By permutation invariance \ref{C2}, the index $i=1$ is arbitrary.
\end{proof}

\begin{lemma}\label{lem:I_n.functional.eqn}
    If $I_n\divg{\cdot}{\cdot}$ satisfies \ref{C2}--\ref{C4} then $B(x,y) := I_2\divg{(x,1-x)}{(y,1-y)}$ for $x,y\in(0,1)$ satisfies the functional equation
    \begin{equation}\label{eq:B.function}
    B(x,y)+(1-x)B\left(\frac{u}{1-x},\frac{v}{1-y}\right)=B(u,v)+(1-u)B\left(\frac{x}{1-u},\frac{y}{1-v}\right)
    \end{equation}
    on the (open) triangular domain $x,y,u,v,x+u,y+v\in (0,1)$.
\end{lemma}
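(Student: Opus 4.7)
The plan is to obtain \eqref{eq:B.function} by computing $I_3\divg{\bp}{\bq}$ in two different ways for a cleverly chosen pair of distributions in $\Delta_3^\circ$, using the recursion from Lemma \ref{lem:I.recursion} together with permutation invariance \ref{C2}.

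Concretely, fix $x,y,u,v$ with $x,y,u,v,x+u,y+v \in (0,1)$, and set
\[
\bp := (x,\, u,\, 1-x-u), \qquad \bq := (y,\, v,\, 1-y-v).
\]
The domain assumption ensures $\bp, \bq \in \Delta_3^\circ$, and it also guarantees that the reduced distributions obtained below all lie in $\Delta_2^\circ$. Applying Lemma \ref{lem:I.recursion} with $i=1$, and using the definition of $B$, yields
\[
I_3\divg{\bp}{\bq} \;=\; B(x,y) \;+\; (1-x)\, B\!\left(\frac{u}{1-x},\, \frac{v}{1-y}\right),
\]
which is the left-hand side of \eqref{eq:B.function}.

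Next, let $\sigma$ be the transposition $(1\ 2)$, so that $\bp\sigma = (u,x,1-x-u)$ and $\bq\sigma = (v,y,1-y-v)$, which both lie in $\Delta_3^\circ$. By \ref{C2} we have $I_3\divg{\bp\sigma}{\bq\sigma} = I_3\divg{\bp}{\bq}$, and applying Lemma \ref{lem:I.recursion} again with $i=1$ to the permuted pair gives
\[
I_3\divg{\bp}{\bq} \;=\; B(u,v) \;+\; (1-u)\, B\!\left(\frac{x}{1-u},\, \frac{y}{1-v}\right),
\]
which is the right-hand side of \eqref{eq:B.function}. Equating the two expressions for $I_3\divg{\bp}{\bq}$ yields the desired identity.

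There is no substantial obstacle; the only point requiring care is verifying that every distribution appearing above is genuinely in the \emph{open} simplex, so that the recursion of Lemma \ref{lem:I.recursion} (which was derived under \ref{C2}--\ref{C4} on $\Delta_n^\circ\times\Delta_n^\circ$) applies. This is precisely why the functional equation is posed on the open triangular domain $x,y,u,v,x+u,y+v \in (0,1)$.
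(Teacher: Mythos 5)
Your proof is correct and follows essentially the same route as the paper: both compute $I_3\divg{\bp}{\bq}$ for $\bp=(x,u,1-x-u)$, $\bq=(y,v,1-y-v)$ via the recursion of Lemma \ref{lem:I.recursion} in two ways and equate. The only cosmetic difference is that you make the permutation-invariance step explicit before reapplying the recursion with $i=1$, whereas the paper invokes the recursion directly with two different indices $i\neq j$ (the index-freedom there having already been justified by \ref{C2} in the proof of Lemma \ref{lem:I.recursion}).
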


\begin{proof}
    Applying Lemma \ref{lem:I.recursion} with $n=3$ and $i\neq j\in \{1,2,3\}$ we obtain
    \begin{align*}I_3\divg{\bp}{\bq} &= I_2\divg{(p_i,1-p_i) }{ (q_i, 1-q_i)} \\
    &\quad + (1-p_i) I_{2}\divg*{\left(\frac{p_j}{1-p_i},1-\frac{p_j}{1-p_i}\right)}{ \left(\frac{q_j}{1-q_i},1-\frac{q_j}{1-q_i}\right)}\\
    &=B(p_i,q_i)+(1-p_i)B\left(\frac{p_j}{1-p_i},\frac{q_j}{1-q_i}\right).
    \end{align*}
    Swapping the choice of indices and repeating the argument
    \begin{align*}
        I_3\divg{\bp}{\bq}=B(p_j,q_j)+(1-p_j)B\left(\frac{p_i}{1-p_j},\frac{q_i}{1-q_j}\right).
    \end{align*}
    Equating these two expressions recovers \eqref{eq:B.function}.
\end{proof}

\begin{lemma}\label{lem:func.eq.sol.char}
    If $B(\cdot,\cdot)$ is a separately measurable solution to \eqref{eq:B.function} satisfying 
    \[B(x,y) = B(1-x,1-y) \quad \text{and} \quad B(x,x) =0, \quad x,y\in(0,1),\]
    then there exists a $c\in\R$ such that
    \[
    B(x,y) = c H\divg{(x,1-x)}{(y,1-y)}, \quad  x,y\in(0,1).\]
\end{lemma}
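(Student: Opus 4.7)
The strategy is to classify the separately measurable solutions of \eqref{eq:B.function} on the open triangular domain via the theorem of \cite{ebanks1987generalized}, then use the hypotheses $B(x,y)=B(1-x,1-y)$ and $B(x,x)=0$ to reduce to the one-parameter family $cH_B$, where $H_B(x,y):=H\divg{(x,1-x)}{(y,1-y)}=x\log(x/y)+(1-x)\log((1-x)/(1-y))$. The first step is to exhibit three elementary solutions of \eqref{eq:B.function}:
\[
\phi(x):=x\log x+(1-x)\log(1-x),\qquad \psi(x,y):=x\log y+(1-x)\log(1-y),\qquad \ell(x):=x.
\]
That $\phi$ solves \eqref{eq:B.function} reduces to the classical one-variable fundamental equation of information (since $\phi$ does not involve $y$ or $v$). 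For $\psi$, a direct expansion shows that both sides of \eqref{eq:B.function} equal $x\log y+u\log v+(1-x-u)\log(1-y-v)$, after the $\log(1-y)$-contributions telescope. For $\ell(x)=x$, both sides of \eqref{eq:B.function} trivially equal $x+u$. Consequently $H_B=\phi-\psi$ is also a solution, and satisfies both extra hypotheses.

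I would then invoke the classification result of \cite{ebanks1987generalized}, which asserts that under separate measurability every solution of \eqref{eq:B.function} on the open triangular domain is of the form
\[
B(x,y)=c_0\,\phi(x)+c_1\,\psi(x,y)+\alpha\,\ell(x),\qquad c_0,c_1,\alpha\in\R.
\]
Measurability is essential here as it linearizes all the auxiliary Cauchy equations that arise in their reduction argument; in particular, it rules out the pathological (non-linear) solutions that otherwise enlarge the solution family.

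The final bookkeeping step pins down the coefficients using the two extra conditions on $B$. The involution symmetry $B(x,y)=B(1-x,1-y)$, combined with the identities $\phi(1-x)=\phi(x)$, $\psi(1-x,1-y)=\psi(x,y)$, and $\ell(1-x)=1-x$, yields $\alpha\,x=\alpha\,(1-x)$ for all $x\in(0,1)$, which forces $\alpha=0$. The diagonal vanishing $B(x,x)=0$, together with $\psi(x,x)=\phi(x)$, then reads $(c_0+c_1)\phi(x)\equiv 0$ on $(0,1)$; since $\phi$ is not identically zero, this gives $c_0=-c_1$, and setting $c:=c_0=-c_1$ yields $B(x,y)=c\,[\phi(x)-\psi(x,y)]=c\,H_B(x,y)$, as claimed.

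The main obstacle is the clean application of \cite{ebanks1987generalized} on the strictly open triangular domain $x,y,u,v,x+u,y+v\in(0,1)$: their classification theorem is typically formulated on a slightly larger domain, and one must audit their reduction step by step to ensure that no intermediate equation depends on boundary data. A secondary check is that separate (rather than joint) measurability suffices for the linearization of all the Cauchy-type auxiliary equations that appear in their argument, which is immediate here because the relevant variables range over open subintervals of $(0,1)$ where measurable solutions of Cauchy's equation are linear.
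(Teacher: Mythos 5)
Your endgame is fine: the three functions $\phi$, $\psi$, $\ell$ are indeed solutions of \eqref{eq:B.function} (your verifications check out), and once one knows that every separately measurable solution is a linear combination $c_0\phi(x)+c_1\psi(x,y)+\alpha x$, the symmetry and diagonal conditions do force $\alpha=0$ and $c_0=-c_1$, giving $B=cH$. The problem is the load-bearing middle step. The result of \cite{ebanks1987generalized} (equivalently \cite[Corollary 10.7c]{kannappan2009functional}) is \emph{not} a classification of solutions of the two-pair equation \eqref{eq:B.function}; it is the general solution of the four-unknown-function equation
\[
f(x)+(1-x)\,g\!\left(\tfrac{u}{1-x}\right)=h(u)+(1-u)\,k\!\left(\tfrac{x}{1-u}\right)
\]
in a \emph{single} pair of variables $(x,u)$. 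To use it on \eqref{eq:B.function} you must freeze $(y,v)$, set $f=B(\cdot,y)$, $g=B(\cdot,v/(1-y))$, $h=B(\cdot,v)$, $k=B(\cdot,y/(1-v))$, and then you obtain a representation whose logarithmic function and whose five constants $\eta_1,\dots,\eta_5$ all a priori \emph{depend on} $(y,v)$. Showing that the entropy coefficient is a genuine constant, that the remaining terms collapse to $B(x,y)=-c\,\mathcal{E}(x)+a(y)x+b(y)$, and then solving the induced functional equations for $a(\cdot)$ and $b(\cdot)$ (which is where the logarithmic equation for $y\mapsto b$, and hence the $\log y$, $\log(1-y)$ dependence, actually comes from) is the substantive content of the proof — not a routine "audit" of the domain in the cited theorem. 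As written, your proposal asserts the conclusion of this analysis as if it were the cited theorem, so the classification step is a genuine gap.

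For comparison, the paper's proof does exactly the bridging work you skipped: it applies the Ebanks--Kannappan--Ng result for each fixed $(y,v)$, compares the resulting representations across different choices of $(y,v)$ to show the coefficient of $\mathcal{E}(x)$ is an absolute constant and the rest is affine in $x$ with $y$-dependent coefficients, then substitutes back into \eqref{eq:B.function} to derive a system of functional equations for those coefficients, and finally uses separate measurability a second time (on $y\mapsto B(x,y)$) to identify the logarithmic solution. Note also that the paper invokes the symmetry $B(x,y)=B(1-x,1-y)$ \emph{inside} this reduction (to pass from the form $a(y)x+b(y)$ to $x\ell(y)+(1-x)\ell(1-y)$), not merely at the end as you do; if you want to defer symmetry to the final bookkeeping, you must first establish the unconstrained classification, which requires solving the $a,b$ system without that shortcut. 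To repair your proof, either carry out this two-stage argument explicitly or cite a result that genuinely classifies solutions of the two-pair equation on the open triangular domain.
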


\begin{proof}
    Fix $y,v\in(0,1)$ with $y+v\in(0,1)$. Writing 
    \[k(z) = B\left(z,\frac{y}{1-v}\right), \ g(z) = B\left(z,\frac{v}{1-y}\right),\ f(z) = B(z,y),\ h(z) = B(z,v),\]
    we may rewrite the functional equation for $B$ as
    \[f(x)+(1-x)g\left(\frac{u}{1-x}\right)=h(u)+(1-u)k\left(\frac{x}{1-u}\right)\]
    for $x,u,x+u\in (0,1)$. This is exactly the equation in \cite[Corollary 10.7c]{kannappan2009functional} (see also the original paper \cite{ebanks1987generalized}) for the identity $M(x)\equiv x$. As the identity map is both additive and multiplicative we get by \cite[Corollary 10.7c]{kannappan2009functional} the general solution
    \[
    \begin{aligned}
    f(x) &= x\,L(x) + (1-x)\,L(1-x) + \eta_{3}\,x - \eta_{2}\, (1-x) + \eta_{5},\\[2pt]
    g(x) &= x\,L(x) + (1-x)\,L(1-x) + \eta_{1}\,x + \eta_{2},\\[2pt]
    h(x) &= x\,L(x) + (1-x)\,L(1-x) + \eta_{1}\,x - \eta_{4}\,(1-x) + \eta_{5},\\[2pt]
    k(x) &= x\,L(x) + (1-x)\,L(1-x) + \eta_{3}\,x + \eta_{4},
    \end{aligned}
    \]
    where $L(\cdot)$ is a solution to the {\it logarithmic equation} 
    \begin{equation}\label{eq:logarithmic.eqn}
        L(xy) = L(x)+L(y), \qquad x,y\in(0,1),
    \end{equation} and $\eta_i$, $i=1,\dots,5$ are constants. Note that a priori all of the constants depend on $y,v,y+v\in(0,1)$. Since $B(\cdot,y)$ is Lebesgue measurable for each $y\in(0,1)$, so too are the functions $f,g,h,k,L$. As the Lebesgue measurable solution to the logarithmic equation is $L(x) = c\log(x)$ for some $c\in\mathbb{R}$, we can make this identification.
    
    Define the binary entropy
    \[\mathcal{E}(x) = -x\log x  - (1-x)\log(1-x), \quad x \in (0, 1).\]
    By emphasizing the dependence of the constants on the parameters $v,y$ and substituting the form of the $f,g,h,k$ in terms of $B$, we arrive at the equations
    \[
    \begin{aligned}
    B(x,y) &=-c(y,v)\mathcal{E}(x) + \eta_{3}(y,v)\,x - \eta_{2}(y,v)\, (1-x) + \eta_{5}(y,v),\\[2pt]
    B\left(x,\frac{v}{1-y}\right) &= -c(y,v)\mathcal{E}(x)  + \eta_{1}(y,v)\,x + \eta_{2}(y,v),\\[2pt]
    B(x,v) &= -c(y,v)\mathcal{E}(x)  + \eta_{1}(y,v)\,x - \eta_{4}(y,v)\,(1-x) + \eta_{5}(y,v),\\[2pt]
    B\left(x,\frac{y}{1-v}\right) &= -c(y,v)\mathcal{E}(x)  + \eta_{3}(y,v)\,x + \eta_{4}(y,v).
    \end{aligned}
    \]
    Isolating the first and third equations we see that:
    \[B(x,y) =-c(y,v)\mathcal{E}(x) + a_1(y,v)\,x + b_1(y,v),\]
    \[B(x,v) = -c(y,v)\mathcal{E}(x)  + a_2(y,v)\,x +b_2(y,v)\]
    where $a_1(y,v) = \eta_{3}(y,v)+\eta_2(y,v)$, $a_2(y,v) = \eta_{1}(y,v)+\eta_4(y,v)$, $b_1(y,v)=\eta_{5}(y,v)- \eta_{2}(y,v)$ and $b_2(y,v)=\eta_{5}(y,v)- \eta_{4}(y,v)$.
    Fix $y$ and take any $v_1,v_2\in(0,1)$ with $v_1+y,v_2+y\in(0,1)$. Substituting these into the first equation and subtracting gives:
    \begin{align*}
        0&=-(c(y,v_1)-c(y,v_2))\mathcal{E}(x) + (a_1(y,v_1)-a_1(y,v_2))x+(b_1(y,v_1)-b_1(y,v_2))
    \end{align*} 
    for all $x,y,x+y\in(0,1)$. This implies
    \begin{align*}&c(y,v_1)=c(y,v_2), \quad a_1(y,v_1)=a_1(y,v_2), \quad b_1(y,v_1) =b_1(y,v_2).
    \end{align*}
    Repeating this argument with the second equation gives us
    \begin{align*}&c(y_1,v)=c(y_2,v), \quad a_2(y_1,v)=a_2(y_2,v), \quad b_2(y_1,v) =b_2(y_2,v).
    \end{align*}
    Since these hold for all admissible $v_1,v_2$ given $y$ (respectively, all admissible $y_1,y_2$ given $v$) we conclude\footnote{This conclusion is tacitly using that $\{\mathcal{E}(x),x,1\}$ are linearly independent on $(0,1)$.} that $c(y,v)\equiv c$ is constant and that $a_1(y,v), b_1(y,v)$ do not depend on $v$ (respectively, $a_2(y,v),b_2(y,v)$ do not depend on $y$). Thus, we deduce that $B(\cdot,\cdot)$ takes the form
    \begin{equation}\label{eq:B.reduced.form.a.b}
    B(x,y) =-c\mathcal{E}(x) + a(y) x + b(y), \ \ \ x,y\in(0,1),
    \end{equation}
    in terms of two univariate functions $a,b:(0,1)\mapsto\mathbb{R}$.

    The equation \eqref{eq:B.reduced.form.a.b} appears in exactly this form in \cite[Equation (10.50)]{kannappan2009functional}. By exploiting symmetry, we may employ analogous arguments to \cite{kannappan2009functional} in order to recover their \cite[Equation (10.50a)]{kannappan2009functional} and deduce that 
    \begin{equation}\label{eq:B.reduced.form}
    B(x,y) =-c\mathcal{E}(x) + x\ell(y) + (1-x)\ell(1-y), \ \ \ x,y\in(0,1),
    \end{equation}
    where $\ell(\cdot)$ is another solution of the logarithmic equation \eqref{eq:logarithmic.eqn}.

    With this verification complete, we can now make use of \eqref{eq:B.reduced.form} to complete the proof. Consider the sequence of functions $\mathfrak{b}_n(y) = B(x_n,y)$ for $x_n\uparrow 1$. Each function is Lebesgue measurable by the measurability of $y\mapsto B(x,y)$ for each $x$. Passing to the limit we define
    \[\mathfrak{b}_\infty(y):=\lim_{n\to\infty} \mathfrak{b}_n(y) = \ell(y), \ \ \ y\in(0,1).\]
    As the pointwise limit of Lebesgue measurable functions, $\mathfrak{b}_\infty(y)=\ell(y)$ is Lebesgue measurable. Hence, $\ell(\cdot)$ is a \emph{Lebesgue measurable} solution of  $\eqref{eq:logarithmic.eqn}$ and so, there exists a $c'\in\mathbb{R}$ such that
    \[\ell(y) = c'\log(y), \ \ \ y\in(0,1).\]
    Using the property $B(x,x) = 0$ we arrive at the identity
    \[c\mathcal{E}(x) = c'\left[x\log(x) + (1-x)\log(1-x)\right]=c'\mathcal{E}(x), \quad x\in(0,1),\]
    from which it necessarily follows that $c'=c$. Putting this all together,
    \[B(x,y) = -c\left[\mathcal{E}(x) - x\log(y) - (1-x) \log(1-y)\right]=-cH((x,1-x),(y,1-y)).\]
    Absorbing $-c$ into a single constant completes the proof.
\end{proof}

With these ingredients we readily complete the proof of Proposition \ref{prop:Interior.Char.Rel.Entropy}.

\begin{proof}[Proof of Proposition \ref{prop:Interior.Char.Rel.Entropy}]
That relative entropy satisfies \ref{C1}--\ref{C4} is a standard verification, so we focus on the converse implication. 

First, note that the equality $I_1=cH_1$ trivially holds for $n=1$ as $\Delta_1^\circ=\{1\}$. Indeed, $I_1\divg{1}{1} = cH\divg{1}{1}=0$. Then, by Lemma \ref{lem:I_n.functional.eqn} and Assumption \ref{C1}, $B(x,y) = I_2\divg{(x,1-x)}{(y,1-y)}$ is a separately measurable solution of the functional equation \eqref{eq:B.function}. By the permutation invariance and vanishing properties \ref{C2}--\ref{C3} of $I_2\divg{\cdot}{\cdot}$ we also get
\[B(x,y) = B(1-x,1-y) \quad \text{and} \quad B(x,x)=0, \quad x,y\in(0,1).\]
So, by Lemma \ref{lem:func.eq.sol.char} we conclude that there exists a $c\in\mathbb{R}$ such that
\[I_2\divg{(x,1-x)}{(y,1-y)} =  c H\divg{(x,1-x)}{(y,1-y)}, \qquad x,y\in(0,1).\]
To extend this to general $n\geq2$, we use that relative entropy satisfies the recursion of Lemma \ref{lem:I.recursion}. Applying Lemma \ref{lem:I.recursion} for $n=3$ and using that $I_2=cH$ yields
\[I_3\divg{\bp}{\bq} = c H\divg{\bp}{\bq}, \qquad \bp,\bq\in \Delta_3^\circ.\]
Iterating this recursion for $n=4,5,\dots$ completes the proof for general $n$.
\end{proof}

\section{Optimization Details}

\subsection{Constrained variational problem}\label{app:variational.constrained}
The main text introduces a penalized variational formula, which is naturally related to the constrained problem
\begin{equation}\label{eqn:constr.var.rep}
\sup_{(\bpi,\bq)\in\Delta_n\times\Delta_n:\ H(\bq\;\Vert\;\bpi)\le \eta}\langle \bq-\bpi,\br\rangle, \qquad \eta\geq0.
\end{equation}
We include the details here for completeness. For any fixed radius $\eta\ge 0$ we first define the objective
\begin{equation}\label{eqn:def.Phi}
\Phi_\eta(\mathbf r)
:= \sup_{\bq\in\Delta_n:\ H(\bq\;\Vert\;\bpi)\le \eta}\, \langle \bq-\bpi,\br\rangle. %
\end{equation}

\begin{lemma}[Perspective duality]\label{lem:perspective}
Fix $n\ge 2$ and $\bpi\in\Delta_n$. For $\eta\ge 0$ and $\mathbf r\in\mathbb R^n$ 
\begin{equation}\label{eq:perspective-identity}
\Phi_\eta(\mathbf r)
= \inf_{\lambda> 0}\Bigl\{\lambda\, \gamma\Bigl(\bpi, \frac{1}{\lambda}\br\Bigr)+\eta\,\lambda\Bigr\},
\end{equation}
with the convention
\begin{equation}\label{eq:perspective-limit}
\lambda\, \gamma\Bigl(\bpi,\frac{1}{\lambda}\br\Bigr)\rightarrow
\max_{i\in\supp(\bpi)} r_i-\langle\bpi,\mathbf r\rangle \quad \text{as} \quad \lambda\downarrow 0.
\end{equation}
\end{lemma}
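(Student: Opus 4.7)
The plan is to combine the variational representation of Proposition \ref{thm:egr.variational} with Lagrangian duality applied to the constrained maximization defining $\Phi_\eta$. Both sides of \eqref{eq:perspective-identity} will be identified as the common value of a convex--concave saddle-point problem.

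First I would rescale Proposition \ref{thm:egr.variational}: substituting $\br\mapsto\br/\lambda$ for $\lambda>0$ and multiplying through by $\lambda$ yields
\[
\lambda\,\gamma\Bigl(\bpi,\tfrac{1}{\lambda}\br\Bigr)
=\sup_{\bq\in\Delta_n}\Bigl\{\langle \bq-\bpi,\br\rangle-\lambda\,H\divg{\bq}{\bpi}\Bigr\}.
\]
Consequently, the right-hand side of \eqref{eq:perspective-identity} equals $\inf_{\lambda>0}\sup_{\bq\in\Delta_n}\{\langle\bq-\bpi,\br\rangle-\lambda(H\divg{\bq}{\bpi}-\eta)\}$. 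The primal problem defining $\Phi_\eta$ maximizes a linear functional over the convex sublevel set $\{\bq\in\Delta_n:H\divg{\bq}{\bpi}\le\eta\}$. For $\eta>0$, the choice $\bq=\bpi$ satisfies the constraint strictly, so Slater's condition holds and strong convex duality gives
\[
\Phi_\eta(\br)=\inf_{\lambda\ge 0}\sup_{\bq\in\Delta_n}\Bigl\{\langle \bq-\bpi,\br\rangle-\lambda\bigl(H\divg{\bq}{\bpi}-\eta\bigr)\Bigr\}.
\]

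Next I would address the boundary $\lambda=0$. The log-sum-exp asymptotic $\lambda\log\bigl(\sum_{i\in\supp(\bpi)}\pi_i e^{r_i/\lambda}\bigr)\to\max_{i\in\supp(\bpi)}r_i$ as $\lambda\downarrow 0$ (obtained by factoring out $e^{M/\lambda}$ with $M=\max_{i\in\supp(\bpi)}r_i$) shows that $\lambda\mapsto\lambda\gamma(\bpi,\br/\lambda)+\eta\lambda$ extends continuously to $\lambda=0$ with limit \eqref{eq:perspective-limit}. Since this limit is bounded above by the unrestricted $\lambda=0$ Lagrangian value $\sup_{\bq\in\Delta_n}\langle\bq-\bpi,\br\rangle=\max_{i}r_i-\langle\bpi,\br\rangle$, the dual infimum over $\lambda\ge 0$ coincides with the infimum over $\lambda>0$ under the stated limit convention, producing \eqref{eq:perspective-identity}. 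The degenerate case $\eta=0$ must be verified separately: the feasible set reduces to $\{\bpi\}$ so $\Phi_0(\br)=0$, while the Taylor expansion \eqref{eqn:Taylor} yields $\gamma(\bpi,\br/\lambda)=O(1/\lambda^2)$ as $\lambda\to\infty$, giving $\inf_{\lambda>0}\lambda\gamma(\bpi,\br/\lambda)=0$ as well.

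The main subtlety is tracking the support condition at the boundary: the implicit constraint $\supp(\bq)\subset\supp(\bpi)$ inherited from finiteness of $H\divg{\bq}{\bpi}$ is precisely what ensures that the one-sided limit of the penalized objective at $\lambda\downarrow 0$ is the maximum over $\supp(\bpi)$, rather than over all of $[n]$; this must be reconciled with the \emph{unrestricted} $\lambda=0$ term in the Lagrangian. The argument resolves this by noting that the $\lambda=0$ branch is always dominated by the one-sided limit of the $\lambda>0$ branch, so it may be discarded.
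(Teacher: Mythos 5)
Your proof is correct, but it reaches \eqref{eq:perspective-identity} by a different duality route than the paper. The paper forms the same partial Lagrangian $\mathcal L(\bq,\lambda)=\langle\bq-\bpi,\br\rangle-\lambda\bigl(H\divg{\bq}{\bpi}-\eta\bigr)$ but keeps $\lambda$ ranging over the open half-line $(0,\infty)$ and invokes Sion's minimax theorem ($\Delta_n$ compact and convex, $\mathcal L$ concave in $\bq$ and affine in $\lambda$) to swap $\sup_{\bq}\inf_{\lambda>0}$ with $\inf_{\lambda>0}\sup_{\bq}$; the inner supremum is then identified with $\lambda\gamma(\bpi,\br/\lambda)+\eta\lambda$ via Proposition \ref{thm:egr.variational}, exactly as you do. Because the paper never introduces the point $\lambda=0$ into the dual variable's domain, it needs neither a Slater condition nor a separate treatment of $\eta=0$: the identity $\Phi_\eta(\br)=\sup_{\bq}\inf_{\lambda>0}\mathcal L$ holds for all $\eta\ge 0$ by inspection. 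Your route via strong Lagrangian duality over $\lambda\ge 0$ is equally valid, but it forces you to do two extra pieces of bookkeeping, both of which you handle correctly: (i) showing that the $\lambda=0$ dual value $\max_{i\in[n]}r_i-\langle\bpi,\br\rangle$ dominates the one-sided limit $\max_{i\in\supp(\bpi)}r_i-\langle\bpi,\br\rangle$ and hence never improves the infimum, and (ii) the degenerate case $\eta=0$ where Slater fails, which you resolve by computing $\Phi_0=0$ directly and using the Taylor expansion \eqref{eqn:Taylor} to see that $\lambda\gamma(\bpi,\br/\lambda)\to 0$ as $\lambda\to\infty$. One small point worth making explicit in your Slater step: since $\bpi$ may lie on the boundary of $\Delta_n$, the strictly feasible point $\bq=\bpi$ should be checked to lie in the relative interior of the effective domain $\{\bq\in\Delta_n:\supp(\bq)\subset\supp(\bpi)\}$ of the constraint function, which it does because $\supp(\bpi)=\supp(\bpi)$; this is the refined Slater condition needed when the constraint is extended-real-valued.
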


\begin{proof}
Consider the (partial) Lagrangian
\[
\mathcal L(\bq,\lambda)
:=\langle \bq-\bpi,\mathbf r\rangle-\lambda\bigl(H(\bq\;\Vert\;\bpi)-\eta\bigr),
\qquad \bq\in\Delta_n,\ \lambda> 0 .
\]
For each fixed $\lambda> 0$, the map $\bq\mapsto \mathcal L(\bq,\lambda)$ is continuous and concave on the compact convex set
$\Delta_n$, while for fixed $\bq$ the map
$\lambda\mapsto \mathcal L(\bq,\lambda)$ is affine. Hence Sion's minimax theorem yields
\[
\Phi_\eta(\br)
=\sup_{\bq\in\Delta_n}\inf_{\lambda> 0}\mathcal L(\bq,\lambda)
=\inf_{\lambda> 0}\sup_{\bq\in\Delta_n}\mathcal L(\bq,\lambda).
\]
If $\lambda>0$, we have
\begin{equation}\label{eqn:intermediate.step.Lagrangian}
\sup_{\bq\in\Delta_n}\mathcal L(\bq,\lambda)
=\lambda\sup_{\bq\in\Delta_n}\Bigl\{\Bigl\langle \bq-\bpi,\frac{\br}{\lambda}\Bigr\rangle-H(\bq\;\Vert\;\bpi)\Bigr\}
+\eta\lambda.
\end{equation}
By Theorem~\ref{thm:egr.variational} the inner supremum in \eqref{eqn:intermediate.step.Lagrangian} equals $\gamma(\bpi,\mathbf r/\lambda)$, giving
\[
\sup_{\bq\in\Delta_n}\mathcal L(\bq,\lambda)=\lambda\, \gamma(\bpi,\br/\lambda)+\eta\lambda .
\]
Combining the case $\lambda>0$ with the well-known fact that $\lambda\log\sum_i \pi_i e^{ r_i/\lambda}$
approximates $\max_{i:\pi_i>0} r_i$ as $\lambda\downarrow 0$ yields
\eqref{eq:perspective-identity} and \eqref{eq:perspective-limit}.
\end{proof}

The maximizer of this constrained optimization can be characterized as follows.

\begin{lemma}\label{lem:minimizer.perspective.duality}
    Set $\mathcal{M}_{\bpi}(\mathbf r):=\arg\max_{i\in\supp(\bpi)} r_i$ and define $\overline{\eta}(\cdot)$ and $\bq(\cdot)$ through
\[
\overline{\eta}(\mathbf r):=-\log\left(\sum_{j\in \mathcal{M}_{\bpi}(\mathbf r)} \pi_j\right), \quad 
q_i(\br)=\frac{\pi_i \exp(r_i)}{\sum_{j=1}^n \pi_j \exp(r_j)}=\bpi \oplus_{\bpi} \cC[e^{\br}].
\]
Then:
\begin{enumerate}[label=(\alph*),leftmargin=*]
\item If $0\leq\eta<\overline{\eta}(\mathbf r)$, the infimum in \eqref{eq:perspective-identity} is attained at any $\lambda^\star> 0$ satisfying\footnote{The case $\eta=0$ is solved by setting $\lambda^\star=\infty$ which we identify with $\bq^\star=\lim_{\lambda\uparrow \infty}\bq(\br/\lambda)=\bpi$.} $H\divg{\bq(\br/\lambda^\star)}{\bpi}=\eta$ and the maximizer in \eqref{eqn:def.Phi} is given by $\bq^\star=\bq(\br/\lambda^\star)$.
\item If $\eta\ge \overline{\eta}(\mathbf r)$ then the infimum in \eqref{eq:perspective-identity} is achieved in the limit $\lambda\downarrow 0$, and any
$\bq^\star$ supported on $\mathcal{M}_{\bpi}(\mathbf r)$ with $H(\bq^\star\|\bpi)\le \eta$ is optimal for \eqref{eqn:def.Phi}. In particular, the distribution
\[
\bq^\star_i=\frac{\pi_i}{\sum_{j\in \mathcal{M}_{\bpi}(\mathbf r)}\pi_j}\,\mathbf 1_{\{i\in \mathcal{M}_{\bpi}(\mathbf r)\}}=\lim_{\lambda\downarrow0} q_i(\br/\lambda)
\]
satisfies
$H(\bq^\star\|\bpi)=\overline{\eta}(\mathbf r)$.
\end{enumerate}
\end{lemma}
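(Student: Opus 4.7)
The plan is to analyze the one-variable function $F(\lambda):=\lambda\,\gamma(\bpi,\br/\lambda)+\eta\,\lambda$ on $(0,\infty)$ and exploit the variational representation supplied by Proposition~\ref{thm:egr.variational}. As noted in the proof of Lemma~\ref{lem:perspective}, $F(\lambda)=\sup_{\bq\in\Delta_n}\mathcal{L}(\bq,\lambda)$ with $\mathcal{L}$ affine in $\lambda$ for each fixed $\bq$, so $F$ is convex on $(0,\infty)$. Proposition~\ref{thm:egr.variational} identifies the (unique) inner maximizer as $\bq(\br/\lambda)=\bpi\oplus_\bpi\cC[e^{\br/\lambda}]$, and by Danskin's theorem,
\[
F'(\lambda)=\eta-g(\lambda),\qquad g(\lambda):=H\divg{\bq(\br/\lambda)}{\bpi}.
\]
Minimizers of $F$ on $(0,\infty)$ are therefore exactly the roots of $g(\lambda)=\eta$ (when such a root exists), and strong duality from Lemma~\ref{lem:perspective} will identify the primal optimizer in \eqref{eqn:def.Phi} with $\bq(\br/\lambda^\star)$.

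Next I would study $g(\cdot)$. Continuity on $(0,\infty)$ follows from the explicit formula $\bq_i(\br/\lambda)=\pi_i e^{r_i/\lambda}\bigl(\sum_{j\in\supp(\bpi)}\pi_j e^{r_j/\lambda}\bigr)^{-1}$. As $\lambda\to\infty$, $\bq(\br/\lambda)\to\bpi$ and hence $g(\lambda)\to 0$. As $\lambda\downarrow 0$, factoring out $e^{M/\lambda}$ with $M:=\max_{i\in\supp(\bpi)}r_i$ shows that the mass of $\bq(\br/\lambda)$ concentrates on $\mathcal{M}_{\bpi}(\mathbf r)$ with limiting weights $\pi_i/\sum_{j\in\mathcal{M}_{\bpi}(\mathbf r)}\pi_j$, from which a short computation yields $g(\lambda)\to -\log\sum_{j\in\mathcal{M}_{\bpi}(\mathbf r)}\pi_j=\overline{\eta}(\mathbf r)$. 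Convexity of $F$ forces $g$ to be non-increasing, so $g$ takes values in $[0,\overline{\eta}(\mathbf r))$ on $(0,\infty)$.

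For case (a), the intermediate value theorem produces $\lambda^\star\in(0,\infty)$ with $g(\lambda^\star)=\eta$ whenever $\eta\in(0,\overline{\eta}(\mathbf r))$; the first-order condition $F'(\lambda^\star)=0$ together with convexity of $F$ yields global optimality of $\lambda^\star$, and strong duality identifies $\bq^\star=\bq(\br/\lambda^\star)$ as the primal maximizer with the entropy constraint active. The boundary case $\eta=0$ corresponds to pushing $\lambda^\star\uparrow\infty$, giving $\bq^\star=\bpi$ as in the footnote. For case (b), the bound $g(\lambda)\le\overline{\eta}(\mathbf r)\le\eta$ for every $\lambda>0$ gives $F'(\lambda)\ge 0$ throughout $(0,\infty)$, so $F$ is non-decreasing and its infimum is attained in the limit $\lambda\downarrow 0$, producing the value $\max_{i\in\supp(\bpi)}r_i-\langle\bpi,\mathbf r\rangle$ by \eqref{eq:perspective-limit}. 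For the displayed candidate $\bq^\star_i=\pi_i/\sum_{j\in\mathcal{M}_{\bpi}(\mathbf r)}\pi_j$ on $\mathcal{M}_{\bpi}(\mathbf r)$, direct substitution yields $\langle\bq^\star-\bpi,\mathbf r\rangle=M-\langle\bpi,\mathbf r\rangle$ and $H\divg{\bq^\star}{\bpi}=\overline{\eta}(\mathbf r)\le\eta$, so $\bq^\star$ is feasible and optimal; any other distribution supported on $\mathcal{M}_{\bpi}(\mathbf r)$ and satisfying the entropy bound achieves the same inner product and is therefore equally optimal. I expect the main technical care to lie in cleanly justifying the envelope-theorem identity for $F'$ (leveraging uniqueness of the maximizer in Proposition~\ref{thm:egr.variational}) and in establishing the two boundary limits of $g$; both reduce to elementary analysis of the explicit Gibbs-type formula for $\bq(\br/\lambda)$.
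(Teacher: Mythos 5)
Your proposal is correct and follows essentially the same route as the paper: both establish convexity of the one-variable function, derive $F'(\lambda)=\eta-H\divg{\bq(\br/\lambda)}{\bpi}$ from the variational representation in Proposition~\ref{thm:egr.variational}, deduce monotonicity of $\lambda\mapsto H\divg{\bq(\br/\lambda)}{\bpi}$ from convexity, compute the limits at $\lambda\downarrow 0$ and $\lambda\uparrow\infty$, and split into the same two cases. The only cosmetic difference is that you reach the derivative formula via Danskin's envelope theorem (using $F=\sup_{\bq}\mathcal{L}(\bq,\lambda)$ with $\mathcal{L}$ affine in $\lambda$), whereas the paper differentiates $\lambda\gamma(\bpi,\br/\lambda)$ directly and then substitutes the variational identity; both lead to the same expression with equal rigor.
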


\begin{proof}
First, observe that $\bq(\br/\lambda)=\nabla_{\br} \gamma(\bpi,\br/\lambda)+\bpi$ and define for $\lambda>0$
\[
\varphi_{\br,\eta}(\lambda):=\lambda \gamma(\bpi, \br/\lambda)+\eta\lambda.
\]
Then from Lemma \ref{lem:perspective} $\Phi_\eta(\br)=\inf_{\lambda\geq0}\varphi_{\br,\eta}(\lambda)$, and from Theorem \ref{thm:egr.variational},
\begin{align}\label{eq:egr.var.rep}
\gamma(\bpi,\br/\lambda)&=\langle \bq(\br/\lambda)-\bpi,\br/\lambda \rangle-H\divg{\bq(\br/\lambda)}{\bpi}\\
&=\langle \nabla_{\br} \gamma(\bpi,\br/\lambda),\br/\lambda \rangle-H\divg{\bq(\br/\lambda)}{\bpi}.\nonumber
\end{align}
Differentiating and using \eqref{eq:egr.var.rep} gives
\begin{align*}
\varphi'_{\br,\eta}(\lambda)
&= \gamma(\bpi,\br/\lambda)
   -\Bigl\langle \nabla_{\br} \gamma(\bpi,\br/\lambda),\br/\lambda\Bigr\rangle
   +\eta
 = \eta-H\divg{\bq(\br/\lambda)}{\bpi},\\
\varphi''_{\br,\eta}(\lambda)
&= -\partial_\lambda H\divg{\bq(\br/\lambda)}{\bpi}.%
\end{align*}
Since $\gamma(\bpi,\cdot)$ is convex, so is its perspective $g_{\bpi}(\lambda, \br) :=\lambda \gamma(\bpi,\br/\lambda)$ (cf.\ \cite[Section 3.2.6]{boyd2004convex}). In particular, $\lambda\mapsto \lambda \gamma(\bpi,\br/\lambda)$, and by extension $\varphi_{\br,\eta}(\cdot)$, is convex. We may conclude that
\[-\partial_\lambda H\divg{\bq(\br/\lambda)}{\bpi}=\varphi''_{\br,\eta}(\lambda)\geq0\]
from which it follows that $\lambda\mapsto H\divg{\bq(\br/\lambda)}{\bpi}$ is decreasing.

As $\lambda\downarrow 0$, the distribution $\bq(\br/\lambda)$ concentrates on the set $\mathcal{M}_{\bpi}(\mathbf r)$ with the limit
\[
q_i(\br/\lambda)\rightarrow
\frac{\pi_i}{\sum_{j\in \mathcal{M}_{\bpi}(\mathbf r)}\pi_j}\,\mathbf 1_{\{i\in \mathcal{M}_{\bpi}(\mathbf r)\}}
\]
whereas $q_i(\br/\lambda)\rightarrow\pi_i$ as $\lambda\to\infty$.
Therefore,
\[
\lim_{\lambda\downarrow 0}H\divg{\bq(\br/\lambda)}{\bpi}
= -\log\!\Bigl(\sum_{j\in \mathcal{M}_{\bpi}(\mathbf r)} \pi_j\Bigr)=:\overline{\eta}(\mathbf r),\qquad
\lim_{\lambda\uparrow\infty}H\divg{\bq(\br/\lambda)}{\bpi}=0 .
\]

We consider now two cases. If $\br$ is constant on the support of $\bpi$ then $\overline{\eta}(\br)=0$ and $\bq(\br/\lambda) \equiv \bpi$. In particular, $\varphi'(\lambda)=\eta\geq0$ for all $\lambda>0$ and the minimum can be attained by sending $\lambda\downarrow 0$. 

Suppose instead that $\br$ is not constant on the support of $\bpi$ so that $\overline{\eta}(\br)>0$. Because $\varphi'(\lambda)=\eta-H\divg{\bq(\br/\lambda)}{\bpi}$ and $H\divg{\bq(\br/\lambda)}{\bpi}\in[0,\overline{\eta}(\br)]$ decreases in $\lambda$, there exists a $\lambda^\star\geq 0$ with $\varphi'(\lambda^\star)=0$ (equivalently, $H\divg{\bq(\br/\lambda)}{\bpi}=\eta$) if and only if $0\le \eta<\overline{\eta}(\br)$. If $\eta\ge \overline{\eta}(\mathbf r)$ then $\varphi'(\lambda)\ge 0$ for all $\lambda>0$ and the infimum of $\varphi$ is achieved in the limit $\lambda\downarrow 0$. In this case, $\Phi_\eta(\mathbf r)=\max_{i\in\supp(\bpi)} r_i-\langle\bpi,\br\rangle$, and any $\bq^\star$ supported on $\mathcal{M}_{\bpi}(\mathbf r)$ with
$H\divg{\bq^\star}{\bpi}\le \eta$ is optimal.
\end{proof}

\begin{remark}
    It is not hard to check that the solution $\lambda^\star$ to $H\divg{\bq(\br/\lambda^\star)}{\bpi}=\eta$ in Lemma \ref{lem:minimizer.perspective.duality}(a) is unique as long as $\br$ is not constant on $\supp(\bpi)$. 
\end{remark}

We can now use the connections developed in Lemmas \ref{lem:perspective}--\ref{lem:minimizer.perspective.duality} with the excess growth rate maximization in Theorem \ref{thm:pairwise-max} to solve the constrained optimization problem in \eqref{eqn:constr.var.rep}.  We formalize this final link in the following proposition.

\begin{proposition}
Let $\br\in\mathbb{R}^n$ have $n\geq2$ distinct coordinates. Then for any $\eta\geq0$
\begin{align}
 &\sup_{(\bpi,\bq)\in\Delta_n\times\Delta_n:\ H\divg{\bq}{\bpi}\le \eta}\langle \bq-\bpi,\br\rangle  \nonumber\\
& =\inf_{\lambda>0}\Bigg\{\lambda\log\!\Bigl(\tfrac{e^{r_{(n)}/\lambda}-e^{r_{(1)}/\lambda}}{r_{(n)}-r_{(1)}}\Bigr)
-\frac{e^{r_{(n)}/\lambda}r_{(1)}-e^{r_{(1)}/\lambda}r_{(n)}}{e^{r_{(n)}/\lambda}-e^{r_{(1)}/\lambda}}+\lambda\log\lambda+\lambda(\eta-1)\Bigg\}.\label{eqn:constr.joint.opt}
\end{align}
Moreover, for the (unique) indices $\{i^\star,j^\star\}$ that attain the maximum and minimum of $\br$, respectively, define the pair $(\bpi(\lambda), \bq(\lambda))$ by 
    \[
\pi_{i^\star}(\lambda)
=
\frac{\,e^{r_{i^\star}/\lambda}-e^{r_{j^\star}/\lambda}-\lambda^{-1}(r_{i^\star}-r_{j^\star})e^{r_{j^\star}/\lambda}\,}
{\lambda^{-1}(r_{i^\star}-r_{j^\star})\,(e^{r_{i^\star}/
\lambda}-e^{r_{j^\star}/\lambda})},
\quad
\pi_{j^\star}(\lambda)=1-\pi_{i^\star}(\lambda),
\]
$\pi_k(\lambda)=0$ for $(k\notin\{i^\star,j^\star\})$, and $\bq(\lambda) = \bpi(\lambda) \oplus_{\bpi(\lambda)} \cC[e^{\br/\lambda}]$. For any $\lambda^{\star}$ solving\footnote{We are assured of the existence of at least one solution. If $\eta=0$ we identify the solution with the limit $\lim_{\lambda\uparrow\infty}\bpi(\lambda)=\lim_{\lambda\uparrow\infty}\bq(\lambda)=\frac{1}{2}(\be_{i^\star}+\be_{j^\star})$.}
\[H\divg{\bq(\lambda^\star)}{\bpi(\lambda^\star)}=\eta\]
the choice $\bpi^\star=\bpi(\lambda^\star)$ and $\bq^\star=\bq(\lambda^\star)$ is optimal.
\end{proposition}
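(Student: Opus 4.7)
The plan is to derive the identity by applying Lagrangian duality to the constrained joint optimization, and then read off the optimal pair from Proposition~\ref{prop:joint.optim.penalized.sol} via complementary slackness. The feasible region $\{(\bpi,\bq)\in\Delta_n\times\Delta_n : H\divg{\bq}{\bpi}\le\eta\}$ is convex (as a sublevel set of a jointly convex function) and compact, the objective $\langle\bq-\bpi,\br\rangle$ is linear, and for $\eta>0$ Slater's condition holds at $\bpi=\bq=\barE_n$. Hence strong duality gives
\[
V(\eta):=\sup_{(\bpi,\bq):\,H\divg{\bq}{\bpi}\le\eta}\langle\bq-\bpi,\br\rangle
=\inf_{\lambda\ge 0}\Bigl\{\sup_{(\bpi,\bq)\in\Delta_n\times\Delta_n}\bigl\{\langle\bq-\bpi,\br\rangle-\lambda H\divg{\bq}{\bpi}\bigr\}+\lambda\eta\Bigr\}.
\]
For each $\lambda>0$ I would invoke Proposition~\ref{prop:joint.optim.penalized.sol} to evaluate the inner supremum explicitly; adding $\lambda\eta$ produces precisely the bracketed expression on the right-hand side of \eqref{eqn:constr.joint.opt}. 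The boundary $\lambda=0$ yields the unconstrained value $r_{(n)}-r_{(1)}$, and a short Taylor argument shows the explicit expression also tends to $r_{(n)}-r_{(1)}$ as $\lambda\downarrow 0$, so the outer infimum may be restricted to $(0,\infty)$ without loss. The degenerate case $\eta=0$ forces $\bpi=\bq$ and hence $V(0)=0$; by the leading-order expansion in Remark~\ref{rmk:EGR.def}(ii), $\lambda\sup_\bpi\gamma(\bpi,\br/\lambda)=O(\lambda^{-1})\to 0$ as $\lambda\uparrow\infty$, so this value is recovered in the limit.

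To identify the optimizer, I fix a dual optimum $\lambda^\star$. For $\eta>0$ with $\lambda^\star\in(0,\infty)$, Proposition~\ref{prop:joint.optim.penalized.sol} asserts that the inner supremum at $\lambda^\star$ is attained uniquely at $(\bpi(\lambda^\star),\bq(\lambda^\star))$. Complementary slackness $\lambda^\star(H\divg{\bq(\lambda^\star)}{\bpi(\lambda^\star)}-\eta)=0$ combined with $\lambda^\star>0$ then forces $H\divg{\bq(\lambda^\star)}{\bpi(\lambda^\star)}=\eta$, which is precisely the defining equation of $\lambda^\star$ in the statement. Existence follows from an envelope-type calculation modeled on the proof of Lemma~\ref{lem:minimizer.perspective.duality}: setting $\Psi(\lambda):=\lambda\sup_\bpi\gamma(\bpi,\br/\lambda)$, one shows $\Psi'(\lambda)=-H\divg{\bq(\lambda)}{\bpi(\lambda)}$ by differentiating the variational identity in Theorem~\ref{thm:egr.variational} and using the envelope theorem at the optimizer $\bpi(\lambda)$. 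Hence the first-order condition $\Psi'(\lambda^\star)+\eta=0$ is exactly $H\divg{\bq(\lambda^\star)}{\bpi(\lambda^\star)}=\eta$. Continuity and monotonicity of $\lambda\mapsto H\divg{\bq(\lambda)}{\bpi(\lambda)}$, together with its limits $\infty$ as $\lambda\downarrow 0$ and $0$ as $\lambda\uparrow\infty$, then furnish a solution in $(0,\infty)$ for every $\eta>0$. For $\eta=0$, a direct Taylor expansion $e^{r_i/\lambda}\approx 1+r_i/\lambda$ substituted into the explicit formulas for $\bpi(\lambda),\bq(\lambda)$ shows $\bpi(\lambda),\bq(\lambda)\to\tfrac{1}{2}(\be_{i^\star}+\be_{j^\star})$ as $\lambda\uparrow\infty$, consistent with the footnote.

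The main technical obstacle I anticipate is the boundary analysis as $\lambda\downarrow 0$ and $\lambda\uparrow\infty$: the bracketed expression contains several individually divergent summands (for instance $\lambda\log\lambda$ and $\lambda\log(e^{r_{(n)}/\lambda}-e^{r_{(1)}/\lambda})$) whose cancellations must be tracked carefully to confirm both that the dual value reduces to $r_{(n)}-r_{(1)}$ in the limit $\lambda\downarrow 0$ and that the infimum is genuinely attained in $(0,\infty)$ when $\eta>0$, and only in the limit $\lambda\uparrow\infty$ when $\eta=0$. Apart from this bookkeeping and a standard verification of strong duality at $\eta>0$, the argument is essentially a repackaging of Proposition~\ref{prop:joint.optim.penalized.sol} through the Lagrangian lens, combined with the envelope computation that converts the outer stationarity condition into $H\divg{\bq(\lambda^\star)}{\bpi(\lambda^\star)}=\eta$.
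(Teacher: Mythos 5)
Your proposal is correct and follows essentially the same route as the paper: both reduce the problem to $\inf_{\lambda>0}\bigl\{\lambda\sup_{\bpi\in\Delta_n}\gamma(\bpi,\br/\lambda)+\eta\lambda\bigr\}$, evaluate this via Proposition~\ref{prop:joint.optim.penalized.sol}, and identify $\lambda^\star$ through the envelope/Danskin identity $\frac{\dd}{\dd\lambda}\bigl[\lambda\sup_{\bpi}\gamma(\bpi,\br/\lambda)\bigr]=-H\divg{\bq(\lambda)}{\bpi(\lambda)}$ together with the monotonicity of $\lambda\mapsto H\divg{\bq(\lambda)}{\bpi(\lambda)}$ and its boundary limits. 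The only cosmetic difference is that you justify the sup--inf interchange by Lagrangian strong duality with Slater's condition applied to the joint constrained problem, whereas the paper first applies Lemma~\ref{lem:perspective} and then a second use of Sion's minimax theorem; both are valid.
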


\begin{proof}
    By using Lemma \ref{lem:perspective} we can rewrite the constrained joint maximization of \eqref{eqn:constr.var.rep} as
\begin{align*}
    \sup_{(\bpi,\bq)\in\Delta_n\times\Delta_n:\ H(\bq\;\Vert\;\bpi)\le \eta}\left\langle\bq-\bpi,\br\right\rangle&=\sup_{\bpi\in\Delta_n}\sup_{\bq\in\Delta_n:\ H(\bq\;\Vert\;\bpi)\le \eta}\, \langle \bq-\bpi,\br\rangle\\
    &=\sup_{\bpi\in\Delta_n}\inf_{\lambda> 0}\Bigl\{\lambda\, \gamma\Bigl(\bpi, \br/\lambda\Bigr)+\eta\,\lambda\Bigr\}\\
    &=\inf_{\lambda> 0}\Bigl\{\lambda\, \sup_{\bpi\in\Delta_n}\gamma\Bigl(\bpi,\mathbf r/\lambda\Bigr)+\eta\,\lambda\Bigr\}.
\end{align*}
The interchange of $\inf\{\dots\}$ and $\sup\{\dots\}$ is justified by Sion's minimax theorem as $\Delta_n$ is convex and compact, $(0,\infty)$ is convex, $\bpi\mapsto\gamma(\bpi,\br)$ is concave, and $ \lambda \mapsto \lambda\gamma
(\bpi,\mathbf{r}/\lambda)$ is convex (cf.~\cite[Section 3.2.6]{boyd2004convex}).

To see the characterization of the solution, we observe from \eqref{eqn:penalized.joint.opt} and Proposition \ref{prop:joint.optim.penalized.sol} that
\begin{align*}\lambda\sup_{\bpi\in\Delta_n}\gamma(\bpi,\br/\lambda) &= \sup_{(\bpi,\bq)\in\Delta_n\times\Delta_n}\Bigl\{\langle  \bq - \bpi,\br\rangle-\lambda\,H\divg{\bq}{\bpi}\Bigr\}\\
&=  \langle  \bq(\lambda) - \bpi(\lambda),\br\rangle-\lambda\,H\divg{\bq(\lambda)}{\bpi(\lambda)}
\end{align*}
Proposition \ref{prop:joint.optim.penalized.sol} also recovers \eqref{eqn:constr.joint.opt}. 
Define the functions
\[g(\bpi, \lambda):=\lambda\gamma\Bigl(\bpi,\br/\lambda\Bigr), \quad f(\lambda):= \sup_{\bpi\in\Delta_n}g(\bpi,\lambda)+\eta\,\lambda\]
so that 
\[ \sup_{(\bpi,\bq)\in\Delta_n\times\Delta_n:\ H(\bq\;\Vert\;\bpi)\le \eta}\left\langle\bq-\bpi,\br\right\rangle=\inf_{\lambda> 0}f(\lambda).\]
Observe that since $\lambda\mapsto \lambda\gamma\Bigl(\bpi,\br/\lambda\Bigr)$ is convex for each $\bpi$, $g(\bpi,\cdot)$ is convex and (as a maximum of convex functions) so is $f(\lambda)$. Moreover, since $\br$ has distinct entries we are assured that $\bpi(\lambda)$ is the unique optimizer. Since $\Delta_n$ is compact, by Danskin's Theorem (cf.~\cite[Proposition A.3.2]{bertsekas2009convex})
\[f'(\lambda) = \partial_\lambda g(\bpi(\lambda),\lambda)+\eta.\]
Repeating the argument in Lemma \ref{lem:minimizer.perspective.duality} we see that
\[\partial_\lambda g(\bpi(\lambda),\lambda)=-H\divg{\bq(\lambda)}{\bpi(\lambda)}.\]
Thus, to minimize $f(\lambda)$ we search for a solution $\lambda^\star$ of $f'(\lambda^\star)=0$, which is equivalently given by the solution to
\[H\divg{\bq(\lambda^\star)}{\bpi(\lambda^\star)}=\eta.\]
Moreover, since $f(\lambda)$ is convex we have $f''(\lambda)\geq0$ and we conclude,
\[0\geq-f''(\lambda)=\partial_\lambda H\divg{\bq(\lambda)}{\bpi(\lambda)}.\]
That is, $\lambda \mapsto H\divg{\bq(\lambda)}{\bpi(\lambda)}$ is continuous and decreasing. As $\lambda\uparrow\infty$ we see that $\bpi(\lambda)\to \frac{1}{2}(\be_{i^\star}+\be_{j^\star})$, while as $\lambda\downarrow 0$, $\bpi(\lambda)\to \be_{j^\star}$. In the limit $\lambda\uparrow \infty$ we find that similarly $\bq(\lambda)\to\frac{1}{2}(\be_{i^\star}+\be_{j^\star})$ while as $\lambda\downarrow 0$ we have $\bq(\lambda)\to \be_{i^\star}$. Hence
\[\lim_{\lambda\downarrow0} H\divg{\bq(\lambda)}{\bpi(\lambda)}=\infty, \qquad \lim_{\lambda\uparrow\infty} H\divg{\bq(\lambda)}{\bpi(\lambda)}=0. \]
We conclude that there must be a solution for any $0\leq \eta<\infty$ which completes the proof.
\end{proof}

\color{black}
\subsection{Superdifferential set for the expected excess growth rate}

\begin{definition}\label{def:superdifferential.J}
For $\bpi\in\Delta_n$, the superdifferential set of $J$ at $\bpi$ relative to $\Delta_n$ is
\[
\partial_{\Delta_n}^+ J(\bpi)
:=\bigl\{\bq\in\R^n:\ J(\bpi')-J(\bpi)\le \langle \bq,\bpi'-\bpi\rangle\ \ \forall\,\bpi'\in\Delta_n\bigr\}.
\]
\end{definition}

\begin{lemma}\label{lem:superdifferential.set} Under Assumption \ref{ass:int} 
    $\bg\in \partial_{\Delta_n}^{+} J(\bpi)$ if and only if there exists $\lambda\in \mathbb{R}$ and a $\bmu\in\mathbb{R}^n_+$ with $\mu_i=0$ on $\supp(\bpi)$ such that\footnote{Note that Assumption \ref{ass:int} is not sufficient to guarantee that $\bg$ has finite coordinates. However, the expectation is always non-negative and therefore well defined.}
\[\bg=\mathbb{E}\left[\frac{\bR}{\langle \bpi, \bR\rangle }\right]-\bm-\lambda \mathbf{1}+\bmu\in \mathbb{R}^n.\]
In particular, $\partial_{\Delta_n}^+ J(\bpi)\not=\emptyset$ if and only if $\mathbb{E}\left[\frac{R_i}{\langle \bpi, \bR\rangle }\right]<\infty$ for all $i\in\{1,\dots,n\}$.
\end{lemma}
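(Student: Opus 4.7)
The plan is to handle the two implications separately, writing $\bh := \mathbb{E}[\bR/\langle\bpi,\bR\rangle] - \bm$ coordinatewise in $(-\infty,+\infty]^n$. Note that $h_i$ is automatically finite for $i\in\supp(\bpi)$, since $R_i/\langle\bpi,\bR\rangle \le 1/\pi_i$; the only obstruction to $\bh\in\R^n$ comes from indices outside the support.

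For the sufficiency direction, I would start from the pointwise concavity inequality $\log\langle\bpi',\bR\rangle - \log\langle\bpi,\bR\rangle \le \langle\bR,\bpi'-\bpi\rangle/\langle\bpi,\bR\rangle$, take expectations (justified by Assumption~\ref{ass:int} and the assumed finiteness of $\bh$), and obtain $J(\bpi')-J(\bpi)\le \langle\bh,\bpi'-\bpi\rangle$ for all $\bpi'\in\Delta_n$. Then, because $\langle\mathbf{1},\bpi'-\bpi\rangle = 0$ and, using $\mu_i\pi_i = 0$ for every $i$, $\langle\bmu,\bpi'-\bpi\rangle = \sum_i \mu_i\pi_i' \ge 0$, I conclude $J(\bpi')-J(\bpi) \le \langle\bg,\bpi'-\bpi\rangle$, so $\bg\in\partial_{\Delta_n}^+ J(\bpi)$.

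For the necessity direction, I first establish integrability at vertices outside $\supp(\bpi)$. Testing the superdifferential inequality against $\bpi_t' := (1-t)\bpi + t\be_i$ for $t\in(0,1]$ and dividing by $t$ yields
\[
\mathbb{E}\!\left[\frac{\log\bigl(\langle\bpi,\bR\rangle + t(R_i - \langle\bpi,\bR\rangle)\bigr) - \log\langle\bpi,\bR\rangle}{t}\right] - \langle\be_i-\bpi,\bm\rangle \;\le\; \langle\bg,\be_i-\bpi\rangle.
\]
By concavity of $\log$ the integrand is monotone non-decreasing as $t\downarrow 0$ and converges a.s.\ to $R_i/\langle\bpi,\bR\rangle - 1$; its value at $t=1$ is integrable because $\mathbb{E}[\log\langle\bpi,\bR\rangle]$ is finite (using $\log\pi_j + r_j \le \log\langle\bpi,\bR\rangle \le \max_\ell r_\ell$ for any $j\in\supp(\bpi)$ together with Assumption~\ref{ass:int}). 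Monotone convergence then forces $\mathbb{E}[R_i/\langle\bpi,\bR\rangle]<\infty$ for every $i$, placing $\bh$ in $\R^n$. With this in hand, dominated convergence delivers the directional-derivative formula $J'(\bpi;\bv) = \langle\bh,\bv\rangle$ along every admissible $\bv$. To extract the decomposition I argue in two regimes: for $i,j\in\supp(\bpi)$, both $\pm(\be_i-\be_j)$ keep $\bpi$ in $\Delta_n$ locally, so applying the superdifferential inequality in both orientations forces $g_i-g_j = h_i-h_j$; hence $h_i - g_i$ is a common constant $\lambda$ on $\supp(\bpi)$ (and, averaging against $\bpi$, $\lambda = \langle\bh-\bg,\bpi\rangle$). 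For $i\notin\supp(\bpi)$, only the one-sided perturbation toward $\be_i$ is admissible; the resulting inequality $\langle\bh-\bg,\be_i-\bpi\rangle\le 0$, after subtracting $\langle\bh-\bg,\bpi\rangle = \lambda$, reads $\mu_i := g_i - h_i + \lambda \ge 0$, producing $\bg = \bh - \lambda\mathbf{1} + \bmu$.

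The hard part will be the careful justification of exchanging limit and expectation in the directional-derivative step when $i\notin\supp(\bpi)$, where a priori $\mathbb{E}[R_i/\langle\bpi,\bR\rangle]$ might be $+\infty$. The monotone convergence argument applied to difference quotients of $\log$ is what simultaneously secures this integrability and yields the concluding equivalence in the lemma, so that $\partial_{\Delta_n}^+ J(\bpi)$ is nonempty precisely when every $R_i/\langle\bpi,\bR\rangle$ is integrable.
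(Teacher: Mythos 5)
Your proposal is correct, and its overall skeleton matches the paper's: sufficiency via the pointwise inequality $\log y-\log x\le (y-x)/x$ plus the normal-cone decomposition $\bv=\lambda\mathbf 1-\bmu$, and necessity by pinning down $\bg$ through directional derivatives (two-sided along the face containing $\supp(\bpi)$, one-sided toward vertices off the support). The one place where you genuinely diverge is the delicate step for coordinates $i\notin\supp(\bpi)$, where $\mathbb{E}[R_i/\langle\bpi,\bR\rangle]$ is not known a priori to be finite. The paper handles this with a Fatou liminf bound along directions $\be_k-\be_j$ (using $\log(1+x)\ge x/(1+x)$ and a uniform lower bound $-2/\pi_j$), paired with the limsup bound from the superdifferential inequality, and only extracts finiteness of the off-support expectations at the very end. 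You instead exploit monotonicity of the concave difference quotients $t\mapsto \bigl(\log\langle\bpi+t(\be_i-\bpi),\bR\rangle-\log\langle\bpi,\bR\rangle\bigr)/t$ together with integrability of the $t=1$ value (which needs $\mathbb{E}[\log\langle\bpi,\bR\rangle]$ finite, correctly justified from Assumption \ref{ass:int}), so that monotone convergence delivers the exact directional derivative and the integrability $\mathbb{E}[R_i/\langle\bpi,\bR\rangle]<\infty$ simultaneously from the finiteness of the superdifferential bound. This is a clean and arguably more direct route to the ``in particular'' equivalence; the paper's Fatou argument buys the same conclusion but requires assembling the finiteness of $g_k^\star$ from the two one-sided estimates afterwards. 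Both are valid; just make sure, when you invoke dominated convergence for the on-support directional derivatives, to name the dominating envelope (the monotone family is sandwiched between the integrable $t=1$ quotient and the integrable limit, or alternatively use the paper's bound $R_i/\langle\bpi,\bR\rangle\le 1/\pi_i$).
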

\begin{proof}
Consider the normal cone to the simplex at $\bpi\in\Delta_n$
\begin{align*}
N_{\Delta_n}(\bpi)&:=\{\bv\in\mathbb{R}^n:\ \langle \bv,\bpi'-\bpi\rangle\le 0\ \ \forall\,\bpi'\in \Delta_n\}.
\end{align*}
Any $\bv\in N_{\Delta_n}(\bpi)$ admits a representation $\bv=\lambda \mathbf{1} -\bmu$ where $\lambda\in\mathbb{R}$, $\mu_i=0$ if $i\in\supp(\bpi)$ and $\mu_i\geq0$ otherwise. We observe that if $\bg\in \partial_{\Delta_n}^+ J(\bpi)$ and $\bv\in  N_{\Delta_n}(\bpi)$ then $(\bg-\bv)\in \partial_{\Delta_n}^+ J(\bpi)$.

We now search for a particular element of the supergradient set. We begin with the assumption that $\mathbb{E}\left[\frac{R_i}{\langle \bpi, \bR\rangle }\right]<\infty$ for all $i$ otherwise the claimed form of $\bg$ cannot be a member of $\partial_{\Delta_n}^+ J(\bpi)$. For $x,y>0$, $\log y-\log x\le \frac{y-x}{x}$. With
$x=\langle\bpi,\bR\rangle$ and $y=\langle\bpi',\bR\rangle$ this yields
\[
\log\langle\bpi',\bR\rangle-\log\langle\bpi,\bR\rangle
\le 
\frac{\langle\bpi'-\bpi,\bR\rangle}{\langle\bpi,\bR\rangle}=\left\langle\bpi'-\bpi,\frac{\bR}{\langle\bpi,\bR\rangle}\right\rangle.
\]
Taking expectations and adding the remaining linear term we conclude
\begin{equation}\label{eqn:J.supergrad}J(\bpi')-J(\bpi) \leq \left\langle\bpi'-\bpi,\mathbb{E}\left[\frac{\bR}{\langle \bpi, \bR\rangle }\right]-\bm\right\rangle. 
\end{equation}
Define 
\[\bg^{\star}(\bpi) = \mathbb{E}\left[\frac{\bR}{\langle \bpi, \bR\rangle }\right]-\bm.\]
By \eqref{eqn:J.supergrad} $\bg^\star(\bpi)\in \partial_{\Delta_n}^+ J(\bpi)$ (and also $(\bg^{\star}-\bv)\in \partial_{\Delta_n}^+J(\bpi)$ for $\bv\in N_{\Delta_n}(\bpi)$).

Next, we argue that on the relative interior of any face of $\Delta_n$ where $\pi_i>0$ the $i$th coordinate of $\bg^\star(\bpi)$ defines the partial derivative. Here and in what follows we make regular use of the inequality
\[0\leq \frac{R_i}{\langle\bpi,\bR \rangle}\leq \frac{1}{\pi_i}\]
when $\pi_i>0$. Combining this with the inequality $|\log(1+x)|\leq |x|/(1-|x|)$ for $x\in(-1,1)$ we have for all $h\in\mathbb{R}$ with $0<|h|\leq \pi_i/2$ (since $R_i>0$ and $\langle\bpi,\bR\rangle>0$)
\begin{align*}
    &\left| \frac{\log\langle \bpi + h \be_i, \bR\rangle-\log\langle \bpi, \bR\rangle}{h}\right|
    =\left|\frac{1}{h}\log\left(1+h\frac{R_i}{\langle \bpi,\bR\rangle}\right)\right|\\
    &\leq \frac{\frac{R_i}{\langle \bpi,\bR\rangle}}{1-|h|\frac{R_i}{\langle \bpi,\bR\rangle}}
    \leq \frac{\frac{1}{\pi_i}}{1-|h|\frac{1}{\pi_i}}\leq \frac{2}{\pi_i}<\infty.
\end{align*}
The second inequality follows from the monotonicity of $x\mapsto x/(1-x)$ on $(-\infty,1)$. So, by the dominated convergence theorem,
\[\partial_{\pi_i}J(\bpi) = \mathbb{E}\left[\frac{R_i}{\langle \bpi, \bR\rangle }\right]-m_i.\]

We claim that if $\bg\in \partial_{\Delta_n}^+ J(\bpi)$ then $\bg = \bg^\star(\bpi)-\bv$ for some $\bv\in N_{\Delta_n}(\bpi)$. Since we necessarily have that the coordinates of $\bg$ are finite, if this were true the expectations in $\mathbf{g}^\star(\bpi)$ would also have to be finite. For a set of ``active'' indices $S\subset\{1,\dots,n\}$ we define the face
\[\Delta_{S} :=\{\bpi'\in \Delta_n: \pi_i'=0 \ \forall i\not\in S\}\]
and the relative interior of the face,
\[\mathrm{ri}(\Delta_{S}) :=\{\bpi'\in \Delta_n: \pi_i'=0 \ \forall i\not\in S \ \text{and} \  \pi_i'>0 \ \forall i\in S\}.\]
For fixed $\bpi$, choose $S=\supp(\bpi)$ so $\bpi\in \mathrm{ri}(\Delta_{S})$. We define the tangent space to $\Delta_{S}$ (embedded in $\mathbb{R}^n$) at this $\bpi$ as 
\[T_{S}(\bpi):=\left\{\bt \in \mathbb{R}^n: t_i=0 \ \forall i\not\in S, \ \sum_{i=1}^nt_i=0\right\}.\]
Let $\bt\in T_{S}(\bpi)$ and $\bg \in \partial_{\Delta_n}^+ J(\bpi)$. Then, for sufficiently small $\epsilon>0$, $\bpi+ \epsilon \bt \in \Delta_S$ and so,
\[J(\bpi + \epsilon\bt) - J(\bpi) \leq \langle \bg , \epsilon\bt \rangle.\]
Dividing by $\epsilon$ and sending $\epsilon\downarrow 0$, we have (by using the differentiability of $J(\bpi)$ on the relative interior),
\[\langle \bg^{\star}(\bpi),\bt\rangle \leq \langle \bg , \bt \rangle.\]
Repeating the argument for $-\bt\in T_{S}(\bpi)$ we have
\[-\langle \bg^{\star}(\bpi),\bt\rangle \leq -\langle \bg , \bt \rangle.\]
Taking together $\langle \bg^{\star}(\bpi)-\bg,\bt\rangle = 0$. But this implies that $\bg-\bg^{\star}(\bpi)$ is orthogonal to every $\bt\in T_S(\bpi)$. In particular, for the coordinates $i\in S$ we must have $g_i= g^{\star}_i(\bpi) - \lambda$ for some $\lambda\in\mathbb{R}$. 

With this characterization of the coordinates in $S$, consider the perturbation $\bt=\be_k-\be_j$ for $k\not\in S$ and $j\in S$. Once more, for sufficiently small $\epsilon>0$ we have that $\bpi+ \epsilon \bt \in \Delta_n$. It follows that
\[J(\bpi + \epsilon\bt) - J(\bpi) \leq \langle \bg , \epsilon\bt \rangle.\]
Dividing by $\epsilon$ and sending $\epsilon\downarrow 0$ we have that
\begin{equation}\label{eqn:lim.sup.est}
\limsup_{\epsilon\downarrow 0}\frac{J(\bpi + \epsilon\bt) - J(\bpi)}{\epsilon} \leq  g_k - g_j = g_k - g_j^{\star}(\bpi) + \lambda.
\end{equation}
At the same time, we may apply the inequality $\log(1+x)\geq x/(1+x)$ for $x\in (-1,\infty)$ to conclude that for sufficiently small $\epsilon>0$
\begin{align*}
\log\langle\bpi+\epsilon\bt,\bR\rangle-\log\langle\bpi,\bR\rangle &= \log\left(\langle\bpi,\bR\rangle+\epsilon (R_k-R_j)\right)-\log\langle\bpi,\bR\rangle\\
&= \log\left(1+\frac{\epsilon (R_k-R_j)}{\langle\bpi,\bR\rangle}\right)\\
&\geq \frac{\frac{\epsilon (R_k-R_j)}{\langle\bpi,\bR\rangle}}{1+\frac{\epsilon (R_k-R_j)}{\langle\bpi,\bR\rangle}}.
\end{align*}
That is,
\[\liminf_{\epsilon\downarrow0}\frac{\log\langle\bpi+\epsilon\bt,\bR\rangle-\log\langle\bpi,\bR\rangle}{\epsilon}\geq \frac{(R_k-R_j)}{\langle\bpi,\bR\rangle}.\]
Moreover since $R_k\geq0$, and $j\in \supp(\bpi)$ we have that
\[\frac{\epsilon (R_k-R_j)}{\langle\bpi,\bR\rangle}\geq -\frac{\epsilon R_j}{\langle\bpi,\bR\rangle}\geq -\frac{\epsilon}{\pi_j}.\]
Specifically, as $x\mapsto x/(1+x)$ is increasing on $(-1,\infty)$ the following inequalities hold for all $\epsilon\leq \pi_j/2$:
\[\frac{1}{\epsilon}\left(\log\langle\bpi+\epsilon\bt,\bR\rangle-\log\langle\bpi,\bR\rangle \right)\geq \frac{\frac{ (R_k-R_j)}{\langle\bpi,\bR\rangle}}{1+\frac{\epsilon (R_k-R_j)}{\langle\bpi,\bR\rangle}}\geq \frac{-\frac{1}{\pi_j}}{1-\frac{\epsilon}{\pi_j}} \geq -\frac{2}{\pi_j}.\]
This supplies a uniform lower bound, so by taking expectations and applying Fatou's lemma we have
\begin{equation}\label{eqn:lim.inf.est}
\liminf_{\epsilon\downarrow 0}\frac{J(\bpi + \epsilon\bt) - J(\bpi)}{\epsilon} \geq \mathbb{E}\left[\frac{(R_k-R_j)}{\langle\bpi,\bR\rangle}\right]-(m_k-m_j)=g^\star_k(\bpi)-g^{\star}_j(\bpi).
\end{equation}
Combining our estimates \eqref{eqn:lim.sup.est} and \eqref{eqn:lim.inf.est} we have\footnote{It is clear here that $\bg^\star(\bpi)$ must be finite since $-m_j\leq g_j^\star(\bpi)\leq 1/\pi_j - m_j$ for all $j\in S$, and $-m_k\leq g_k^\star(\bpi)\leq g_k +\lambda$ for $k\not\in S$.}
\[g^\star_k(\bpi)-g^{\star}_j(\bpi)\leq g_k - g_j^{\star}(\bpi) + \lambda.\]
Equivalently, $g^\star_k(\bpi)-\lambda\leq g_k$.
Letting $\mu_k:=g_k-(g^\star_k(\bpi)-\lambda)\geq0$ for all $k\not\in S$ and $\mu_j=0$ for $j\in S$ recovers the claimed representation for any $\bg\in \partial_{\Delta_n}^+ J(\bpi)$.
\end{proof}

\section*{Acknowledgment}
S.~Campbell acknowledges support from an NSERC Postdoctoral Fellowship (PDF‑599675-2025) and a CDFT Research Grant. T.-K.~L.~Wong acknowledges support from the NSERC Discovery Grant RGPIN-2025-06021. The authors thank Martin Larsson, Johannes Ruf and Ruodu Wang for helpful comments. T.-K.~L.~Wong would also like to thank Soumik Pal with whom many important ideas in this paper, including the first chain rule of the excess growth rate (Proposition \ref{prop:chain.rule.1}), the logarithmic divergence \eqref{eqn:L.divergence} and large deviations of the Dirichlet perturbation (see Remark \ref{rmk:equal.weights}), were first developed.

\bibliographystyle{abbrv}
\bibliography{references}

\end{document}